\newtheorem{thm}{Theorem}
\newtheorem{dfn}{Definition}
\newtheorem{lem}{Lemma}
\newtheorem{cor}{Corollary}
\newtheorem{problem}{Problem}
\begin{document}

\title{On the quantum computational complexity of classical linear dynamics with geometrically local interactions: Dequantization and universality}

\author{Kazuki Sakamoto}
\email{kazuki.sakamoto.osaka@gmail.com}
\affiliation{%
Graduate School of Engineering Science, The University of Osaka\\
1-3 Machikaneyama, Toyonaka, Osaka 560-8531, Japan.
}

\author{Keisuke Fujii}
\affiliation{%
Graduate School of Engineering Science, The University of Osaka\\
1-3 Machikaneyama, Toyonaka, Osaka 560-8531, Japan.
}
\affiliation{%
Center for Quantum Information and Quantum Biology, The University of Osaka 560-0043, Japan.
}
\affiliation{
Center for Quantum Computing, RIKEN, Hirosawa 2-1, Wako Saitama 351-0198, Japan.
}

\begin{abstract}
{
The simulation of large-scale classical systems in exponentially small space on quantum computers has gained attention.
The prior work demonstrated that a quantum algorithm offers an exponential speedup over any classical algorithm in simulating classical dynamics with long-range interactions.
However, many real-world classical systems, such as those arising from partial differential equations, exhibit only local interactions. 
The question remains whether quantum algorithms can still provide exponential speedup under this condition.
In this work, we thoroughly characterize the computational complexity of simulating such geometrically local systems on quantum computers.
First, we dequantize the quantum algorithm for simulating short-time (polynomial-time) dynamics of such systems. 
This implies that the problem of simulating this dynamics does not yield any exponential quantum advantage. 
Second, we show that simulating short-time dynamics is at least as hard as polynomial-time and linear-space probabilistic classical computation.
Third, we show that the computational complexity of simulating long-time (exponential-time) dynamics is captured by exponential-time and polynomial-space quantum computation.
This suggests a super-polynomial time advantage when restricting the computation to polynomial-space, or an exponential space advantage otherwise.
This work offers new insights into the complexity of classical dynamics governed by partial differential equations, providing a pathway for achieving quantum advantage in practical problems.
}
\end{abstract}

\maketitle


\section{Introduction}\label{sec:introduction}
The simulation of quantum systems is one of the key problems expected to demonstrate exponential quantum advantage. 
This advantage is natural because quantum computers inherently utilize the principles of quantum mechanics as Feynman pointed out~\cite{feynman1982simulating}. 
Moreover, such speedups achieved in quantum simulation is guaranteed by rigorous analysis with computational complexity theory.
That is, quantum simulation problem is $\mathrm{BQP}$-complete~\cite{feynman1986quantum, lloyd1996universal, kitaev2002classical}, which implies that 
``any'' classical algorithm cannot simulate those quantum systems efficiently under the widely-believed complexity-theoretic assumption $\mathrm{BPP}\neq \mathrm{BQP}$.

As recent works investigated, quantum computers have potential applications to not only quantum systems but also classical systems.
Many classical systems can be described by first-order linear ordinary differential equations.
Even when higher-order derivatives are involved, these equations can be converted to first-order equations by introducing auxiliary variables~\cite{costa2019quantum, babbush2023exponential}.
Therefore, quantum algorithms for solving first-order linear ordinary differential equations have been actively studied.
One major approach converts the differential equation into a linear system of equations by discretizing the time derivative and dilating the Hilbert space~\cite{berry2014high, berry2017quantum, krovi2023improved, berry2024quantum, low2024quantum} and uses quantum linear system solvers~\cite{harrow2009quantum, costa2022optimal, dalzell2024shortcut, low2024quantumlinear, morales2024quantum}.
Another approach converts the differential equation into a Schr\"{o}dinger equation~\cite{costa2019quantum, jin2024quantum, jin2023quantum} and uses Hamiltonian simulation algorithm~\cite{lloyd1996universal, low2017optimal, low2019hamiltonian, gilyen2019quantum}.
While many other algorithms use their unique approaches respectively~\cite{childs2020quantum, fang2023time, an2023linear, an2023quantum}, in any case, we can interpret most of the algorithms as a polynomial transformation of some matrices using quantum eigenvalue transformation (QEVT)~\cite{low2024quantum}.
In the case of linear partial differential equations, quantum computers can also solve the problems since discretizing a partial differential equation leads to an ordinary differential equation.
Then quantum algorithms and their circuit implementations for specific partial differential equations appeared in physics are proposed~\cite{costa2019quantum, suau2021practical, jin2024quantum, jin2023quantum, jin2024quantum2, hu2024quantum, ma2024schr, sato2024hamiltonian, schade2024quantum, sato2025quantum, bosch2024quantum}.
It should be noted that such quantum algorithms assume an efficient initial state preparation and provide efficient access to limited information extracted from the final quantum state, such as expectation values or samples, rather than unrestricted access to the full classical solution.
In this setting, quantum computers are expected to solve certain tasks associated with classical systems with exponentially smaller resources of time and space than classical computers~\cite{costa2019quantum, sato2024hamiltonian, schade2024quantum, villanyi2025exponential, li2025exponential}.
However, at present, such exponential quantum advantages are not guaranteed from complexity theoretic perspectives in most cases.

Recent work by Babbush et al. partially solved this problem~\cite{babbush2023exponential}. 
They proposed a quantum algorithm that can simulate the classical dynamics of exponentially many coupled oscillators, e.g., masses coupled by springs, in polynomial time. 
They introduced the reduction from Newton's equation to Schr\"{o}dinger equation and then showed that the Schr\"{o}dinger equation can be solved efficiently using Hamiltonian simulation algorithm under appropriate conditions such as the sparsity of interactions.
This algorithm also relies on efficient initial state preparation and provides access only to quantities that can be efficiently extracted from the final quantum state.
They also showed that such classical dynamics can simulate universal quantum computation, that is, a certain problem obtained from that dynamics is BQP-complete. This means that the dynamics cannot be simulated efficiently on classical computers under the assumption of $\mathrm{BQP}\neq \mathrm{BPP}$.
The essential part for showing BQP-completeness is that they allow long-range interactions between masses.
However, many classical systems which appear in the real world have only geometrically local interactions.
For example, the equations obtained by discretizing partial differential equations of fluid dynamics~\cite{bharadwaj2020quantum} or plasma dynamics~\cite{dodin2021applications}, e.g., 
the wave equation~\cite{costa2019quantum}, can be regarded as classical systems with spatially local interactions.
Thus it is important to analyze the computational complexity of simulating the dynamics of such geometrically local classical systems on quantum computers.

In this work, we provide a thorough characterization of the computational complexity of simulating classical linear dynamics with geometrically local interactions on quantum computers.
Specifically, we consider classical systems of size $N=2^n$ governed by first-order linear differential equations that can be solved by $O(n)$-qubit quantum algorithms.
Note that we assume that the initial state can be efficiently sampled classically. 
This assumption ensures that quantum advantages arise solely from the time evolution, not from the description of the initial state.
We then consider the problem of either estimating a quantity given by an inner product, or sampling from the time-evolved state.
These problems are natural setups for quantum computers and corresponding quantum algorithms have an exponential space advantage over the conventional classical algorithms.
In addition, we assume that the quantum algorithm runs in $\mathrm{poly}(t,n)$-time for an evolution time $t$, which restricts our focus to classical systems that can be efficiently simulated on quantum computers.
This condition naturally holds when the original equation can be mapped to the Schr\"{o}dinger equation~\cite{babbush2023exponential}.
In contrast, it does not generally hold for arbitrary linear dynamics, thereby excluding equations that cannot be solved efficiently on quantum computers.
Throughout this work, we refer to the regime $t=\mathrm{polylog}(N)=\mathrm{poly}(n)$ as short-time evolution, and $t=\mathrm{poly}(N)=\exp{(n)}$ as long-time evolution.
We distinguish these two regimes for both computational and physical reasons. 
The short-time regime corresponds to the parameter region where quantum algorithms run efficiently, that is, in $\mathrm{poly}(n)$-time.
In contrast, the long-time regime is necessary for information to propagate across geometrically local systems and reach the boundaries, thereby enabling us to observe global and physically meaningful phenomena.
We have three main results (see Fig.~\ref{fig:summary-of-result}):
\begin{enumerate}
    \item We dequantize the quantum algorithm for simulating the classical dynamics in the short-time regime $t=\mathrm{polylog}(N)$ (see Theorem~\ref{thm:dequantize-qevt} and Theorem~\ref{thm:classical-simulation-sampling-1D-GLI}).
    That is, classical computers can solve this simulation problem with the same complexity as quantum ones up to polynomial overhead when the evolution time is small. 
    \label{result1}
    \item We show that simulating short-time ($t=\mathrm{polylog}(N)$) dynamics is at least as hard as $\mathrm{polylog}(N)$-time and $O(n)$-space probabilistic classical computation (see Sec.~\ref{subsec:1DGLI-simulate-QC-BPgate}).
    \label{result2}
    \item We show that the computational complexity of simulating long-time ($t=\mathrm{poly}(N)$) dynamics is equivalent to that of $\mathrm{poly}(N)$-time and $O(n)$-space quantum computation (see Sec.~\ref{sec:complexity-long-time-GLI}).
    This indicates that we obtain a super-polynomial time advantage in simulating long-time dynamics when restricting the computation to $\mathrm{poly}(n)$-space, and an exponential space advantage otherwise.
    A more detailed discussion is given in Sec.~\ref{sec:complexity-long-time-GLI}.
    \label{result3}
\end{enumerate}

\begin{figure}
\centerline{
\includegraphics[width=120mm, page=1]{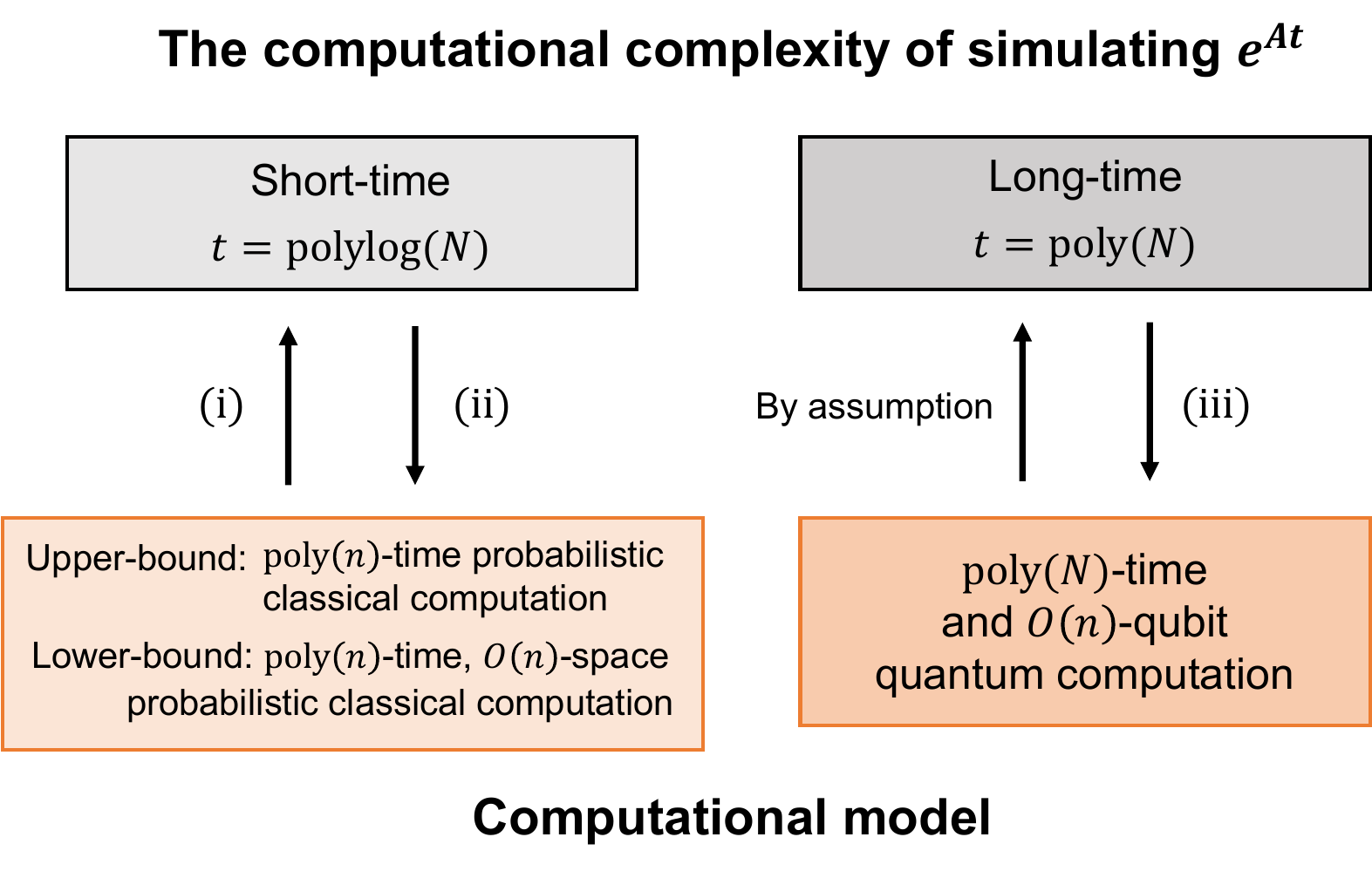}
}
\caption{
The overview of this paper.
$N=2^n$ is the system size.
Here we consider the classical linear dynamics with geometrically local interactions that can be efficiently simulated by $O(n)$-qubit quantum algorithm, i.e., the runtime of the quantum algorithm is at most $\mathrm{poly}(t,n)$.
Result (\ref{result1}) (Theorem~\ref{thm:dequantize-qevt} and Theorem~\ref{thm:classical-simulation-sampling-1D-GLI}) shows that the short-time dynamics can be simulated by $\mathrm{polylog}(N)$-time probabilistic classical computation.
In result (\ref{result2}) (Sec.~\ref{subsec:1DGLI-simulate-QC-BPgate}), we establish the opposite direction that the short-time dynamics can simulate $\mathrm{polylog}(N)$-time and $O(n)$-space probabilistic classical computation.
In result (\ref{result3}) (Sec.~\ref{sec:complexity-long-time-GLI}), we indicate that the long-time dynamics can simulate $\mathrm{poly}(N)$-time and $O(n)$-qubit quantum computation.
The opposite direction is by the assumption that the quantum algorithm runs in $\mathrm{poly}(t,n)$-time.
}
\label{fig:summary-of-result}
\end{figure}

As the first result, we provide the dequantized algorithm for simulating this classical system (Theorem~\ref{thm:dequantize-qevt}).
Dequantization is the effort to demonstrate that a quantum algorithm for some problems in the specific parameter region can be efficiently simulated on a classical computer~\cite{tang2019quantum, tang2022dequantizing}.
In the previous work, quantum singular value transformation (QSVT) of low-rank matrices~\cite{jethwani2019quantum, chia2022sampling, gall2023robust, bakshi2024improved} and general sparse matrices~\cite{gharibian2022dequantizing} are dequantized, providing new insights into quantum advantages in machine learning and quantum chemistry.
On the other hand, quantum algorithms for solving the differential equation are based on QEVT, which we consider.
To dequantize the QEVT for simulating the general classical linear dynamics with geometrically local interactions, we consider a class of geometrically local matrices.
By leveraging the locality of these matrices, we demonstrate that polynomial transformations of these matrices with polynomials of degree $d$ can be simulated classically using $\mathrm{poly}(d)$-time and $\mathrm{poly}(d)$-space.
Remarkably, these complexities are the same as those of a quantum algorithm up to polynomial overhead as long as $d=\mathrm{poly}(n)$.
This result implies that simulating the short-time ($t=\mathrm{polylog}(N)$) dynamics with geometrically local interactions does not yield any exponential quantum advantage in both time and space complexities.
Note that, if the initial states or the observables are restricted to local ones, then we can simulate the dynamics classically by considering the subsystem determined by
the ``light cone'' as discussed in Refs.~\cite{babbush2023exponential, bosch2024quantum}.
However, it was not known whether we could demonstrate the classical easiness if both the initial state and the observable are global.
We also provide an even stronger result.
Our proposed algorithm mentioned above estimates quantities which is given by an inner product.
However, estimating the inner product is not the only way to prove classical hardness, that is, sampling problems are also important.
For example, it is proven that efficient classical simulations of sampling from restricted quantum models~\cite{harrow2017quantum} are hard (implying the collapse of the polynomial hierarchy) while such models are thought to be unable to perform universal quantum computation.
Thus it is worth analyzing whether sampling from a short-time evolved state under a geometrically local matrix can be simulated classically or not.
As a result, we show that such a sampling problem can also be simulated efficiently on classical computers (Theorem~\ref{thm:classical-simulation-sampling-1D-GLI}).

The above results show the classical easiness of simulating classical linear dynamics with geometrically local interactions.
On the other hand, it is also essential to provide hardness results to capture the genuine complexity.
Our second result is that simulating the short-time ($t=\mathrm{polylog}(N)$) dynamics is at least as hard as $\mathrm{polylog}(N)$-time and $O(n)$-space probabilistic classical computation (Sec.~\ref{subsec:1DGLI-simulate-QC-BPgate}).
To show this, we embed the classical reversible circuit to some simple 1D geometrically local system of size $\approx N$, which implies that the short-time dynamics can simulate $\mathrm{polylog}(N)$-time and $O(n)$-space probabilistic classical computation.

We also analyze the long-time ($t=\mathrm{poly}(N)$) dynamics of such systems.
As the third result, we show that the computational complexity of simulating the long-time dynamics is equivalent to that of $\mathrm{poly}(N)$-time and $O(n)$-space quantum computation (Sec.~\ref{sec:complexity-long-time-GLI}).
We prove this by embedding the quantum circuit to some simple 2D geometrically local system, demonstrating that the long-time dynamics can simulate the quantum computation.
This result implies that simulating the long-time dynamics of such systems has a super-polynomial time advantage or exponential space advantage unless any $\exp(n)$-time and $\mathrm{poly}(n)$-space quantum circuit is simulated by classical computation with the same resources.
A more detailed discussion is given in Sec.~\ref{sec:complexity-long-time-GLI}.

These results indicate that a certain computational model captures the complexity of simulating the classical dynamics with geometrically local interactions and reveal how much advantage quantum computers offer for this simulation problem.
In particular, our results imply that no exponential quantum advantage exists for short-time evolution.
This is a nontrivial finding that rules out the exponential quantum speedup or space advantage in the form discussed in many prior works~\cite{costa2019quantum, sato2024hamiltonian, schade2024quantum}.
More specifically, the matrices governing the classical systems considered in these works are geometrically local matrices. 
Therefore, for short-time dynamics, our dequantized algorithm can perform the same tasks as the corresponding quantum algorithms with only polynomial overhead. 
On the other hand, for long-time evolution, we have proven an exponential space advantage under a plausible assumption.
The sharp difference between these two regimes comes from the fact that short-time geometrically local dynamics restricts information propagation to a region of radius $\mathrm{polylog}(N)$, whereas long-time evolution can propagate information to sites separated by a distance of $\mathrm{poly}(N)$ and effectively induce long-range interactions. 
We explain this point in more detail in Sec.~\ref{sec:complexity-long-time-GLI}.
To the best of our knowledge, this result provides the first complexity-theoretic evidence for an exponential space advantage~\footnote{As a related work, quantum memory advantages for stochastic-process generation have been studied in, e.g., Refs.~\cite{gu2012quantum,aghamohammadi2017extreme,korzekwa2021quantum,elliott2018superior}. These works concern stochastic-process simulation, including processes associated with classical spin chains, rather than finite-precision simulation of geometrically local classical linear dynamics considered here.} in simulating realistic classical systems with geometrically local interactions, such as systems arising from discretized partial differential equations, or a super-polynomial time advantage when the space complexity is restricted to $\mathrm{poly}(n)$.
This work provides a new insight into the complexity of classical dynamics determined by partial differential equations and deepens our understanding of the necessary elements to obtain quantum advantages.

This paper is organized as follows. 
In Sec.~\ref{sec:preliminaries}, we briefly review the definition and quantum algorithmic primitives.
Sec.~\ref{sec:dequantizing-QEVT} provides the formal definition of geometrically local matrices and dequantizes the quantum algorithm for geometrically local matrices.
Sec.~\ref{sec:applications} is devoted to the direct application of our dequantized algorithm for simulating classical dynamics of coupled harmonic oscillators.
In Sec.~\ref{sec:complexity-short-time-GLI} and Sec.~\ref{sec:complexity-long-time-GLI}, we analyze the computational complexity of simulating short-time and long-time dynamics under a geometrically local matrix.
Then we discuss how much quantum advantage is obtained from the simulation problems.

\section{Preliminaries}\label{sec:preliminaries}
Throughout this paper, we use the notation of $N=2^n, n\in\mathbb{N}$, where $N$ denotes the system size and $n$ denotes the number of qubits to represent the system on quantum computers.
Given two probability distributions $p,q: \{1,\dots,N\}\rightarrow[0,1]$, the total variation distance between them is defined as
\begin{equation}
    |p-q|_{\mathrm{tv}} = \frac{1}{2} \sum_{i=1}^{N}|p_i - q_i|.
\end{equation}
For any non-zero vector $u\in \mathbb{C}^N$, we define the probability distribution $p^u: \{1,\dots,N\}\rightarrow[0,1]$ as
\begin{equation}
    p^u_i = \frac{|u_i|^2}{\|u\|^2},
\end{equation}
for each $i\in \{1,\dots,N\}$.

\subsection{Query-access and sampling-access to vectors and matrices}\label{subsec:query}
We introduce two ways of accessing vectors and matrices mainly based on Refs.~\cite{gharibian2022dequantizing, gall2023robust}.
First, we define query-access to a vector.
\begin{dfn}[Query-access to a vector]\label{dfn:query-vec}
We say that we have query-access to a vector $u\in\mathbb{C}^{N}$, which we write $\mathcal{Q}(u)$, if on input $i\in\{1,\dots,N\}$ we can query the entry $u_i$.
We denote the runtime cost of implementing one such query as $\bm{q}(u)$.
\end{dfn}
We define query-access to sparse square matrices which allows us to efficiently compute the non-zero entries and their positions.
\begin{dfn}[Query-access to a sparse square matrix]\label{dfn:query-matrix}
We say that we have query-access to an $s$-sparse square matrix $A\in\mathbb{C}^{N\times N}$, which we write $\mathcal{Q}(A)$, if we have queries $\mathcal{Q}^{\mathrm{row}}(A)$ and $\mathcal{Q}^{\mathrm{col}}(A)$ such that:
\begin{itemize}
    \item on input $(i,l)\in \{1,\dots,N\}\times\{1,\dots,s\}$, the query $\mathcal{Q}^{\mathrm{row}}(A)$ outputs the $l$-th non-zero entry of the $i$-th row of A and its column index if this row has at least $l$ non-zero entries, and outputs an error message otherwise;
    \item on input $(j,l)\in \{1,\dots,N\}\times\{1,\dots,s\}$, the query $\mathcal{Q}^{\mathrm{col}}(A)$ outputs the $l$-th non-zero entry of the $j$-th column of A and its row index if this column has at least $l$ non-zero entries, and outputs an error message otherwise.
\end{itemize}
We denote the runtime cost of implementing one such query as $\bm{q}(A)$.
\end{dfn}
Typically, the cost of implementing one query is polylogarithmic-time with respect to the dimension of vectors and matrices when each entry is stored in a (classical) random-access-memory.

Next, we define sampling-access to a vector.
\begin{dfn}[Sampling-access to a vector]\label{dfn:sampling-vec}
We say that we have $\varepsilon$-sampling-access to a vector $u\in\mathbb{C}^{N}$, which we write $\mathcal{S}_{\varepsilon}(u)$, if we can sample from a distribution $\tilde{p}:\{1,\dots,N\}\rightarrow[0,1]$ such that $|p^u-\tilde{p}|_{\mathrm{tv}} \leq \varepsilon$.
We denote the runtime cost of generating one sample as $\bm{s}(u)$.
\end{dfn}
Combined with query-access, we define sampling-and-query-access, which we mainly use.
\begin{dfn}[Sampling-and-query-access to a vector]\label{dfn:sampling-query-vec}
We say that we have $\varepsilon$-sampling-and-query-access to a vector $u\in\mathbb{C}^{N}$, which we write $\mathcal{SQ}_{\varepsilon}(u)$, if we have $\mathcal{Q}(u)$ and $\mathcal{S}_{\varepsilon}(u)$ and additionally can get the value $\|u\|$.
We denote the maximum among $\bm{q}(u)$, $\bm{s}(u)$ and the cost of obtaining $\|u\|$ as $\bm{sq}(u)$.
When $\varepsilon = 0$, we simply write $\mathcal{SQ}(u)$
\end{dfn}

\subsection{Eigenvalue transformation and singular value transformation}\label{subsec:function}
The eigenvalue transformation (EVT) of a square matrix $A$ with a polynomial $P\in\mathbb{C}[x]$ is defined as a polynomial transformation of $A$.
That is, given a square matrix $A\in\mathbb{C}^{N\times N}$ (possibly non-normal) and a polynomial $P\in\mathbb{C}[x]$ of degree $d$, if we write 
\begin{equation}
P(x) = a_0 + a_1 x + a_2 x^2 + \dots + a_d x^d,
\end{equation}
then EVT of $A$ is given by
\begin{equation}
P(A) = SP(J)S^{-1} = a_0 I + a_1 A + a_2 A^2 + \dots + a_d A^d,
\end{equation}
where $A=SJ S^{-1}$ is the Jordan form decomposition, $S$ is an invertible matrix, and $J$ is the so-called Jordan normal form that is block diagonal (see Ref.~\cite{low2024quantum} for more details about the EVT from the aspect of Jordan form decomposition).
Quantum eigenvalue transformation (QEVT) is a quantum algorithm for implementing $P(A)$ on quantum computers~\cite{low2024quantum, takahira2021quantum, takahira2021quantumalgorithms, an2024laplace}, which has various applications, including linear differential equations~\cite{berry2014high, berry2017quantum, fang2023time, an2023linear, an2023quantum, morales2024quantum} and non-Hermitian physics~\cite{ashida2020non, bender2007faster}.
In this work, we dequantize the QEVT for geometrically local matrices, whose definition is given in Sec.~\ref{subsec:GLI}, since we mainly consider the application to solving linear differential equations.

On the other hand, the singular value transformation (SVT) transforms the singular values of a given matrix rather than its eigenvalues.
Due to some technical issues, SVT is defined in different forms for the cases of odd polynomials and even polynomials.
For any matrix $A\in\mathbb{C}^{M\times N}$ (possibly non-square), singular value decomposition (SVD) is defined as 
\begin{equation}
A=\sum_{i=1}^{\min{(M,N)}} \sigma_i u_i v_i^\dag,
\end{equation}
where $\sigma_i$'s are non-negative real numbers (so-called singular values), $\{u_i\}_i$ are orthonormal vectors in $\mathbb{C}^M$ and $\{v_i\}_i$ are orthonormal vectors in $\mathbb{C}^N$. 
Then SVT of $A$ with an even polynomial $P_\mathrm{even}$ of degree $2d$, which we write
\begin{equation}
P_\mathrm{even}(x) = a_0 + a_2 x^2 + a_4 x^4 + \dots + a_{2d} x^{2d},
\end{equation}
is defined as
\begin{equation}
P_\mathrm{even}(\sqrt{A^\dag A}) = \sum_{i=1}^{\min{(M,N)}} P_\mathrm{even}(\sigma_i) v_i v_i^\dag = a_0 I + a_2 A^\dag A + a_4 (A^\dag A)^2 + \dots + a_{2d} (A^\dag A)^d.
\end{equation}
Similarly, SVT of $A$ with an odd polynomial $P_\mathrm{odd}$ of degree $2d+1$, which we write
\begin{equation}
P_\mathrm{odd}(x) = a_1 x + a_3 x^3 + a_5 x^5 + \dots + a_{2d+1} x^{2d+1} \eqqcolon x P_\mathrm{odd}'(x),
\end{equation}
is defined as
\begin{equation}
A P_\mathrm{odd}'(\sqrt{A^\dag A}) = \sum_{i=1}^{\min{(M,N)}} P_\mathrm{odd}(\sigma_i) u_i v_i^\dag = a_1 A + a_3 A A^\dag A + a_5 A (A^\dag A)^2 + \dots + a_{2d+1} A (A^\dag A)^d.
\end{equation}
Quantum singular value transformation (QSVT) allows efficient polynomial transformations of the singular values of a block-encoded matrix on quantum computers~\cite{gilyen2019quantum, chakraborty2018power}. 
QSVT is significantly more efficient than QEVT and provides a unified framework for describing many existing quantum algorithms~\cite{martyn2021grand}.
However, when solving differential equations, we often want to transform the eigenvalues of a given matrix rather than its singular values, which is a case where QSVT is not applicable. 
In this sense, QEVT has an affinity with applications to differential equation solvers, which makes us focus on QEVT in this work. 
Nonetheless, note that the dequantization techniques we have developed can be naturally extended to QSVT as well.

\section{The dequantized algorithm for the classical linear systems with geometrically local interactions}\label{sec:dequantizing-QEVT}
In this work, we consider classical linear systems determined by first-order linear differential equations of the form:
\begin{equation}
\frac{\mathrm{d}}{\mathrm{d}t} x(t) = A x(t),
\end{equation}
where $x(t)$ is an $N$-element vector, $A$ is an $N\times N$ matrix and $N=2^n$.
If we consider the linear differential equation with higher-order derivatives, we can convert it into a first-order linear differential equation of higher dimensions by introducing auxiliary variables.
We focus on quantum algorithms that simulate such systems using $O(n)$ qubits. 
The goal of quantum algorithms is to prepare a quantum state proportional to the solution $x(t) = e^{At} x(0)$ for a given evolution time $t$.
For instance, quantum computers approximate the time evolution operator by polynomial transformation $P(A)\approx e^{At}$ of the given matrix $A$ using QEVT algorithm~\cite{low2024quantum}.
Thus we consider a general problem of simulating a polynomial transformation of a given matrix $A$.
When $A$ is anti-Hermitian and $\|A\|=\mathrm{poly}(n)$, we can simulate the dynamics efficiently since the degree of polynomial approximation of $e^{At}$ is given by $d=O(\|A\|t)$~\cite{gilyen2019quantum}.
This includes the case of the system of classical coupled oscillators~\cite{babbush2023exponential} and the wave equation~\cite{costa2019quantum},
where the classical systems are mapped into Schr\"{o}dinger equation.
In the case of a general matrix $A$, quantum algorithms do not necessarily simulate the dynamics efficiently since the worst-case runtime is $\Omega(e^t)$~\cite{an2022theory}.
However, under the conditions that $A$ has eigenvalues with non-positive real parts, $\|A\|=\mathrm{poly}(n)$, and $\|e^{At}x(0)\|\geq 1/\mathrm{poly}(n)$, the quantum algorithm works efficiently~\cite{krovi2023improved, berry2024quantum, an2023quantum, low2024quantumlinear}.
Throughout this work, we impose these assumptions on $A$, which restricts our focus to classical systems that can be efficiently simulated by $O(n)$-qubit quantum algorithms whose runtime is at most $\mathrm{poly}(t,n)$.
We also assume that we have a sampling-and-query-access to the initial state $x(0)$.
This assumption ensures that the description of the initial state does not contribute to a quantum advantage.
We then consider two computational problems which are natural for quantum computation: estimating a quantity given by an inner product (such as $\braket{c|x(t)}$) in Sec.~\ref{sec:dequantizing-QEVT} and Sec.~\ref{sec:applications}, and sampling from the quantum state corresponding to $x(t)$ in Sec.~\ref{sec:complexity-short-time-GLI} and Sec.~\ref{sec:complexity-long-time-GLI}.

In this section, we show that, if $A$ has a certain feature of locality, we can simulate the polynomial transformation of $A$ classically using $\mathrm{poly}(n,d)$ resources, where $d$ is the degree of the polynomial.
Here, it is important to emphasize that the locality of $A$ is defined not for qubits, but for indices in the computational basis, which arises naturally when considering differential equations for classical systems.
The complexity of $\mathrm{poly}(n,d)$ is the same as that of quantum computers up to polynomial overhead as long as $d=\mathrm{poly}(n)$.
This leads to our first result that there is no exponential quantum advantage in simulating the short-time ($t=\mathrm{polylog}(N) = \mathrm{poly}(n)$) classical linear dynamics with geometrically local interactions.

\subsection{Geometrically local matrices}\label{subsec:GLI}
Here we define the class of matrices, so-called geometrically local matrix, which is the central concept of our work.
The matrices in this class have a certain kind of locality, which leads to the essential feature of this matrix corresponding to the light cone as shown in Sec.~\ref{subsec:dequantizing-QEVT}.
\begin{dfn}[Geometrically local matrix]\label{dfn:GLI}
Consider the finite-volume lattice with sites labeled as $i=\{1,\dots,N\}$, where the structure of the lattice is determined by some set of bonds. 
Let $d(i,j)$ be the distance between the sites $i$ and $j$, and
$\mathcal{N}(\cdot)\in \mathbb{R}_{+} \rightarrow \mathbb{N}$ be a locality function defined as:
\begin{equation}
\mathcal{N}(r) \coloneqq \max_{i=\{1,\dots,N\}} \left| \left\{ j | d(i,j) \leq r \right\} \right|,
\end{equation}
where $\left| \left\{ j | d(i,j) \leq r \right\} \right|$ is the number of elements contained in the set $\left\{ j | d(i,j) \leq r \right\}$.
Then we say that $A\in\mathbb{C}^{N\times N}$ is an $(r_0,\mathcal{N}(r_0))$-geometrically local matrix if there exists a value $r_0\in\mathbb{R}_{+}$ such that
\begin{equation}
A_{ij} = 0 \quad \text{for} \quad d(i,j) > r_0,
\end{equation}
and $\mathcal{N}(r) = \mathrm{poly}(r)$ for any $r\in \mathbb{R}_{+}$.
When $A$ is Hermitian, we particularly say that it is an $(r_0,\mathcal{N}(r_0))$-geometrically local Hamiltonian.
\end{dfn}
We note that the lattice and its distance function used in the definition of a geometrically local matrix are assumed to be given a priori.
This lattice structure does not necessarily have to represent some physical system since it is introduced only to define distances between sites.
Moreover, the constant $r_0$ does not have to represent the real diameter of locality, just counts the number of sites to which information propagates from a given site when the geometrically local matrix is applied once.
Thus $\mathcal{N}(r_0)$ directly corresponds to the sparsity of the geometrically local matrix.
For example, if we consider $D$-dimensional system, $\mathcal{N}(r_0)=r_0^D$.

Here we provide some specific examples of gemetrically local matrices.
Clearly, the identity matrix $I$ is a $(0,\mathcal{N}(0)=1)$-geometrically local matrix.
One non-trivial example is a tridiagonal matrix.
We say a matrix $A$ as a tridiagonal matrix when $A$ has non-zero elements only on the diagonal, lower diagonal and upper diagonal.
We can consider the one-dimensional system behind this matrix as in Fig.~\ref{fig:tridiagonal_system}, and then we conclude that a tridiagonal matrix is a $(1,\mathcal{N}(1)=3)$-geometrically local matrix.
Similarly, for a general band matrix with bandwidth $b$ such that $A_{i,j}=0$ for $|i-j|>b$, we can consider the one-dimensional system where the site $i$ is connected to at most $2b+1$ sites $i-b, \dots, i-1,i,i+1,\dots,i+b$.
Then such matrix is a $(b,\mathcal{N}(b)=2b+1)$-geometrically local matrix.
In addition, Laplacians for wave equations after the discretization and other matrices corresponding to partial differential equations can be viewed as a special case of geometrically local matrices.
See Sec.~\ref{sec:applications} and Appendix~\ref{appsec:other-applications} for more details.

\begin{figure}
\centerline{
\includegraphics[width=120mm, page=1]{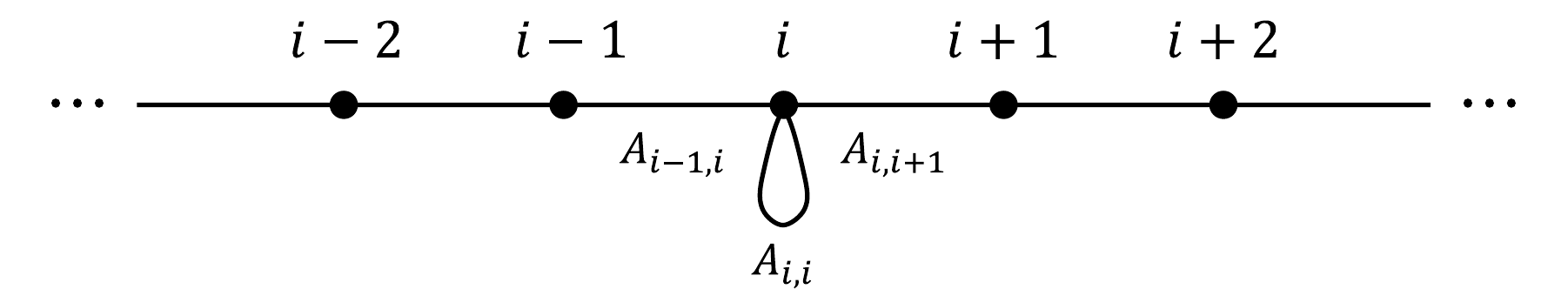}
}
\caption{The one-dimensional system behind a tridiagonal matrix $A$. Each vertex corresponds to some index of $A$ and each edge represents the matrix element $A_{i,j}$. Clearly, we can consider non-symmetric matrices by introducing one-dimensional directed graph.}
\label{fig:tridiagonal_system}
\end{figure}

It is also important to provide an example of \textit{non}-geometrically local matrices.
Consider a balanced binary tree of depth $n$.
In this case, we take $d(i,j)$ to be the distance on the tree, i.e., the number of edges in the shortest path between the vertices $i$ and $j$.
Then the locality function of this graph scales as $\mathcal{N}(r)=\Omega(2^r)$.
Thus the adjacency matrix $A\in\mathbb{R}^{(2^n-1) \times (2^n-1)}$ of a balanced binary tree of depth $n$ is a sparse matrix but is not geometrically local matrix.
This implies that a sparse matrix is not necessarily geometrically local while a geometrically local matrix is always sparse.

\subsection{Dequantizing the quantum eigenvalue transformation for the geometrically local matrix}\label{subsec:dequantizing-QEVT}
Now we provide the dequantized algorithm of the QEVT for geometrically local matrices.
The following lemma is essential for this purpose, which restricts the space that needs to be considered.
\begin{lem}\label{lem:GLI-power}
Suppose that $A\in\mathbb{C}^{N\times N}$ is an $(r_0,\mathcal{N}(r_0))$-geometrically local matrix for some value $r_0\in\mathbb{R}_{+}$.
Then, for any $k\in \mathbb{N}$, the following holds:
\begin{equation}
(A^k)_{ij} = 0 \quad \text{for} \quad d(i,j) > k r_0.
\end{equation}
That is, $A^k$ is a $(k r_0,\mathcal{N}(k r_0))$-geometrically local matrix.
\end{lem}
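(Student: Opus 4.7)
The plan is to prove the lemma by induction on $k$, using the triangle inequality on the lattice distance $d(\cdot,\cdot)$ together with the sparsity pattern inherited from Definition~\ref{dfn:GLI}. The base case $k=1$ is immediate from the hypothesis that $A$ is $(r_0,\mathcal{N}(r_0))$-geometrically local. The inductive step is where the geometric content enters, but it is essentially a one-line computation once the matrix product is expanded.

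For the inductive step, suppose the statement holds for some $k \geq 1$, so that $(A^k)_{lj} = 0$ whenever $d(l,j) > k r_0$. I would then write
\begin{equation}
(A^{k+1})_{ij} = \sum_{l=1}^{N} A_{il}\, (A^k)_{lj},
\end{equation}
and fix a pair $(i,j)$ with $d(i,j) > (k+1) r_0$. For any index $l$ in the sum, the triangle inequality $d(i,j) \leq d(i,l) + d(l,j)$ forces $d(i,l) > r_0$ or $d(l,j) > k r_0$. In the first case $A_{il} = 0$ by the base hypothesis, and in the second case $(A^k)_{lj} = 0$ by the induction hypothesis. Either way, every term of the sum vanishes, so $(A^{k+1})_{ij} = 0$, completing the induction.

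The main subtlety, rather than an obstacle, is to make sure that the distance $d$ on the lattice introduced in Definition~\ref{dfn:GLI} satisfies the triangle inequality. This is implicit in the paper's description of the lattice structure (distances on a graph with a given bond set are metrics when taken as shortest-path lengths), but I would state it explicitly as a standing assumption before invoking it. Once that is noted, the only remaining observation is that the conclusion $(A^k)_{ij} = 0$ for $d(i,j) > k r_0$ is exactly the condition that $A^k$ is $(k r_0, \mathcal{N}(k r_0))$-geometrically local, with $\mathcal{N}(k r_0) = \mathrm{poly}(k r_0)$ still polynomial since $\mathcal{N}$ was polynomial to begin with. No separate argument is needed for the sparsity bound because it follows from the definition of $\mathcal{N}$ applied at radius $k r_0$.
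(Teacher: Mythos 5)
Your proof is correct and follows essentially the same route as the paper: an induction on $k$ whose inductive step expands the matrix product and applies the triangle inequality for the lattice metric (the paper phrases the step as the contrapositive, showing $(A^k)_{il}\neq 0$ implies $d(i,l)\leq kr_0$, which is logically identical to your direct vanishing argument). Your explicit remark that $d$ must satisfy the triangle inequality is a reasonable clarification of a point the paper invokes as ``the metric axioms'' without further comment.
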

\begin{proof}
For the $k=1$ case, $A$ is an $(r_0,\mathcal{N}(r_0))$-geometrically local matrix by definition.
Now we prove the contrapositive, i.e.,
\begin{equation}
d(i,j) \leq k r_0 \quad \text{if} \quad (A^{k})_{ij} \neq 0,
\end{equation}
by induction.
Suppose that we have proved for $k-1$ that 
\begin{equation}
d(i,j) \leq (k-1)r_0 \quad \text{if} \quad (A^{k-1})_{ij} \neq 0.
\end{equation}
Assuming that $i$ and $l$ are indices such that $(A^k)_{il}\neq 0$, 
there exists an index $j$ such that $(A^{k-1})_{ij}\neq 0$ and $A_{jl}\neq 0$ since we have $(A^k)_{il} = \sum_j (A^{k-1})_{ij} A_{jl}$.
By the induction hypothesis, we get $d(i,j)\leq (k-1)r_0$ and $d(j,l)\leq r_0$. 
From the metric axioms, we obtain
\begin{equation}
d(i,l) \leq d(i,j) + d(j,l) \leq k r_0.
\end{equation}
This implies that $(A^k)_{ij} = 0$ if $d(i,j)>k r_0$.
\end{proof}

Using Lemma~\ref{lem:GLI-power}, we can construct a query-access to a vector generated by acting a polynomial transformation of a geometrically local matrix on a given vector.
\begin{lem}\label{lem:query-GLI-poly}
Let $P\in \mathbb{C}[x]$ be a polynomial of degree $d$.
Then there exists a $O\left(d^2 \mathcal{N}(d r_0)\mathcal{N}(r_0) \bm{q}(A) + d \mathcal{N}(d r_0) \bm{q}(u) \right)$-time classical algorithm that given
\begin{itemize}
    \item query-access to an $(r_0,\mathcal{N}(r_0))$-geometrically local matrix $A\in\mathbb{C}^{N\times N}$,
    \item query-access to a vector $u\in \mathbb{C}^{N}$,
    \item an index $i\in \{1,\dots,N\}$,
\end{itemize}
outputs the $i$-th entry of $P(A)u$.
\end{lem}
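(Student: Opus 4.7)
The plan is to use linearity to reduce the claim to evaluating $(A^k u)_i$ for each $k \in \{0,1,\ldots,d\}$ separately, and for each such $k$ to perform a bounded-support sweep along the non-zero structure of $A$ whose size is controlled by Lemma \ref{lem:GLI-power}. Writing $P(x)=\sum_{k=0}^d a_k x^k$, one has $(P(A)u)_i=\sum_{k=0}^d a_k (A^k u)_i$, and by Lemma \ref{lem:GLI-power}, $(A^k)_{ij}=0$ whenever $d(i,j)>k r_0$, so $(A^k u)_i$ depends only on entries of $u$ inside a ball of at most $\mathcal{N}(k r_0)$ sites around $i$.

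Fix $k$. First I would carry out a forward breadth-first exploration of the non-zero structure of $A$: set $U_0=\{i\}$, and for $m=0,1,\ldots,k-1$ let $U_{m+1}=\bigcup_{j\in U_m}\{l:A_{jl}\ne 0\}$, enumerating the non-zero entries of each new row via at most $\mathcal{N}(r_0)$ calls to $\mathcal{Q}^{\mathrm{row}}(A)$ at cost $\mathcal{N}(r_0)\,\bm{q}(A)$ per row. Since each $A$-hop shifts the distance by at most $r_0$, Lemma \ref{lem:GLI-power} gives $U_m\subseteq\{j:d(i,j)\le m r_0\}$, hence $|U_m|\le \mathcal{N}(m r_0)\le\mathcal{N}(k r_0)$. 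I would cache the row supports along the way, and at the end query $u_l$ for each $l\in U_k$ via $\mathcal{Q}(u)$, setting $v^{(0)}_l:=u_l$.

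Next I would run the backward recurrence: for $m=1,\ldots,k$, for each $j\in U_{k-m}$, compute
\begin{equation}
v^{(m)}_j=\sum_{l:A_{jl}\ne 0} A_{jl}\, v^{(m-1)}_l,
\end{equation}
using the already-cached row support of $A$ at $j$ and the already-stored values $v^{(m-1)}_l$. Every summand's column index $l$ lies in $U_{k-m+1}$ by construction, so $v^{(m-1)}_l$ is indeed available; after $k$ rounds, $v^{(k)}_i=(A^k u)_i$. The final output is assembled as $\sum_{k=0}^d a_k v^{(k)}_i$. A single power $k$ then costs $O(k\,\mathcal{N}(k r_0)\,\mathcal{N}(r_0)\,\bm{q}(A))$ for all row queries and arithmetic plus $O(\mathcal{N}(k r_0)\,\bm{q}(u))$ for the $u$-queries; summing over $k=0,\ldots,d$ and using $\sum_{k=1}^d k=O(d^2)$ yields the stated bound.

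The main subtlety — not really a hard obstacle once Lemma \ref{lem:GLI-power} is in hand — is to verify that the iteration closes up: every summand appearing in the backward recurrence must correspond to a matrix entry already exposed by the forward sweep and to a value already stored. Both properties follow from the definition of $U_m$ together with the triangle inequality, so the algorithm never needs to consult the lattice distance $d(\cdot,\cdot)$ directly; only the matrix query oracle $\mathcal{Q}(A)$ and the cardinality bound $|U_m|\le\mathcal{N}(m r_0)$ enter the analysis.
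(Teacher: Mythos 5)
Your proposal is correct and is essentially the paper's argument: both rely on Lemma~\ref{lem:GLI-power} to bound the support reachable from $i$ after $k$ hops by $\mathcal{N}(kr_0)$, expand that support one application of $A$ at a time via $\mathcal{Q}^{\mathrm{row}}(A)$, query $u$ only on the final ball, and sum the per-power costs to get the stated bound. The only (cosmetic) difference is the direction of the contraction — the paper iteratively builds the row vector $e_i^\dagger A^k$ and then takes its inner product with $u$, whereas you first identify the supports $U_m$ by a forward sweep and then apply $A$ backward to the restriction of $u$ — which yields identical complexity.
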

\begin{proof}
Supposing that the polynomial $P$ can be written as
\begin{equation}
P(x) = a_0 + a_1 x + a_2 x^2 + \dots + a_k x^d,
\end{equation}
we have
\begin{equation}
P(A)u = a_0 u + a_1 A u + a_2 A^2 u + \dots + a_k A^d u.
\end{equation}
Now we consider how to compute the $i$-th entry of each term $e_i^\dag A^k u$ for $k\in\{1,\dots,d\}$, where $e_i$ is a unit vector with a 1 in the $i$-th position and zeros elsewhere.
First, we query all entries of the $i$-th row of $A$ and get at most $\mathcal{N}(r_0)$ non-zero entries and their positions using the query $\mathcal{Q}^{\mathrm{row}}(A)$ from Definition~\ref{dfn:query-matrix}. 
Then we can compute all non-zero entries of $e_i^\dag A$ at the cost of $\bm{q}(A)\cdot \mathcal{N}(r_0)$-time.
Note that the number of non-zero entries of $e_i^\dag A$ is also at most $\mathcal{N}(r_0)$.
Next, we assume that we have already computed all non-zero entries of $e_i^\dag A^{k-1}$.
From Lemma~\ref{lem:GLI-power}, the number of non-zero entries of $e_i^\dag A^{k-1}$ is at most $\mathcal{N}((k-1)r_0)$.
Here let $j_1,\dots , j_l$ denote the positions of non-zero entries of $e_i^\dag A^{k-1}$ with $l\leq \mathcal{N}((k-1)r_0)$.
In order to compute all non-zero entries of $e_i^\dag A^k$, we only have to query all entries of the $j_1,\dots , j_l$-th rows of $A$ and get at most $\mathcal{N}(k r_0)$ non-zero entries and their positions.
This procedure can be accomplished at the cost of $\mathcal{N}((k-1) r_0)\cdot \mathcal{N}(r_0)\cdot\bm{q}(A)$-time.
Putting it altogether, the runtime for calculating all non-zero entries of $e_i^\dag A^r$ is upper-bounded by
\begin{align}
&\mathcal{N}(r_0) + \mathcal{N}(r_0)\cdot \mathcal{N}(r_0) \cdot \bm{q}(A) + \mathcal{N}(2 r_0)\cdot \mathcal{N}(r_0) \cdot \bm{q}(A) + \dots + \mathcal{N}((k-1) r_0)\cdot \mathcal{N}(r_0)\cdot\bm{q}(A) \\
=& \sum_{j=0}^{k-1}\mathcal{N}(j r_0)\cdot \mathcal{N}(r_0)\cdot\bm{q}(A).
\end{align}
Finally, we calculate $e_i^\dag A^k u$ by computing $\mathcal{N}(k r_0)$ elements of $u$ using algorithm $\mathcal{Q}(u)$ from Definition~\ref{dfn:query-vec} at the cost of $\mathcal{N}(k r_0)\cdot \bm{q}(u)$-time and taking the inner product at the cost of
$\mathcal{N} (k r_0)$-time.

The resulting runtime for calculating $e_i^\dag P(A) u$ is upper-bounded by
\begin{align}
&\bm{q}(u) + \sum_{k=1}^{d}\left( \sum_{j=0}^{k-1}\mathcal{N}(j r_0)\cdot \mathcal{N}(r_0)\cdot\bm{q}(A) + \mathcal{N}(k r_0)\cdot \bm{q}(u) + \mathcal{N} (k r_0) \right) \\
\leq & \bm{q}(u) + \sum_{k=1}^{d}\left( k\mathcal{N}(k r_0)\cdot \mathcal{N}(r_0)\cdot\bm{q}(A) + \mathcal{N}(k r_0)\cdot (\bm{q}(u)+1) \right) \\
\leq & \bm{q}(u) + d\left( d\mathcal{N}(d r_0)\cdot \mathcal{N}(r_0)\cdot \bm{q}(A) + \mathcal{N}(d r_0)\cdot (\bm{q}(u)+1) \right) \\
\leq & O\left(d^2 \mathcal{N}(d r_0)\mathcal{N}(r_0) \bm{q}(A) + d \mathcal{N}(d r_0) \bm{q}(u) \right).
\end{align}
\end{proof}

The central idea of dequantization for sparse matrices is that estimation of the inner product of two vectors $u$ and $v$ can be accomplished by sampling from $p^v$ combined with a query-access to $u$.
Roughly speaking, sampling an index $i\in\{i,\dots,N\}$ according to the probability distribution $p^v$ and computing $u_i\|v\|^2/v_i$ leads to the $v^\dag u$ as an expectation value.
Ref.~\cite{gall2023robust} provided the robust version of the inner product estimation as follows.
\begin{lem}[\cite{gall2023robust}]\label{lem:sample-query-est}
Suppose that we are given
\begin{itemize}
    \item query-access to a vector $w\in\mathbb{C}^N$ such that $\|w\|\leq 1$,
    \item $\zeta$-sampling-and-query-access to a vector $v\in\mathbb{C}^N$ such that $\|v\|\leq 1$.
\end{itemize}
Then we can estimate $\hat{z}\in\mathbb{C}$ such that
\begin{equation}
\left|\hat{z} - v^\dag w \right| \leq \varepsilon,
\end{equation}
classically with a probability at least $1-\delta$ in $O(\log{(1/\delta)}(\bm{q}(w)+\bm{sq}(v))/\varepsilon^2)$-time for $\zeta\leq \varepsilon/9$.
\end{lem}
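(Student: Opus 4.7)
The plan is classical importance sampling with respect to the norm-squared distribution $p^v$, followed by a median-of-means amplification. For each index $i$ with $v_i \ne 0$, define the estimator $Z_i := v_i^* w_i / p^v_i = w_i \|v\|^2 / v_i$; evaluating a single $Z_i$ requires one query to $w$, one query to $v$, and the value $\|v\|$, at total cost $O(\bm{q}(w) + \bm{sq}(v))$ per sample under the access models of Definitions~\ref{dfn:query-vec} and \ref{dfn:sampling-query-vec}.

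Under ideal sampling $i \sim p^v$ a direct calculation gives $E_{p^v}[Z] = v^\dagger w$ and
\begin{equation}
E_{p^v}[|Z|^2] = \sum_i p^v_i \cdot \frac{|v_i|^2 |w_i|^2}{(p^v_i)^2} = \|v\|^2\|w\|^2 \le 1.
\end{equation}
Hence by Chebyshev's inequality the sample mean of $M = \Theta(1/\varepsilon^2)$ i.i.d.\ copies of $Z$ lies within $\varepsilon$ of $v^\dagger w$ with constant probability. To boost to failure probability $\le \delta$, I would take the median (separately on real and imaginary parts) of $R = \Theta(\log(1/\delta))$ such independent block means; a standard Chernoff bound on the number of bad blocks then yields the claimed $O(\log(1/\delta)(\bm{q}(w)+\bm{sq}(v))/\varepsilon^2)$ runtime.

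The hard part will be lifting this ideal analysis to the $\zeta$-approximate sampling access. The samples actually come from some $\tilde p$ with $|\tilde p - p^v|_{\mathrm{tv}} \le \zeta$, so one must bound both the mean shift $E_{\tilde p}[Z] - E_{p^v}[Z]$ and the inflated second moment $E_{\tilde p}[|Z|^2]$. The naive estimate $2\zeta \|Z\|_\infty$ is useless because the importance weights $1/p^v_i$ make $Z$ unbounded. The correct route, which is precisely what Ref.~\cite{gall2023robust} establishes, is to control these perturbations directly in terms of $\zeta$ and the ideal $L^2$ bound $\|v\|\|w\| \le 1$, exploiting the structure of $Z$ rather than any sup-norm estimate. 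The hypothesis $\zeta \le \varepsilon/9$ is calibrated so that the extra bias and variance introduced by the sampling error each consume only a constant fraction of the $\varepsilon$ budget, after which the Chebyshev and median-of-means steps above close the argument.
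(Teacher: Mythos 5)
This lemma is not proved in the paper at all --- it is imported verbatim from Ref.~\cite{gall2023robust} --- so there is no internal proof to compare against, and your write-up is in fact more detailed than the paper's treatment. Your ideal-sampling core (the unbiased importance-sampling estimator $Z_i = w_i\|v\|^2/v_i$, the second-moment bound $\mathbb{E}_{p^v}[|Z|^2]=\|v\|^2\|w\|^2\le 1$, Chebyshev plus median-of-means, and the per-sample cost accounting) is correct and is exactly the standard argument underlying the cited result; the one step you leave open --- controlling the bias and variance under the $\zeta$-perturbed distribution $\tilde p$, where the sup-norm bound on $Z$ is useless --- is precisely the technical content of Ref.~\cite{gall2023robust} that the paper also relies on without reproving, so your deferral is consistent with how the statement is used here.
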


Here we formulate the problem of the eigenvalue transformation for the geometrically local matrix.
\begin{problem}[$\mathsf{EVT}\textsf{-}\mathsf{GL}(r_0,\mathcal{N}(r_0),\varepsilon,\zeta)$]
Suppose that we are given
\begin{itemize}
    \item query-access to a vector $u\in\mathbb{C}^N$ such that $\|u\|\leq 1$,
    \item query-access to an $(r_0,\mathcal{N}(r_0))$-geometrically local matrix $A\in \mathbb{C}^{N\times N}$ for some value $r_0\in\mathbb{R}_{+}$,
    \item $\zeta$-sampling-and-query-access to a vector $v\in\mathbb{C}^N$ such that $\|v\|\leq 1$.
\end{itemize}
Given a polynomial $P\in \mathbb{C}[x]$ of degree $d$ with $|P(x)|\leq 1$ for any $x\in [-\|A\|,\|A\|]$, the goal is to output an estimate $\hat{z}\in\mathbb{C}$ such that
\begin{equation}\label{eq:estimate-of-evt-GLI}
\left|\hat{z} - v^\dag P(A) u \right| \leq \varepsilon.
\end{equation}
\end{problem}

Here is the statement of the main result:
\begin{thm}\label{thm:dequantize-qevt}
For any $r_0\in\mathbb{R}_{+}$, any $\varepsilon \in (0,1]$ and a function $\mathcal{N}(\cdot)\in \mathbb{R}_{+} \rightarrow \mathbb{N}$ defined in Definition~\ref{dfn:GLI}, the problem $\mathsf{EVT}\textsf{-}\mathsf{GL}(r_0,\mathcal{N}(r_0),\varepsilon,\zeta)$ can be solved classically with a probability at least $1-\delta$ in $O(\log{(1/\delta)}(d^2 \mathcal{N}(d r_0)\mathcal{N}(r_0) \bm{q}(A) + d \mathcal{N}(d r_0) \bm{q}(u) +\bm{sq}(v))/\varepsilon^2)$-time for $\zeta\leq \varepsilon/9$.
\end{thm}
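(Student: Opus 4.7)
The proof will be a straightforward composition of the two preceding lemmas, with the role of ``unknown vector that we only need query access to'' played by $P(A)u$ (I take the vector in the first bullet of the problem to be denoted $u$, with $\|u\|\le 1$, since the statement of the goal refers to $u$ rather than $w$).

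First I would use Lemma~\ref{lem:query-GLI-poly} to build implicit query-access to the vector $P(A)u$. That lemma already tells us exactly what we need: from query-access to $A$ and to $u$, and an arbitrary index $i\in\{1,\dots,N\}$, we can compute $e_i^\dagger P(A) u$ in time
\begin{equation}
T_{\text{query}} \;=\; O\bigl(d^{2}\mathcal{N}(d r_0)\mathcal{N}(r_0)\bm{q}(A) \;+\; d\,\mathcal{N}(d r_0)\bm{q}(u)\bigr),
\end{equation}
so the derived oracle for $P(A)u$ has query cost $\bm{q}(P(A)u)=T_{\text{query}}$. No independence or randomness is needed here; it is a deterministic entrywise evaluation.

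Next I would invoke Lemma~\ref{lem:sample-query-est} with $w := P(A)u$ and the vector $v$ supplied by the problem. The hypotheses of that lemma demand $\|w\|,\|v\|\le 1$ and $\zeta$-sampling-and-query access to $v$ with $\zeta\le\varepsilon/9$; both are granted by the problem assumptions, using $|P(x)|\le 1$ on $[-\|A\|,\|A\|]$ together with $\|u\|\le 1$ to bound $\|P(A)u\|\le 1$. The lemma then produces $\hat z$ with $|\hat z - v^{\dagger}P(A)u|\le \varepsilon$ with success probability at least $1-\delta$ in time
\begin{equation}
O\!\left(\frac{\log(1/\delta)\bigl(\bm{q}(P(A)u) + \bm{sq}(v)\bigr)}{\varepsilon^{2}}\right),
\end{equation}
and substituting the value of $\bm{q}(P(A)u)=T_{\text{query}}$ gives exactly the advertised runtime in the theorem.

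The only genuinely nontrivial step is the boundedness $\|P(A)u\|\le 1$ needed to apply Lemma~\ref{lem:sample-query-est}. If $A$ is normal (in particular for the anti-Hermitian case corresponding to the Schr\"odinger reduction used in the applications), the spectral bound $|P(x)|\le 1$ on $[-\|A\|,\|A\|]$ immediately yields $\|P(A)\|\le 1$ and hence $\|P(A)u\|\le 1$. For general non-normal $A$ satisfying only the stability assumptions of Sec.~\ref{sec:dequantizing-QEVT}, one must appeal to those assumptions (non-positive real parts of eigenvalues together with the norm bound on $A$) to ensure $\|P(A)u\|\le 1$, possibly by rescaling $P$ by a benign constant absorbed into $\varepsilon$. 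Apart from this bookkeeping point, every cost term in the statement is the direct sum of the per-query cost produced by Lemma~\ref{lem:query-GLI-poly} and the $\bm{sq}(v)/\varepsilon^{2}$ cost of Lemma~\ref{lem:sample-query-est}, multiplied by the $\log(1/\delta)$ factor from standard median-of-means boosting already built into Lemma~\ref{lem:sample-query-est}.
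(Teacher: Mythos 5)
Your proposal is correct and follows exactly the paper's own proof: construct query-access to $P(A)u$ via Lemma~\ref{lem:query-GLI-poly} and then feed it, together with the sampling-and-query-access to $v$, into the inner-product estimator of Lemma~\ref{lem:sample-query-est}. Your additional remark about verifying $\|P(A)u\|\le 1$ for non-normal $A$ is a legitimate bookkeeping point that the paper itself leaves implicit, but it does not change the argument.
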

\begin{proof}
From Lemma~\ref{lem:query-GLI-poly}, we can construct a query-access to $P(A) u$ at a cost of 
\begin{equation}\label{eq:runtime-query-PAu}
\bm{q} (P(A) u) = O\left(d^2 \mathcal{N}(d r_0)\mathcal{N}(r_0) \bm{q}(A) + d \mathcal{N}(d r_0) \bm{q}(u) \right).
\end{equation}
Lemma~\ref{lem:sample-query-est} implies that, for $\zeta\leq \varepsilon/9$, we can calculate an estimate such that Eq.~\eqref{eq:estimate-of-evt-GLI} classically with a probability at least $1-\delta$ at a cost of 
\begin{equation}
O\left( \frac{\log{(1/\delta)}}{\varepsilon^2}\left(\bm{q}(P(A) u)+\bm{sq}(v)\right) \right).
\end{equation}
Combined with Eq.~\eqref{eq:runtime-query-PAu}, the final runtime of this procedure is
\begin{equation}
O\left( \frac{\log{(1/\delta)}}{\varepsilon^2}\left( d^2 \mathcal{N}(d r_0)\mathcal{N}(r_0) \bm{q}(A) + d \mathcal{N}(d r_0) \bm{q}(u) +\bm{sq}(v)\right) \right).
\end{equation}
\end{proof}

According to Theorem~\ref{thm:dequantize-qevt} and Definition~\ref{dfn:GLI}, our dequantized algorithm of QEVT has the complexity of $\mathrm{poly}(d)$ for the degree $d$ of the polynomial while the previous work~\cite{gharibian2022dequantizing} that focused on general sparse matrices has the complexity of $\mathrm{exp}(d)$.
The key point is that, for a geometrically local matrix, $A^d$ has a sparsity bounded by $\mathcal{N}(dr_0)=\mathrm{poly}(d)$, whereas, for a general $s$-sparse matrix, the sparsity of $A^d$ can grow as $s^d$ in the worst case.
This leads to a significant reduction of the cost of the classical algorithm and broadens the parameter region that is tractable for classical computers.
We note that, if $A^d$ is assumed to have polynomial sparsity in $d$, the dequantized algorithm for general sparse matrices~\cite{gharibian2022dequantizing} can also have polynomial complexity in $d$.
Thus our contribution is to identify geometrically local matrices as a natural class for which this property is guaranteed.
Also note that our result is different from dequantization results based on low-rank assumptions since geometrically local matrices are not necessarily low-rank.

\section{Applications of the dequantized algorithm for the geometrically local matrix}\label{sec:applications}
In this section, we provide the classical algorithms that simulate the quantum algorithms for solving the specific problems related to geometrically local matrices using the results in Sec.~\ref{sec:dequantizing-QEVT}.
Here we provide only the application for simulating a classical system of coupled harmonic oscillators while we can also apply our dequantized algorithm for simulating wave equations~\cite{costa2019quantum, schade2024quantum, bosch2024quantum}, advection equations~\cite{sato2024hamiltonian} and single particle Schr\"{o}dinger equations~\cite{wiesner1996simulations, zalka1998efficient, benenti2008quantum, childs2022quantum} which we postpone in Appendices~\ref{appsubsec:wave},~\ref{appsubsec:advection} and~\ref{appsubsec:schrodinger}.

\subsection{Application to the simulation of classical dynamics with geometrically local interaction}\label{subsec:classical}
As in Ref.~\cite{babbush2023exponential}, we consider a classical system of coupled harmonic oscillators with $N=2^n$ point mass $m_1,\dots,m_{N}$ which are coupled with each other by springs.
We define $\kappa_{ij}=\kappa_{ji}\geq 0$ as the spring constants which couple $i$-th and $j$-th oscillators and $\kappa_{ii}$ as the spring constant which connects $i$-th oscillator to a wall.
In the case of $D=1$-dimension, the displacements and velocities of the point masses are given by $x(t) = (x_1(t),\dots,x_{N}(t))^{T}$ and $\dot{x}(t) = (\dot{x}_1(t),\dots,\dot{x}_{N}(t))^{T}$ respectively.
We can also address the higher dimensional case similarly.
In two-dimensional case, for example, we just use $x_1(t)$ as the first coordinate of the first point mass and $x_2(t)$ as the second coordinate of the first point mass, and handle other point masses in the same manner.
After assigning the indices in this way, we redefine the dimension of vectors $x(t)$ and $\dot{x}(t)$ as $N$

In the harmonic approximation, the dynamics of the coupled oscillators is described by Newton's equation:
\begin{equation}\label{eq:eom}
m_i \ddot{x}_i(t) = \sum_{j\neq i} \kappa_{ij} (x_j(t) - x_i(t)) -\kappa_{ii} x_i(t),
\end{equation}
for all $i\in\{1,\dots,N\}$. 
Now our goal is to simulate the time evolution of $x(t)$ and $\dot{x}(t)$ given initial conditions $x(0)$ and $\dot{x}(0)$.

In order to solve this problem on quantum computers, the authors of Ref.~\cite{babbush2023exponential} transformed Eq.~\eqref{eq:eom} into Schr\"{o}dinger equation.
Writing Eq.~\eqref{eq:eom} in matrix form, we obtain $M\ddot{x}(t) = -F x(t)$, where $M$ is a $N\times N$ diagonal matrix with entries $m_i\geq 0$ and $F$ is a $N\times N$ matrix whose diagonal and off-diagonal entries are $f_{ii}=\sum_j \kappa_{ij}$ and $f_{ij}=-\kappa_{ij}$ respectively.
Note that $M\geq 0$ and $F\geq 0$ are real and positive semidefinite.
Changing the variables as $y(t) = \sqrt{M}x(t)$, we get $\ddot{y}(t) = -A y(t)$, where $A=\sqrt{M}^{-1} F \sqrt{M}^{-1} \geq 0$ is real and positive semidefinite.
We can transform this second-order differential equation into the first-order differential equation as:
\begin{equation}
\frac{\mathrm{d}}{\mathrm{d}t} \left( \dot{y}(t) + i \sqrt{A} y(t) \right) = i \sqrt{A} \left( \dot{y}(t) + i \sqrt{A} y(t) \right) .
\end{equation}
Then its solution is given by
\begin{equation}
\dot{y}(t) + i \sqrt{A} y(t) = e^{i \sqrt{A} t} \left( \dot{y}(0) + i \sqrt{A} y(0) \right) .
\end{equation}
However, we do not have direct access to $\sqrt{A}$ and simulating the time evolution under $\sqrt{A}$ requires additional costs.
To address this issue, in Ref.~\cite{babbush2023exponential} (see also Ref.~\cite{costa2019quantum}), the authors introduced the generalization of $\sqrt{A}$:
\begin{equation}\label{eq:dilated-hamiltonian-mass-spring}
H \coloneqq 
\begin{pmatrix}
   0 & B \\
   B^\dag & 0
\end{pmatrix},
\end{equation}
where $B$ is a $N\times N_\mathrm{col}$ matrix such that $BB^\dag =A$ for some integer $N_\mathrm{col}$.
Note that 
\begin{equation}\label{eq:Hsquare}
H^2 \coloneqq 
\begin{pmatrix}
   B B^\dag & 0 \\
   0 & B^\dag B
\end{pmatrix}
=
\begin{pmatrix}
   A & 0 \\
   0 & B^\dag B
\end{pmatrix},
\end{equation}
which allows us to use $H\in \mathbb{C}^{(N+N_\mathrm{col})\times (N+N_\mathrm{col})}$ as a square root of $A$.
Now it is easy to see that 
\begin{equation}
\ket{\psi(t)} \coloneqq
\frac{1}{\sqrt{2E}}
\begin{pmatrix}
   \dot{y}(t) \\
   i B^\dag y(t)
\end{pmatrix},
\end{equation}
where $\sqrt{2E}$ is a normalization factor, is a solution of the Schr\"{o}dinger equation
\begin{equation}
\frac{\mathrm{d}}{\mathrm{d}t} \ket{\psi(t)} = iH \ket{\psi(t)}.
\end{equation}
Hence, we obtain
\begin{equation}
\frac{1}{\sqrt{2E}}
\begin{pmatrix}
   \dot{y}(t) \\
   i B^\dag y(t)
\end{pmatrix}
 = e^{iHt} 
 \frac{1}{\sqrt{2E}}
 \begin{pmatrix}
   \dot{y}(0) \\
   i B^\dag y(0)
\end{pmatrix}
= \left( \cos{(Ht)} + i \sin{(Ht)} \right)
\frac{1}{\sqrt{2E}}
 \begin{pmatrix}
   \dot{y}(0) \\
   i B^\dag y(0)
\end{pmatrix}.
\end{equation}
In the above discussion, the matrix $B$ may be an arbitrary matrix such that $BB^\dag =A$.
In this work, we choose
\begin{equation}\label{eq:B-dag}
B^\dag \ket{i} = \sum_{j\geq i}\sqrt{\frac{\kappa_{ij}}{m_i}} \ket{i}\ket{j} - \sum_{j< i}\sqrt{\frac{\kappa_{ij}}{m_i}} \ket{j}\ket{i}
\end{equation}
as in Ref.~\cite{babbush2023exponential}, where $N_\mathrm{col}=N(N+1)/2$.
This choice allows us to obtain
\begin{equation}
B^\dag y(t) = B^\dag \sqrt{M} x(t) = \sum_{i}\sqrt{\kappa_{ii}} x_i(t) \ket{i}\ket{i} + \sum_{j> i}\sqrt{\kappa_{ij}} (x_i(t)-x_j(t)) \ket{i}\ket{j},
\end{equation}
which implies that
\begin{equation}
\frac{1}{2} (B^\dag y(t))^\dag B^\dag y(t) = \frac{1}{2} \sum_{i}\kappa_{ii} x_i(t)^2 + \frac{1}{2} \sum_{j> i}\kappa_{ij} (x_i(t)-x_j(t))^2 ,
\end{equation}
is the potential energy.

Now we are ready to dequantize the simulation of the classical system with geometrically local interactions.
In order to introduce the distance between each point mass, we set each point mass on some lattice (e.g., square lattice) with lattice spacing $1$ (: constant).
We have to note that this lattice is only used to define geometrically local interaction, and the distance between $i$-th and $j$-th sites does not imply the `real' distance between $i$-th and $j$-th point masses. 
In other words, this lattice is only used to count the number of masses that is connected to one given mass by a spring.
Letting the distance on lattice between $i$-th and $j$-th point masses be $d(i,j)$, we can define the classical system with geometrically local interactions as the system with springs such that
\begin{equation}\label{eq:locality}
\kappa_{ij} = \kappa_{ji} = 0 \quad \text{for} \quad d(i,j)>r_0,
\end{equation}
where $r_0 \leq \mathrm{poly}(n)$ denotes the spatial locality of the system.
From Definition~\ref{dfn:GLI}, it is easy to see that $F$ and $A$ are $(r_0,\mathcal{N}(r_0))$-geometrically local matrices, where $\mathcal{N}(r_0) \coloneqq \max_{i=\{1,\dots,N\}} \left| \left\{ j | d(i,j) \leq r_0 \right\} \right| \leq \mathrm{poly}(r_0)$.
Remember that $\mathcal{N}(r_0) = O(r_0^D)$ if we solve the $D$-dimensional system.
In Theorem~\ref{thm:classical-time}, we solve the problem related to time evolution of this classical system.

Here we have to note that the dilated Hamiltonian $H$ in Eq.~\eqref{eq:dilated-hamiltonian-mass-spring} is also a geometrically local Hamiltonian.
Considering the lattice as in the above discussion, we set the position $x_i$ and velocity $\dot{x}_i$ of $i$-th mass to the same site $i$.
Furthermore, we define new sites between all the pairs of sites of the lattice and interactions between them.
For example, for $i$-th and $j$-th sites of the lattice, we define the new site $(i,j)$ and define two interactions: one is between $i$ and $(i,j)$ and the other is between $(i,j)$ and $j$.
Then we place the spring $\kappa_{ij}$ on the site $(i,j)$ if $i$-th mass is connected to $j$-th mass.
Using the definition of $B^\dag$ in Eq.~\eqref{eq:B-dag}, the number of masses and springs that are connected to one given mass or spring is given by $O(r_0^{2D})$ for the locality $r_0$.
Thus $H$ is $(r_0,\mathcal{N}(r_0)=O(r_0^{2D}))$-geometrically local matrix.
This leads to the complexity of the dequantized algorithm of $O(t^{2+2D})$ while the complexity of the algorithm in Theorem~\ref{thm:classical-time} is $O(t^{2+D})$.
To reduce the complexity, we apply our dequantization technique to $A$ rather than $H$.

\begin{thm}\label{thm:classical-time}
Let $t\in \mathbb{R}$ be an evolution time, $r_0\geq 0$ be a spatial locality defined in~\eqref{eq:locality} and $\varepsilon\in(0,1]$ be a precision.
Suppose that we are given
\begin{itemize}
\item query-access to $\dot{x}(0)\in \mathbb{R}^N$,
\item query-access to $x(0)\in \mathbb{R}^N$,
\item query-access to $m\in \mathbb{R}^N$, where $m_i$ is the mass of $i$-th point mass for all $i\in\{1,\dots,N\}$,
\item query-access to $K\in \mathbb{R}^{N\times N}$, where $K_{ij}=\kappa_{ij}$,
\item total energy: $E= \frac{1}{2} \sum_{i} m_i \dot{x}_i(0)^2 + \frac{1}{2} \sum_{i}\kappa_{ii} x_i(0)^2 + \frac{1}{2} \sum_{j> i}\kappa_{ij} (x_i(0)-x_j(0))^2$,
\item $\zeta$-sampling-and-query-access to $v\in \mathbb{C}^{N+N(N+1)/2}$.
\end{itemize}
Then, for $\zeta\leq \varepsilon/18$, we can calculate
\begin{equation}
v^\dag \cdot \psi(t)
=
v^\dag \cdot 
\frac{1}{\sqrt{2E}}
\begin{pmatrix}
   \sqrt{M} \dot{x}(t) \\
   i B^\dag \sqrt{M} x(t)
\end{pmatrix},
\end{equation}
classically with a probability at least $1-\delta$ and an additive error $\varepsilon$.
The runtime of the classical algorithm is
\begin{equation}
\begin{split}
O&\left( \frac{\log{(1/\delta)}}{\varepsilon^2}\left( d_{\mathrm{exp}}^2 \mathcal{N}(d_{\mathrm{exp}} r_0)\mathcal{N}(r_0) \left( \bm{q}(K)\mathcal{N}(r_0) + 2\bm{q}(m) \right) \right. \right. \\
& \hspace{4cm} \left. \vphantom{\frac{\log{(1/\delta)}}{\varepsilon^2}} \left. \vphantom{d_{\mathrm{exp}}^2\mathcal{N}(d_{\mathrm{exp}} r_0)}
+ d_{\mathrm{exp}} \mathcal{N}(d_{\mathrm{exp}} r_0) \left( \bm{q}(\dot{x}(0)) + \bm{q}(x(0)) + 2\bm{q}(m) \right) +\bm{sq}(v)\right) \right),
\end{split}
\end{equation}
where $d_{\mathrm{exp}} \coloneqq O( t \sqrt{\mathcal{N}(r_0)\kappa_{\mathrm{max}}/m_{\mathrm{min}}} + \log(1/\varepsilon))$, or emphasizing the dependence on $t$, $\varepsilon$ and $\delta$, given the spatial dimension $D$,
\begin{equation}
\tilde{O}\left( \frac{t^{2+D}\log{(1/\delta)}}{\varepsilon^2} \right).
\end{equation}
\end{thm}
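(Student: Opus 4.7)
The plan is to reduce the problem to $\mathsf{EVT}\textsf{-}\mathsf{GL}$ applied to the geometrically local matrix $A$ itself rather than to the dilated Hamiltonian $H$. The first step is to note that the Jacobi--Anger expansion yields a polynomial $Q$ of degree $d_{\mathrm{exp}} = O(\|H\|t + \log(1/\varepsilon))$ with $\|Q(H) - e^{iHt}\| \leq \varepsilon/3$, and since $\|H\|^2 = \|BB^\dagger\| = \|A\| \leq 2\mathcal{N}(r_0)\kappa_{\mathrm{max}}/m_{\mathrm{min}}$ (by Gershgorin applied to $A = \sqrt{M}^{-1}F\sqrt{M}^{-1}$), we recover $d_{\mathrm{exp}} = O(t\sqrt{\mathcal{N}(r_0)\kappa_{\mathrm{max}}/m_{\mathrm{min}}} + \log(1/\varepsilon))$. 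Splitting $Q$ into its even and odd parts $Q(H) = Q_e(H^2) + H Q_o(H^2)$ and using $H^2 = \mathrm{diag}(A, B^\dagger B)$ together with the intertwining relations $BQ_o(B^\dagger B) = Q_o(A)B$ and $Q_e(B^\dagger B)B^\dagger = B^\dagger Q_e(A)$, one obtains
\begin{equation}
Q(H)\psi(0) = \frac{1}{\sqrt{2E}}\begin{pmatrix} Q_e(A)\dot{y}(0) - Q_o(A) A y(0) \\ iB^\dagger Q_e(A) y(0) + B^\dagger Q_o(A) \dot{y}(0) \end{pmatrix},
\end{equation}
where $y(0)=\sqrt{M}x(0)$ and $\dot{y}(0)=\sqrt{M}\dot{x}(0)$. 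Thus every entry of the target vector is a polynomial transformation of $A$ applied to $\dot{y}(0)$ or $y(0)$, optionally precomposed with $B^\dagger$.

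Next, I would build the required oracles from the given ones. Query-access to each entry of $A$ costs $\bm{q}(A) = O(\mathcal{N}(r_0)\bm{q}(K)+\bm{q}(m))$ because the diagonal $A_{ii}=(\sum_j \kappa_{ij})/m_i$ requires summing at most $\mathcal{N}(r_0)$ values of $\kappa_{ij}$, while off-diagonal entries only cost $\bm{q}(K)+2\bm{q}(m)$; the row/column sparsity structure of $A$ is inherited directly from that of $K$. Query-access to $y(0)$ or $\dot{y}(0)$ costs one query to $m$ together with one query to $x(0)$ or $\dot{x}(0)$. Applying Lemma~\ref{lem:query-GLI-poly} twice (to $Q_e(A)$ and $Q_o(A)$) and linearly combining produces query-access to the top block of $Q(H)\psi(0)$. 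For bottom-block entries indexed by pairs $(i,j)$ with $i \leq j$, unfolding the definition of $B^\dagger$ in Eq.~\eqref{eq:B-dag} gives
\begin{equation}
(B^\dagger w)_{(i,j)} = \sqrt{\kappa_{ij}/m_i}\,w_i - \sqrt{\kappa_{ij}/m_j}\,w_j \quad (i<j), \qquad (B^\dagger w)_{(i,i)} = \sqrt{\kappa_{ii}/m_i}\,w_i,
\end{equation}
so each bottom-block query reduces to at most two queries of $w \in \{Q_e(A)y(0),\, Q_o(A)\dot{y}(0)\}$ plus $O(\bm{q}(K)+\bm{q}(m))$ overhead, both of which are obtained from Lemma~\ref{lem:query-GLI-poly}.

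Finally, I would feed the constructed query-access to the unit vector $\widehat{\psi}(t) := Q(H)\psi(0)$ together with the $\zeta$-sampling-and-query-access to $v$ into Lemma~\ref{lem:sample-query-est}, setting $\zeta \leq \varepsilon/18$ so that the condition $9\zeta \leq \varepsilon/2$ of that lemma is satisfied with respect to an estimation error of $\varepsilon/2$. Triangle-inequality then yields $|\hat z - v^\dagger\psi(t)| \leq \|\psi(t)-\widehat{\psi}(t)\| + |v^\dagger\widehat{\psi}(t)-\hat z| \leq \varepsilon/3 + 2\varepsilon/3 = \varepsilon$, and the runtime follows by combining the query cost from Lemma~\ref{lem:query-GLI-poly} at degree $d_{\mathrm{exp}}$ with the $O(\log(1/\delta)/\varepsilon^2)$ prefactor from Lemma~\ref{lem:sample-query-est}. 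The main technical point, rather than any obstacle, is the choice to polynomially approximate $e^{iHt}$ at the level of $H$ (so that the degree depends on $\sqrt{\|A\|}$ rather than $\|A\|$) while performing the dequantization on $A$ via the even/odd decomposition; treating $H$ itself as the geometrically local matrix would replace $\mathcal{N}(r_0)=O(r_0^D)$ with $\mathcal{N}_H(r_0)=O(r_0^{2D})$ and inflate the runtime from $t^{2+D}$ to $t^{2+2D}$, as observed in the paragraph preceding the theorem.
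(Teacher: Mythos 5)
Your proposal is correct and follows essentially the same route as the paper's proof: approximate $e^{iHt}$ by a polynomial in $H$ (so the degree scales with $\|H\|\le\sqrt{\mathcal{N}(r_0)\kappa_{\max}/m_{\min}}$ rather than $\|A\|$), split it into even and odd parts, push the polynomial through $H^2=\diag(A,B^\dag B)$ via the intertwining relations so that every entry of $Q(H)\psi(0)$ becomes a polynomial in $A$ applied to $\sqrt{M}x(0)$ or $\sqrt{M}\dot x(0)$ (with $B^\dag$ unfolded entrywise for the bottom block), and finish with Lemma~\ref{lem:query-GLI-poly} and Lemma~\ref{lem:sample-query-est}, exactly as the paper does. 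The only blemishes are cosmetic: with your convention $Q(H)=Q_e(H^2)+HQ_o(H^2)$ the top block should read $Q_e(A)\dot y(0)+iQ_o(A)Ay(0)$ (a stray factor of $i$ in your display), and your error budget mixes $\varepsilon/3+2\varepsilon/3$ with the $\varepsilon/2$ condition for $\zeta\le\varepsilon/18$; neither affects the argument or the stated runtime.
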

\begin{proof}
For some $\alpha\in \mathbb{R}$, let $P_{\mathrm{exp}}\in\mathbb{C}[x]$ be the polynomial approximation of exponential function $e^{itx}$ on the domain $[-\alpha,\alpha]$ such that
\begin{equation}
\left\| P_{\mathrm{exp}}(x) - e^{itx} \right\|_{[-\alpha,\alpha]} \leq \varepsilon, \quad \left\| P_{\mathrm{exp}}(x) \right\|_{[-\alpha,\alpha]} \leq 1.
\end{equation}
According to Ref.~\cite{gilyen2019quantum}, there exists such an polynomial approximation of degree $O(\alpha t + \log(1/\varepsilon))$.
Using this polynomial with $\alpha \geq \|H\|$, we obtain $\left\| P_{\mathrm{exp}}(H) - e^{iHt} \right\| \leq \varepsilon$ and $\left\| P_{\mathrm{exp}}(H) \right\| \leq 1$.

Here $\|H\|$ can be upper-bounded as $\|H\|\leq \sqrt{\mathcal{N}(r_0)\kappa_{\mathrm{max}}/m_{\mathrm{min}}}$ (see Ref.~\cite{babbush2023exponential}), which enables us to $\varepsilon/2$-approximate $e^{iHt}$ with $P_{\mathrm{exp}}(H)$ of degree $d_{\mathrm{exp}} \coloneqq O( t \sqrt{\mathcal{N}(r_0)\kappa_{\mathrm{max}}/m_{\mathrm{min}}} + \log(1/\varepsilon))$.
Note that if we have an $\varepsilon/2$-approximation of $e^{iHt}$ and estimate $v^\dag P_{\mathrm{exp}}(H) \psi(0)$ with additive error $\varepsilon/2$, then the overall error is upper-bounded by $\varepsilon$.

Now we decompose $P_{\mathrm{exp}}(x)$ into an even polynomial $P_{\mathrm{cos}}(x^2)$ and an odd polynomial $ixP_{\mathrm{sin}}(x^2)$, where $P_{\mathrm{cos}}(x^2)$ and $xP_{\mathrm{sin}}(x^2)$ are real polynomials which approximate $\cos(x)$ and $\sin(x)$ respectively.
Using Eq.~\eqref{eq:Hsquare}, the polynomial transformation of $H$ can be written as
\begin{align}
P_{\mathrm{exp}}(H) &= P_{\mathrm{cos}}(H^2) +i H P_{\mathrm{sin}}(H^2) \\
&= 
\begin{pmatrix}
   P_{\mathrm{cos}}(A) & 0 \\
   0 & P_{\mathrm{cos}}(B^\dag B)
\end{pmatrix}
+i
\begin{pmatrix}
   0 & B \\
   B^\dag & 0
\end{pmatrix}
\begin{pmatrix}
   P_{\mathrm{sin}}(A) & 0 \\
   0 & P_{\mathrm{sin}}(B^\dag B)
\end{pmatrix}\\
&= 
\begin{pmatrix}
   P_{\mathrm{cos}}(A) & i B P_{\mathrm{sin}}(B^\dag B) \\
   i B^\dag P_{\mathrm{sin}}(A) & P_{\mathrm{cos}}(B^\dag B)
\end{pmatrix}.
\end{align}
This leads to
\begin{align}
P_{\mathrm{exp}}(H) \psi(0) 
&= 
\begin{pmatrix}
   P_{\mathrm{cos}}(A) & i B P_{\mathrm{sin}}(B^\dag B) \\
   i B^\dag P_{\mathrm{sin}}(A) & P_{\mathrm{cos}}(B^\dag B)
\end{pmatrix}
\frac{1}{\sqrt{2E}}
\begin{pmatrix}
   \sqrt{M} \dot{x}(0) \\
   i B^\dag \sqrt{M} x(0)
\end{pmatrix}\\
&=
\frac{1}{\sqrt{2E}}
\begin{pmatrix}
P_{\mathrm{cos}}(A) \sqrt{M} \dot{x}(0) -  B P_{\mathrm{sin}}(B^\dag B) B^\dag \sqrt{M} x(0) \\
iB^\dag P_{\mathrm{sin}}(A) \sqrt{M} \dot{x}(0) + i P_{\mathrm{cos}}(B^\dag B) B^\dag \sqrt{M} x(0)
\end{pmatrix}\\
&=
\frac{1}{\sqrt{2E}}
\begin{pmatrix}
P_{\mathrm{cos}}(A) \sqrt{M} \dot{x}(0) - P_{\mathrm{sin},v}(A) \sqrt{M} x(0) \\
iB^\dag (P_{\mathrm{sin}}(A) \sqrt{M} \dot{x}(0) + P_{\mathrm{cos},x}(A) \sqrt{M} x(0))
\end{pmatrix},
\end{align}
where we define $P_{\mathrm{sin},v}(A)\coloneqq B P_{\mathrm{sin}}(B^\dag B) B^\dag$ and $B^\dag P_{\mathrm{cos},x}(A) \coloneqq P_{\mathrm{cos}}(B^\dag B) B^\dag$.
Recalling $\| P_{\mathrm{exp}}(H) \psi(0) \| \leq 1$, we can apply Lemma~\ref{lem:sample-query-est} to a query-access to $P_{\mathrm{exp}}(H) \psi(0)$.
Thus we discuss how to construct a query-access to $P_{\mathrm{cos}}(A) \sqrt{M} \dot{x}(0) - P_{\mathrm{sin},v}(A) \sqrt{M} x(0)$ and $iB^\dag (P_{\mathrm{sin}}(A) \sqrt{M} \dot{x}(0) + P_{\mathrm{cos},x}(A) \sqrt{M} x(0))$ below.

First, we construct a query access to $\sqrt{M} \dot{x}(0)$ and $\sqrt{M} x(0)$.
The $i$-th entries of $\sqrt{M} \dot{x}(0)$ and $\sqrt{M} x(0)$ can be computed by querying the $i$-th entries of $\dot{x}(0)$, $x(0)$ and $m$. 
Thus we can compute each entry of $\sqrt{M} \dot{x}(0)$ and $\sqrt{M} x(0)$ classically in 
\begin{equation}
\bm{q}(\sqrt{M} \dot{x}(0)) = \bm{q}(m) + \bm{q}(\dot{x}(0)), \quad \bm{q}(\sqrt{M} x(0)) = \bm{q}(m) + \bm{q}(x(0))
\end{equation}
-time respectively.
Next, we consider a query-access to $A=\sqrt{M}^{-1} F \sqrt{M}^{-1}$.
A query-access to $F$ can be constructed from a query-access to $K$ since the position of non-zero entries of $F$ is the same as that of $K$ and $f_{ii}=\sum_j \kappa_{ij}$ can be computed using $\mathcal{Q}^{\mathrm{row}}(K)$, $\mathcal{N}(r_0)$ times, which leads to $\bm{q}(F) = \bm{q}(K)\mathcal{N}(r_0)$.
Also the position of non-zero entries of $A$ is the same as that of $F$ and the $(i,j)$-th entry of $A$ can be written as $a_{ij}=f_{ij}/\sqrt{m_i m_j}$.
Combining query-accesses to $F$ and $m$, we can compute the $(i,j)$-th entry of $A$ classically in 
\begin{equation}
\bm{q}(A) = \bm{q}(F) + 2\bm{q}(m) = \bm{q}(K)\mathcal{N}(r_0) + 2\bm{q}(m)
\end{equation}
-time.
Now let us give a query-access to $P_{\mathrm{cos}}(A) \sqrt{M} \dot{x}(0) - P_{\mathrm{sin},v}(A) \sqrt{M} x(0)$.
Considering that the degree of $P_{\mathrm{cos}}$ and $P_{\mathrm{sin},v}$ is $d_{\mathrm{exp}} = O( t \sqrt{\mathcal{N}(r_0)\kappa_{\mathrm{max}}/m_{\mathrm{min}}} + \log(1/\varepsilon))$, from Lemma~\ref{lem:query-GLI-poly}, we obtain
\begin{align}
& \bm{q}\left( P_{\mathrm{cos}}(A) \sqrt{M} \dot{x}(0) - P_{\mathrm{sin},v}(A) \sqrt{M} x(0) \right) \\
=& \bm{q} \left( P_{\mathrm{cos}}(A) \sqrt{M} \dot{x}(0) \right) + \bm{q} \left( P_{\mathrm{sin},v}(A) \sqrt{M} x(0) \right) \\
\begin{split}
    =&O\left(d_{\mathrm{exp}}^2 \mathcal{N}(d_{\mathrm{exp}} r_0)\mathcal{N}(r_0) \bm{q}(A) + d_\mathrm{exp} \mathcal{N}(d_\mathrm{exp} r_0) \bm{q}(\sqrt{M} \dot{x}(0)) \right) \\
    &\hspace{4cm}+ O\left(d_{\mathrm{exp}}^2 \mathcal{N}(d_{\mathrm{exp}} r_0)\mathcal{N}(r_0) \bm{q}(A) + d_{\mathrm{exp}} \mathcal{N}(d_{\mathrm{exp}} r_0) \bm{q}(\sqrt{M} x(0)) \right)
\end{split}\\
=& O\left(d_{\mathrm{exp}}^2 \mathcal{N}(d_{\mathrm{exp}} r_0)\mathcal{N}(r_0) \left( \bm{q}(K)\mathcal{N}(r_0) + 2\bm{q}(m) \right) + d_{\mathrm{exp}} \mathcal{N}(d_{\mathrm{exp}} r_0) \left( \bm{q}(\dot{x}(0)) + \bm{q}(x(0)) + 2\bm{q}(m) \right) \right). \label{eq:query-cost-cos-sinv}
\end{align}
Finally, we provide a query-access to $iB^\dag (P_{\mathrm{sin}}(A) \sqrt{M} \dot{x}(0) + P_{\mathrm{cos},x}(A) \sqrt{M} x(0))$.
$B^\dag$ defined in Eq.~\eqref{eq:B-dag} has at most two entries in each row.
This can be seen from the fact that, considering an $(i,j)$-th row of $B^\dag$, the two non-zero entries $\sqrt{\kappa_{ij}/m_i}$ and $\sqrt{\kappa_{ij}/m_j}$ lie in $i$-th and $j$-th columns respectively if $j>i$, or the one non-zero entry $\sqrt{\kappa_{ii}/m_i}$ lies in $i$-th column if $i=j$.
Thus, given an index $(i,j)$ for $j\leq i$, we can calculate the $(i,j)$-th entry of $iB^\dag (P_{\mathrm{sin}}(A) \sqrt{M} \dot{x}(0) + P_{\mathrm{cos},x}(A) \sqrt{M} x(0))$ using at most two queries to $K$, $m$ and $P_{\mathrm{sin}}(A) \sqrt{M} \dot{x}(0) + P_{\mathrm{cos},x}(A) \sqrt{M} x(0)$.
Noting that the the classical runtime for querying $P_{\mathrm{sin}}(A) \sqrt{M} \dot{x}(0) + P_{\mathrm{cos},x}(A) \sqrt{M} x(0)$ is the same as Eq.~\eqref{eq:query-cost-cos-sinv}, we obtain
\begin{align}
& \bm{q}\left( iB^\dag (P_{\mathrm{sin}}(A) \sqrt{M} \dot{x}(0) + P_{\mathrm{cos},x}(A) \sqrt{M} x(0)) \right) \\
=& 2\bm{q}(K) + 2\bm{q}(m) + 2\bm{q}\left( P_{\mathrm{sin}}(A) \sqrt{M} \dot{x}(0) + P_{\mathrm{cos},x}(A) \sqrt{M} x(0) \right) \\
=& O\left(d_{\mathrm{exp}}^2 \mathcal{N}(d_{\mathrm{exp}} r_0)\mathcal{N}(r_0) \left( \bm{q}(K)\mathcal{N}(r_0) + 2\bm{q}(m) \right) + d_{\mathrm{exp}} \mathcal{N}(d_{\mathrm{exp}} r_0) \left( \bm{q}(\dot{x}(0)) + \bm{q}(x(0)) + 2\bm{q}(m) \right) \right).
\end{align}
Putting it all together, 
\begin{align}
&\bm{q}(P_{\mathrm{exp}}(H) \psi(0)) \\
=& \bm{q}\left( P_{\mathrm{cos}}(A) \sqrt{M} \dot{x}(0) - P_{\mathrm{sin},v}(A) \sqrt{M} x(0) \right) + \bm{q}\left( iB^\dag (P_{\mathrm{sin}}(A) \sqrt{M} \dot{x}(0) + P_{\mathrm{cos},x}(A) \sqrt{M} x(0)) \right) \\
=& O\left(d_{\mathrm{exp}}^2 \mathcal{N}(d_{\mathrm{exp}} r_0)\mathcal{N}(r_0) \left( \bm{q}(K)\mathcal{N}(r_0) + 2\bm{q}(m) \right) + d_{\mathrm{exp}} \mathcal{N}(d_{\mathrm{exp}} r_0) \left( \bm{q}(\dot{x}(0)) + \bm{q}(x(0)) + 2\bm{q}(m) \right) \right).
\end{align}

From Lemma~\ref{lem:sample-query-est}, for $\zeta \leq \varepsilon/18$ we can estimate $v^\dag P_{\mathrm{exp}}(H) \psi(0)$ classically with a probability at least $1-\delta$ and an additive error $\varepsilon/2$ in
\begin{align}
& O\left( \frac{\log{(1/\delta)}}{\varepsilon^2}\left(\bm{q}(P_{\mathrm{exp}}(H) \psi(0))+\bm{sq}(v)\right) \right) \\
\begin{split}
=& O\left( \frac{\log{(1/\delta)}}{\varepsilon^2}\left( d_{\mathrm{exp}}^2 \mathcal{N}(d_{\mathrm{exp}} r_0)\mathcal{N}(r_0) \left( \bm{q}(K)\mathcal{N}(r_0) + 2\bm{q}(m) \right) \right. \right. \\
& \hspace{4cm} \left. \vphantom{\frac{\log{(1/\delta)}}{\varepsilon^2}} \left. \vphantom{d_{\mathrm{exp}}^2}
+ d_{\mathrm{exp}} \mathcal{N}(d_{\mathrm{exp}} r_0) \left( \bm{q}(\dot{x}(0)) + \bm{q}(x(0)) + 2\bm{q}(m) \right) +\bm{sq}(v)\right) \right)
\end{split}
\end{align}
-time. 
Emphasizing the dependence on $t$, $\varepsilon$ and $\delta$, we can write the complexity
\begin{equation}
\tilde{O}\left( \frac{t^{2+D}\log{(1/\delta)}}{\varepsilon^2} \right).
\end{equation}
\end{proof}

Inspired by Ref.~\cite{babbush2023exponential}, we define another problem, where we calculate kinetic and potential energy.
\begin{problem}\label{problem:energy}
Let $t\in \mathbb{R}$ be an evolution time.
Suppose that we are given
\begin{itemize}
\item query-access to $\dot{x}(0)\in \mathbb{R}^N$,
\item query-access to $x(0)\in \mathbb{R}^N$,
\item query-access to $m\in \mathbb{R}^N$, where $m_i$ is the mass of $i$-th point mass for all $i\in\{1,\dots,N\}$,
\item query-access to $K\in \mathbb{R}^{N\times N}$, where $K_{ij}=\kappa_{ij}$, 
\item total energy: $E= \frac{1}{2} \sum_{i} m_i \dot{x}_i(0)^2 + \frac{1}{2} \sum_{i}\kappa_{ii} x_i(0)^2 + \frac{1}{2} \sum_{j> i}\kappa_{ij} (x_i(0)-x_j(0))^2$,
\item a subsets of masses $\mathcal{V}\subseteq \{1,\dots,N\}$ and a subset of springs $\mathcal{X}\subseteq \{1,\dots N(N+1)/2\}$,
\item query-access to $V=V_{\mathcal{V}}+V_{\mathcal{X}}=\sum_{i\in \mathcal{V}}e_i e_i^\dag + \sum_{i\in \mathcal{X}}e_i e_i^\dag \in \mathbb{R}^{(N+N(N+1)/2)\times (N+N(N+1)/2)}$, where $e_i$ is a unit vector with an 1 in the $i$-th position and zeros elsewhere,
\item $\zeta$-sampling-and-query-access to $\psi(0) = \frac{1}{\sqrt{2E}} ( (\sqrt{M}\dot{x}(0))^T, (i B^\dag \sqrt{M} x(0))^T )^T \in \mathbb{C}^{N+N(N+1)/2}$.
\end{itemize}
Then the goal is to estimate $\hat{z}\in\mathbb{C}$ such that
\begin{equation}
\left|\hat{z} - \frac{K_{\mathcal{V}}(t) + U_{\mathcal{X}}(t)}{E} \right| \leq \varepsilon,
\end{equation}
where
\begin{equation}
K_{\mathcal{V}}(t) \coloneqq \frac{1}{2} \sum_{i\in \mathcal{V}}m_i \dot{x}_i(t)^2,
\end{equation} 
is the kinetic energy of the masses in $\mathcal{V}$ at time $t$ and
\begin{equation}
U_{\mathcal{X}}(t) \coloneqq \frac{1}{2} \sum_{i:(i,i)\in \mathcal{X}}\kappa_{ii} x_i(t)^2 + \frac{1}{2} \sum_{j> i: (i,j)\in \mathcal{X}}\kappa_{ij} (x_i(t)-x_j(t))^2,
\end{equation} 
is the potential energy of the springs in $\mathcal{X}$ at time $t$.
\end{problem}
\begin{cor}\label{cor:energy}
For $\zeta \leq \varepsilon/27$, we can solve Problem~\ref{problem:energy} classically with a probability at least $1-\delta$.
The runtime of the classical algorithm is
\begin{equation}
\begin{split}
O&\left( \frac{\log{(1/\delta)}}{\varepsilon^2}\left( d_{\mathrm{exp}}^2 \mathcal{N}(d_{\mathrm{exp}} r_0)\mathcal{N}(r_0) \bm{q}(A) \right. \right. \\
&\hspace{1cm} \left. \vphantom{\frac{\log{(1/\delta)}}{\varepsilon^2}} \left. \vphantom{d_{\mathrm{exp}}^2} 
+ d_{\mathrm{exp}} \mathcal{N}(d_{\mathrm{exp}} r_0) \mathcal{N}(r_0)\left(3\bm{q}(K) + 3\bm{q}(m) + \bm{q}(V) + 2\bm{q} \left( \phi_x(t) \right)\right) +\bm{sq}(\psi(0))\right) \right),
\end{split}
\end{equation}
where $d_{\mathrm{exp}} \coloneqq O( t \sqrt{\mathcal{N}(r_0)\kappa_{\mathrm{max}}/m_{\mathrm{min}}} + \log(1/\varepsilon))$ and
\begin{equation}
\begin{split}
&\bm{q}\left( \phi_x(t) \right) = O\left(d_{\mathrm{exp}}^2 \mathcal{N}(d_{\mathrm{exp}} r_0)\mathcal{N}(r_0) \left( \bm{q}(K)\mathcal{N}(r_0) + 2\bm{q}(m) \right) \right. \\
&\hspace{5cm} \left. \vphantom{d_{\mathrm{exp}}^2}
+ d_{\mathrm{exp}} \mathcal{N}(d_{\mathrm{exp}} r_0) \left( \bm{q}(\dot{x}(0)) + \bm{q}(x(0)) + 2\bm{q}(m) \right) \right).
\end{split}
\end{equation}
, or emphasizing the dependence on $t$, $\varepsilon$ and $\delta$, given the spatial dimension $D$,
\begin{equation}
\tilde{O}\left( \frac{t^{3+2D}\log{(1/\delta)}}{\varepsilon^2} \right).
\end{equation}
\end{cor}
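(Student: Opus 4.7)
The plan is to rewrite the target ratio as a single inner product that fits the machinery of Section~\ref{sec:dequantizing-QEVT}. Since $V = V_{\mathcal{V}} + V_{\mathcal{X}}$ is a diagonal $0/1$ projector whose two blocks act on the velocity and spring halves of $\psi(t) = \tfrac{1}{\sqrt{2E}}\bigl(\sqrt{M}\dot{x}(t),\; i B^{\dag}\sqrt{M} x(t)\bigr)^{T}$, a direct computation gives
\begin{equation}
\frac{K_{\mathcal{V}}(t) + U_{\mathcal{X}}(t)}{E} = \psi(t)^{\dag} V \psi(t).
\end{equation}
Replacing the exact evolution by the degree-$d_{\mathrm{exp}}$ polynomial $P_{\mathrm{exp}}$ constructed in the proof of Theorem~\ref{thm:classical-time} introduces only an $O(\varepsilon)$ error in this quadratic form, so it suffices to estimate $\phi_x(t)^{\dag} V \phi_x(t)$ with $\phi_x(t) := P_{\mathrm{exp}}(H)\psi(0)$. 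I would then reduce this quadratic form to a bilinear form by writing
\begin{equation}
\phi_x(t)^{\dag} V \phi_x(t) = \psi(0)^{\dag}\bigl(P_{\mathrm{exp}}(H)^{\dag} V \phi_x(t)\bigr) =: \psi(0)^{\dag} w,
\end{equation}
so that Lemma~\ref{lem:sample-query-est} applies with $v = \psi(0)$ (for which $\zeta$-sampling-and-query access is provided) and $w$ on the right, aiming for additive precision $\varepsilon/3$, which forces the sampling tolerance to $\zeta \le \varepsilon/27$ as stated.

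The main technical task is building efficient query-access to $w$. Query-access to $\phi_x(t)$ is available at cost $\bm{q}(\phi_x(t))$ directly from the proof of Theorem~\ref{thm:classical-time}, and multiplying by the diagonal projector $V$ adds only $\bm{q}(V)$. To apply $P_{\mathrm{exp}}(H)^{\dag} = \bar{P}_{\mathrm{exp}}(H)$ I would reuse the even/odd decomposition used in Theorem~\ref{thm:classical-time}: because $\bar{P}_{\mathrm{exp}}(x) = P_{\mathrm{cos}}(x^{2}) - i x P_{\mathrm{sin}}(x^{2})$ with real $P_{\mathrm{cos}}, P_{\mathrm{sin}}$, and $H^{2} = \mathrm{diag}(A, B^{\dag}B)$, the block expansion of $\bar{P}_{\mathrm{exp}}(H) V \phi_x(t)$ produces four terms of the shape $P(A)u$, $B^{\dag} P(A)u$, or $B P(B^{\dag}B) u$. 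Using the intertwining identity $B(B^{\dag}B)^{k} = (BB^{\dag})^{k} B = A^{k} B$, every term of the last kind rewrites as $P(A)(Bu)$, so each branch reduces to applying a polynomial of the geometrically local matrix $A$ to an $N$-dimensional vector obtained from $V\phi_x(t)$ by at most one application of $B$ or $B^{\dag}$. Lemma~\ref{lem:query-GLI-poly} then supplies the cost $O(d_{\mathrm{exp}}^{2}\mathcal{N}(d_{\mathrm{exp}} r_{0})\mathcal{N}(r_{0})\bm{q}(A) + d_{\mathrm{exp}}\mathcal{N}(d_{\mathrm{exp}} r_{0})\bm{q}(u))$; since each row of $B^{\dag}$ has at most two nonzero entries but each row of $B$ can have up to $\mathcal{N}(r_{0})$, querying the input vector $u$ to Lemma~\ref{lem:query-GLI-poly} incurs the extra $\mathcal{N}(r_{0})$ factor multiplying $(3\bm{q}(K) + 3\bm{q}(m) + \bm{q}(V) + 2\bm{q}(\phi_x(t)))$ that appears in the statement.

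Combining this query-access with Lemma~\ref{lem:sample-query-est} at precision $\varepsilon/3$ reproduces the advertised runtime: the matrix-query contribution $d_{\mathrm{exp}}^{2}\mathcal{N}(d_{\mathrm{exp}} r_{0})\mathcal{N}(r_{0})\bm{q}(A)$ from Lemma~\ref{lem:query-GLI-poly}, the linear-in-$d_{\mathrm{exp}}$ vector-query term weighted by the additional $\mathcal{N}(r_{0})$ factor coming from the single $B$ or $B^{\dag}$ layer, and the additive $\bm{sq}(\psi(0))$ from sampling. Plugging in $d_{\mathrm{exp}} = \tilde{O}(t)$ and $\mathcal{N}(r) = O(r^{D})$ yields the $\tilde{O}(t^{3+2D}\log(1/\delta)/\varepsilon^{2})$ bound. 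The hardest part I anticipate is the bookkeeping: organizing the four output blocks of $\bar{P}_{\mathrm{exp}}(H) V \phi_x(t)$, systematically routing every polynomial through $A$ (instead of through the less-local $B^{\dag}B$) using the intertwining identity, and verifying that the three error sources, namely $\varepsilon/3$ from Lemma~\ref{lem:sample-query-est}, $\varepsilon/3$ from $P_{\mathrm{exp}}(H) \approx e^{iHt}$ on the forward side, and $\varepsilon/3$ from $\bar{P}_{\mathrm{exp}}(H) \approx e^{-iHt}$ on the reverse side, compose, via $\|V\psi(t)\| \le 1$ and the triangle inequality, into the final tolerance $\varepsilon$.
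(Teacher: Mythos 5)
Your proposal is correct and follows essentially the same route as the paper: reduce $\psi(t)^\dag V\psi(t)$ to the bilinear form $\psi(0)^\dag\bigl(P_{\mathrm{exp}}(H)^\dag V P_{\mathrm{exp}}(H)\psi(0)\bigr)$, split the error budget into three $\varepsilon/3$ pieces (forcing $\zeta\le\varepsilon/27$), build query-access to the right-hand vector via the block/parity decomposition of $P_{\mathrm{exp}}(H)$ together with the intertwining identities $B(B^\dag B)^k=A^kB$ and $(B^\dag B)^kB^\dag=B^\dag A^k$, and finish with Lemma~\ref{lem:sample-query-est} and Lemma~\ref{lem:query-GLI-poly}. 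The only cosmetic discrepancy is that you overload $\phi_x(t)$ to mean the full vector $P_{\mathrm{exp}}(H)\psi(0)$, whereas the statement (and the paper) reserves it for the $N$-dimensional component whose query cost appears in the runtime bound; the cost accounting is unaffected.
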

\begin{proof}
Using some equations used in the proof of Theorem~\ref{thm:classical-time}, the kinetic and potential energy is written as:
\begin{equation}
\frac{K_{\mathcal{V}}(t) + U_{\mathcal{X}}(t)}{E} = \psi(t)^\dag V \psi(t) = \psi(0) e^{-iHt} V e^{iHt} \psi(0) \approx \psi(0) P_{\mathrm{exp}}(H)^\dag V P_{\mathrm{exp}}(H) \psi(0) .
\end{equation}
This implies that, if we have $\varepsilon/3$-approximation $P_{\mathrm{exp}}(H)$ of $e^{iHt}$ and estimate $\psi(0) P_{\mathrm{exp}}(H)^\dag V P_{\mathrm{exp}}(H) \psi(0)$ within an additive error $\varepsilon/3$, then we obtain the overall error $\varepsilon$.
Note that the degree of $P_{\mathrm{exp}}(H)$ is $d_{\mathrm{exp}} \coloneqq O( t \sqrt{\mathcal{N}(r_0)\kappa_{\mathrm{max}}/m_{\mathrm{min}}} + \log(1/\varepsilon))$ as in the proof of Theorem~\ref{thm:classical-time}.

Now we construct a query-access to $P_{\mathrm{exp}}(H)^\dag V P_{\mathrm{exp}}(H) \psi(0)$.
Using the notation used in the proof of Theorem~\ref{thm:classical-time},
\begin{align}
& P_{\mathrm{exp}}(H)^\dag V P_{\mathrm{exp}}(H) \psi(0)\\
=& 
\begin{pmatrix}
   P_{\mathrm{cos}}(A) & -i P_{\mathrm{sin}}(A) B \\
   -i P_{\mathrm{sin}}(B^\dag B) B^\dag & P_{\mathrm{cos}}(B^\dag B)
\end{pmatrix}
\begin{pmatrix}
   V_{\mathcal{V}} & 0 \\
   0 & V_{\mathcal{X}}
\end{pmatrix}
\begin{pmatrix}
   P_{\mathrm{cos}}(A) & i B P_{\mathrm{sin}}(B^\dag B) \\
   i B^\dag P_{\mathrm{sin}}(A) & P_{\mathrm{cos}}(B^\dag B)
\end{pmatrix} 
\psi(0) \\
=& 
\begin{pmatrix}
   P_{\mathrm{cos}}(A) & -i P_{\mathrm{sin}}(A) B \\
   -i P_{\mathrm{sin}}(B^\dag B) B^\dag & P_{\mathrm{cos}}(B^\dag B)
\end{pmatrix}
\begin{pmatrix}
   V_{\mathcal{V}} & 0 \\
   0 & V_{\mathcal{X}} 
\end{pmatrix}
\cdot
\frac{1}{\sqrt{2E}}
\begin{pmatrix}
P_{\mathrm{cos}}(A) \sqrt{M} \dot{x}(0) - P_{\mathrm{sin},v}(A) \sqrt{M} x(0) \\
i B^\dag (P_{\mathrm{sin}}(A) \sqrt{M} \dot{x}(0) + P_{\mathrm{cos},x}(A) \sqrt{M} x(0))
\end{pmatrix} \\
=& 
\begin{pmatrix}
   P_{\mathrm{cos}}(A) & -i P_{\mathrm{sin}}(A) B \\
   -i P_{\mathrm{sin}}(B^\dag B) B^\dag & P_{\mathrm{cos}}(B^\dag B)
\end{pmatrix}
\begin{pmatrix}
   V_{\mathcal{V}} & 0 \\
   0 & V_{\mathcal{X}} B^\dag
\end{pmatrix}
\cdot
\frac{1}{\sqrt{2E}}
\begin{pmatrix}
P_{\mathrm{cos}}(A) \sqrt{M} \dot{x}(0) - P_{\mathrm{sin},v}(A) \sqrt{M} x(0) \\
i (P_{\mathrm{sin}}(A) \sqrt{M} \dot{x}(0) + P_{\mathrm{cos},x}(A) \sqrt{M} x(0))
\end{pmatrix} \\
=& 
\begin{pmatrix}
   P_{\mathrm{cos}}(A) V_{\mathcal{V}} & -i P_{\mathrm{sin}}(A) B V_{\mathcal{X}} B^\dag \\
   -i P_{\mathrm{sin}}(B^\dag B) B^\dag V_{\mathcal{V}} & P_{\mathrm{cos}}(B^\dag B) V_{\mathcal{X}} B^\dag
\end{pmatrix}
\cdot
\begin{pmatrix}
   \phi_v(t) \\
   \phi_x(t)
\end{pmatrix}\\
=& 
\begin{pmatrix}
   P_{\mathrm{cos}}(A) V_{\mathcal{V}} \phi_v(t) -i P_{\mathrm{sin}}(A) B V_{\mathcal{X}} B^\dag \phi_x(t) \\
   -i P_{\mathrm{sin}}(B^\dag B) B^\dag V_{\mathcal{V}} \phi_v(t) + P_{\mathrm{cos}}(B^\dag B) V_{\mathcal{X}} B^\dag \phi_x(t)
\end{pmatrix},\label{eq:PVPpsi}
\end{align}
where $\phi_v(t)\coloneqq \frac{1}{\sqrt{2E}} (P_{\mathrm{cos}}(A) \sqrt{M} \dot{x}(0) - P_{\mathrm{sin},v}(A) \sqrt{M} x(0) )$ and $\phi_x(t) \coloneqq \frac{i}{\sqrt{2E}} (P_{\mathrm{sin}}(A) \sqrt{M} \dot{x}(0) + P_{\mathrm{cos},x}(A) \sqrt{M} x(0))$.
In the proof of Theorem~\ref{thm:classical-time}, we have shown that 
\begin{equation}
\begin{split}
    & \bm{q}\left( \phi_v(t) \right) = \bm{q}\left( \phi_x(t) \right) \\
=& O\left(d_{\mathrm{exp}}^2 \mathcal{N}(d_{\mathrm{exp}} r_0)\mathcal{N}(r_0) \left( \bm{q}(K)\mathcal{N}(r_0) + 2\bm{q}(m) \right) + d_{\mathrm{exp}} \mathcal{N}(d_{\mathrm{exp}} r_0) \left( \bm{q}(\dot{x}(0)) + \bm{q}(x(0)) + 2\bm{q}(m) \right) \right).
\end{split}
\end{equation}
From now, we consider each term in Eq.~\eqref{eq:PVPpsi} respectively.
The $i$-th entry of $V_{\mathcal{V}} \phi_v(t)$ can be calculated by querying the $(i,i)$-th entry of $V_{\mathcal{V}}$ and the $i$-th entry of $\phi_v(t)$ since $V_{\mathcal{V}}$ is a projector and a diagonal matrix.
Thus $\bm{q}(V_{\mathcal{V}} \phi_v(t)) = \bm{q}(V) + \bm{q}(\phi_v(t))$.
Combined with Lemma~\ref{lem:query-GLI-poly}, we obtain
\begin{equation}\label{eq:cor-energy-ul}
\begin{split}
&\bm{q}(P_{\mathrm{cos}}(A) V_{\mathcal{V}} \phi_v(t)) \\
=& O\left(d_{\mathrm{exp}}^2 \mathcal{N}(d_{\mathrm{exp}} r_0)\mathcal{N}(r_0) \bm{q}(A) + d_{\mathrm{exp}} \mathcal{N}(d_{\mathrm{exp}} r_0) \left( \bm{q}(V) + \bm{q}(\phi_v(t)) \right) \right).
\end{split}
\end{equation}
Next, we consider $P_{\mathrm{sin}}(B^\dag B) B^\dag V_{\mathcal{V}} \phi_v(t)$.
We define $B^\dag P_{\mathrm{sin},x}(A) \coloneqq P_{\mathrm{sin}}(B^\dag B) B^\dag$, which is well-defined since $P_{\mathrm{sin}}$ is an odd polynomial.
It is easy to see that $\bm{q}(P_{\mathrm{sin},x}(A) V_{\mathcal{V}} \phi_v(t) ) = O( \bm{q}(P_{\mathrm{cos}}(A) V_{\mathcal{V}} \phi_v(t)) )$.
In the proof of Theorem~\ref{thm:classical-time}, we have proven that, for any vector $u$,
\begin{equation}\label{eq:query-Bdagu}
    \bm{q}(B^\dag u ) = 2\bm{q}(K) + 2\bm{q}(m) + 2\bm{q}(u),
\end{equation}
which implies that
\begin{equation}\label{eq:cor-energy-dl}
\begin{split}
    &\bm{q}(P_{\mathrm{sin}}(B^\dag B) B^\dag V_{\mathcal{V}} \phi_v(t) ) \\
    =& \bm{q}(B^\dag P_{\mathrm{sin},x}(A) V_{\mathcal{V}} \phi_v(t) ) \\
    =& 2\bm{q}(K) + 2\bm{q}(m) + 2\bm{q}(P_{\mathrm{sin},x}(A) V_{\mathcal{V}} \phi_v(t)) \\
    =& 2\bm{q}(K) + 2\bm{q}(m) + O\left(d_{\mathrm{exp}}^2 \mathcal{N}(d_{\mathrm{exp}} r_0)\mathcal{N}(r_0) \bm{q}(A) + d_{\mathrm{exp}} \mathcal{N}(d_{\mathrm{exp}} r_0) \left( \bm{q}(V) + \bm{q}(\phi_v(t)) \right) \right).
\end{split}
\end{equation}
In order to analyze $P_{\mathrm{sin}}(A) B V_{\mathcal{X}} B^\dag \phi_x(t)$, we compute $\bm{q}(B u)$ for any vector $u$.
Observing that
\begin{equation}
\bra{i} B = \sum_{j\geq i}\sqrt{\frac{\kappa_{ij}}{m_i}} \bra{i}\bra{j} - \sum_{j< i}\sqrt{\frac{\kappa_{ij}}{m_i}} \bra{j}\bra{i},
\end{equation}
for the $i$-th row of $B$, there are at most $\mathcal{N}(r_0)$ non-zero entries and each element has the form of $\sqrt{\kappa_{ij}/m_i}$.
Querying all non-zero entries of the $i$-th row of $B$ and their positions and computing corresponding $\mathcal{N}(r_0)$ elements of $u$, we can calculate the $i$-th entry of $B u$.
Thus
\begin{equation}\label{eq:query-Bu}
    \bm{q}(B u) = \mathcal{N}(r_0) \left( \bm{q}(K) + \bm{q}(m) + \bm{q}(u) \right),
\end{equation}
holds.
Using Eq.~\eqref{eq:query-Bdagu} and Lemma~\ref{lem:query-GLI-poly}, we obtain
\begin{align}
    &\bm{q}\left( P_{\mathrm{sin}}(A) B V_{\mathcal{X}} B^\dag \phi_x(t) \right) \\
    =& O\left(d_{\mathrm{exp}}^2 \mathcal{N}(d_{\mathrm{exp}} r_0)\mathcal{N}(r_0) \bm{q}(A) + d_{\mathrm{exp}} \mathcal{N}(d_{\mathrm{exp}} r_0) \bm{q} \left( B V_{\mathcal{X}} B^\dag \phi_x(t) \right) \right) \\
    =& O\left(d_{\mathrm{exp}}^2 \mathcal{N}(d_{\mathrm{exp}} r_0)\mathcal{N}(r_0) \bm{q}(A) + d_{\mathrm{exp}} \mathcal{N}(d_{\mathrm{exp}} r_0) \mathcal{N}(r_0)\left(\bm{q}(K) + \bm{q}(m) + \bm{q} \left( V_{\mathcal{X}} B^\dag \phi_x(t) \right)\right) \right) \\
    =& O\left(d_{\mathrm{exp}}^2 \mathcal{N}(d_{\mathrm{exp}} r_0)\mathcal{N}(r_0) \bm{q}(A) + d_{\mathrm{exp}} \mathcal{N}(d_{\mathrm{exp}} r_0) \mathcal{N}(r_0)\left(3\bm{q}(K) + 3\bm{q}(m) + \bm{q}(V) + 2\bm{q} \left( \phi_x(t) \right)\right) \right) . \label{eq:cor-energy-ur}
\end{align}
Finally, we consider $P_{\mathrm{cos}}(B^\dag B) V_{\mathcal{X}} B^\dag \phi_x(t)$.
We decompose $P_{\mathrm{cos}}(B^\dag B)$ into the identity term and another term such that $a_0 I + P_{\mathrm{cos},i}(B^\dag B) \coloneqq P_{\mathrm{cos}}(B^\dag B)$.
Furthermore, we define $B^\dag P_{\mathrm{cos},ix}(A) B \coloneqq P_{\mathrm{cos},i}(B^\dag B)$, which is well-defined since $P_{\mathrm{cos},i}(B^\dag B)$ does not have an identity term.
Observing that
\begin{equation}
\begin{split}
    &P_{\mathrm{cos}}(B^\dag B) V_{\mathcal{X}} B^\dag \phi_x(t) \\
    =& a_0 V_{\mathcal{X}} B^\dag \phi_x(t) + P_{\mathrm{cos},i}(B^\dag B) V_{\mathcal{X}} B^\dag \phi_x(t) \\
    =& a_0 V_{\mathcal{X}} B^\dag \phi_x(t) + B^\dag P_{\mathrm{cos},ix}(A) B V_{\mathcal{X}} B^\dag \phi_x(t),
\end{split}
\end{equation}
from Eqs.~\eqref{eq:query-Bdagu},~\eqref{eq:query-Bu} and Lemma~\ref{lem:query-GLI-poly}, we obtain
\begin{align}
    & \bm{q} \left( P_{\mathrm{cos}}(B^\dag B) V_{\mathcal{X}} B^\dag \phi_x(t) \right)\\
    =& \bm{q} \left( a_0 V_{\mathcal{X}} B^\dag \phi_x(t) \right) + \bm{q} \left( B^\dag P_{\mathrm{cos},ix}(A) B V_{\mathcal{X}} B^\dag \phi_x(t) \right)\\
    =& \bm{q}(V) + \bm{q}\left( B^\dag \phi_x(t) \right) + 2\bm{q}(K) + 2\bm{q}(m) + 2\bm{q}\left( P_{\mathrm{cos},ix}(A) B V_{\mathcal{X}} B^\dag \phi_x(t) \right) \\
    \begin{split}
        =& 4\bm{q}(K) + 4\bm{q}(m) + \bm{q}(V) + 2\bm{q}\left( \phi_x(t) \right) \\
        &+ O\left(d_{\mathrm{exp}}^2 \mathcal{N}(d_{\mathrm{exp}} r_0)\mathcal{N}(r_0) \bm{q}(A) + d_{\mathrm{exp}} \mathcal{N}(d_{\mathrm{exp}} r_0) \mathcal{N}(r_0)\left(3\bm{q}(K) + 3\bm{q}(m) + \bm{q}(V) + 2\bm{q} \left( \phi_x(t) \right)\right) \right)
    \end{split}\\
    =& O\left(d_{\mathrm{exp}}^2 \mathcal{N}(d_{\mathrm{exp}} r_0)\mathcal{N}(r_0) \bm{q}(A) + d_{\mathrm{exp}} \mathcal{N}(d_{\mathrm{exp}} r_0) \mathcal{N}(r_0)\left(3\bm{q}(K) + 3\bm{q}(m) + \bm{q}(V) + 2\bm{q} \left( \phi_x(t) \right)\right) \right). \label{eq:cor-energy-dr}
\end{align}
Combining Eqs.~\eqref{eq:cor-energy-ul},~\eqref{eq:cor-energy-dl},~\eqref{eq:cor-energy-ur} and~\eqref{eq:cor-energy-dr}, we get 
\begin{align}
    & \bm{q}\left( P_{\mathrm{exp}}(H)^\dag V P_{\mathrm{exp}}(H) \psi(0) \right)\\
    \begin{split}
    =& \bm{q}(P_{\mathrm{cos}}(A) V_{\mathcal{V}} \phi_v(t)) + \bm{q}(P_{\mathrm{sin}}(B^\dag B) B^\dag V_{\mathcal{V}} \phi_v(t) ) \\
    &\hspace{3cm} + \bm{q}\left( P_{\mathrm{sin}}(A) B V_{\mathcal{X}} B^\dag \phi_x(t) \right) + \bm{q} \left( P_{\mathrm{cos}}(B^\dag B) V_{\mathcal{X}} B^\dag \phi_x(t) \right) 
    \end{split}\\
    =& O\left(d_{\mathrm{exp}}^2 \mathcal{N}(d_{\mathrm{exp}} r_0)\mathcal{N}(r_0) \bm{q}(A) + d_{\mathrm{exp}} \mathcal{N}(d_{\mathrm{exp}} r_0) \mathcal{N}(r_0)\left(3\bm{q}(K) + 3\bm{q}(m) + \bm{q}(V) + 2\bm{q} \left( \phi_x(t) \right)\right) \right),
\end{align}
where we used $\bm{q}(\phi_v(t))=O(\bm{q}(\phi_x(t)))$ in the last equality.

From Lemma~\ref{lem:sample-query-est}, for $\zeta \leq \varepsilon/27$ we can estimate $\psi(0) P_{\mathrm{exp}}(H)^\dag V P_{\mathrm{exp}}(H) \psi(0)$ classically with a probability at least $1-\delta$ and an additive error $\varepsilon/3$ in
\begin{align}
& O\left( \frac{\log{(1/\delta)}}{\varepsilon^2}\left(\bm{q}(P_{\mathrm{exp}}(H)^\dag V P_{\mathrm{exp}}(H) \psi(0))+\bm{sq}(\psi(0))\right) \right) \\
\begin{split}
=& O\left( \frac{\log{(1/\delta)}}{\varepsilon^2}\left( d_{\mathrm{exp}}^2 \mathcal{N}(d_{\mathrm{exp}} r_0)\mathcal{N}(r_0) \bm{q}(A) \right. \right. \\
&\hspace{1cm} \left. \vphantom{\frac{\log{(1/\delta)}}{\varepsilon^2}} \left. \vphantom{d_{\mathrm{exp}}^2}
+ d_{\mathrm{exp}} \mathcal{N}(d_{\mathrm{exp}} r_0) \mathcal{N}(r_0)\left(3\bm{q}(K) + 3\bm{q}(m) + \bm{q}(V) + 2\bm{q} \left( \phi_x(t) \right)\right) +\bm{sq}(\psi(0))\right) \right)
\end{split}
\end{align}
-time, where
\begin{equation}
\begin{split}
\bm{q}\left( \phi_x(t) \right) 
&= O\left(d_{\mathrm{exp}}^2 \mathcal{N}(d_{\mathrm{exp}} r_0)\mathcal{N}(r_0) \left( \bm{q}(K)\mathcal{N}(r_0) + 2\bm{q}(m) \right) \right. \\
&\hspace{3cm} \left. \vphantom{d_{\mathrm{exp}}^2}
+ d_{\mathrm{exp}} \mathcal{N}(d_{\mathrm{exp}} r_0) \left( \bm{q}(\dot{x}(0)) + \bm{q}(x(0)) + 2\bm{q}(m) \right) \right).
\end{split}
\end{equation}
Emphasizing the dependence on $t$, $\varepsilon$ and $\delta$, we can write the complexity
\begin{equation}
\tilde{O}\left( \frac{t^{3+2D}\log{(1/\delta)}}{\varepsilon^2} \right).
\end{equation}
\end{proof}

According to Corollary~\ref{cor:energy}, classical computers can calculate kinetic and potential energy in $O(d_{\mathrm{exp}}^{3+2D}/\varepsilon^2)$-time while the runtime of the corresponding quantum algorithm is $O(d_{\mathrm{exp}}/\varepsilon)$~\cite{babbush2023exponential}.
This is one of the demonstrations that the time complexity of classical algorithms differs from that of quantum algorithms only polynomially.

\section{Computational complexity of simulating short-time dynamics under the geometrically local matrix}\label{sec:complexity-short-time-GLI}
In this section, we show that the computational complexity of simulating short-time (i.e. $t=\mathrm{polylog}(N)$) dynamics under the geometrically local matrix is upper-bounded by that of $\mathrm{polylog}(N)$-time probabilistic classical computation and is lower-bounded by that of $\mathrm{polylog}(N)$-time and $O(n)$-space probabilistic classical computation.
To this end, we first show that $\mathrm{polylog}(N)$-time probabilistic classical computers can simulate the short-time dynamics.
In Sec.~\ref{sec:dequantizing-QEVT} and Sec.~\ref{sec:applications}, we have provided the dequantized algorithm for estimating the quantity given by this short-time dynamics.
However, this is insufficient as a result of classical easiness since it does not address the sampling problem, which is considered a more challenging task in the context of quantum advantage.
In the following, we thus show that sampling from the short-time evolved state under the geometrically local matrix can also be simulated classically.

\subsection{Classical sampling simulation of short-time dynamics under the geometrically local matrix}
Here we provide the $\mathrm{polylog}(N)$-time classical algorithm that simulates the sampling from the quantum state which evolved for a short time under the geometrically local matrix of size $N\times N$.
In the following, we assume that the time evolution operator $e^{At}$ is approximated by a polynomial transformation of $A$ such that $\|P_{\mathrm{exp}}(A) - e^{At}\| \leq \varepsilon$, where the degree of $P_{\mathrm{exp}}$ is at most $d_{\mathrm{exp}}=O(\mathrm{poly}(t, n,1/\varepsilon))$.
When $A$ is anti-Hermitian, we can set the degree as $d_{\mathrm{exp}}=O(\|A\| t + \log{(1/\varepsilon)})$ as in Ref.~\cite{gilyen2019quantum}.
The following lemma ensures that the state approximating the time evolved state by polynomial transformation also approximates the output probability distribution in the total variation distance.

\begin{lem}\label{lem:operator-error-to-total-variation}
Let $t\in \mathbb{R}$ be an evolution time and $\varepsilon\in(0,1]$ be a precision.
Suppose that $A$ is a square matrix and $\ket{\psi}$ is an initial state.
Assume that $P_{\mathrm{exp}}$ is a polynomial of degree $d_{\mathrm{exp}}$ that $P_{\mathrm{exp}}(A)$ $\varepsilon$-approximates $e^{At}$ such that $\|P_{\mathrm{exp}}(A) - e^{At}\| \leq \varepsilon$.
Let $p^{\mathrm{app}}\coloneqq p^{P_{\mathrm{exp}}(A)\ket{\psi}}$ and $p^{\mathrm{exact}}\coloneqq p^{e^{At}\ket{\psi}}$ be the output distributions of $P_{\mathrm{exp}}(A)\ket{\psi}$ and $e^{At} \ket{\psi}$.
Then the total variation distance between $p^{\mathrm{app}}$ and $p^{\mathrm{exact}}$ is upper-bounded by
\begin{equation}
|p^{\mathrm{app}}-p^{\mathrm{exact}}|_{\mathrm{tv}} \leq \frac{2\varepsilon}{\| e^{At} \ket{\psi} \|}.
\end{equation}
\end{lem}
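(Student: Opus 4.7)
The plan is to reduce the bound to two cleanly separable estimates: first, a ``measurement-basis'' inequality controlling the total variation distance between $p^{\mathrm{app}}$ and $p^{\mathrm{exp}}$ by the Euclidean distance between the \emph{normalized} vectors; and second, a perturbation bound that relates this normalized distance to the unnormalized additive error $\|P_{\mathrm{exp}}(A)\ket{\psi} - e^{At}\ket{\psi}\|$, which the hypothesis controls directly.

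For the first step I would set $u = P_{\mathrm{exp}}(A)\ket{\psi}$ and $v = e^{At}\ket{\psi}$, with normalized versions $\hat u = u/\|u\|$ and $\hat v = v/\|v\|$. Writing the entrywise difference as $|p^u_i - p^v_i| = \bigl||\hat u_i|^2 - |\hat v_i|^2\bigr| = \bigl||\hat u_i| - |\hat v_i|\bigr|\bigl(|\hat u_i| + |\hat v_i|\bigr)$, summing over $i$, and applying Cauchy--Schwarz together with $(a+b)^2 \leq 2(a^2+b^2)$ and $\|\hat u\| = \|\hat v\| = 1$, I get $\sum_i |p^u_i - p^v_i| \leq 2\|\hat u - \hat v\|$, so by the definition in the preliminaries $|p^u - p^v|_{\mathrm{tv}} \leq \|\hat u - \hat v\|$. (One could alternatively route through the standard trace-distance bound $\sqrt{1-|\langle\hat u|\hat v\rangle|^2} \leq \|\hat u - \hat v\|$, but the direct calculation is equally short.)

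For the second step I would use the telescoping identity $\hat u - \hat v = (u - v)/\|v\| + u(\|v\|-\|u\|)/(\|u\|\|v\|)$. The triangle inequality then gives $\|\hat u - \hat v\| \leq \|u-v\|/\|v\| + \bigl|\|u\|-\|v\|\bigr|/\|v\|$, and applying the reverse triangle inequality $\bigl|\|u\|-\|v\|\bigr| \leq \|u-v\|$ collapses this to $\|\hat u - \hat v\| \leq 2\|u - v\|/\|v\|$. Finally, since $\|\ket{\psi}\|=1$, the operator-norm hypothesis yields $\|u - v\| = \|(P_{\mathrm{exp}}(A) - e^{At})\ket{\psi}\| \leq \varepsilon$, and chaining the three estimates produces $|p^{\mathrm{app}} - p^{\mathrm{exp}}|_{\mathrm{tv}} \leq 2\varepsilon/\|e^{At}\ket{\psi}\|$, as claimed.

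No step is really hard, but the one place requiring a moment of care is the entrywise manipulation: the components $\hat u_i, \hat v_i$ are complex, so it is crucial to invoke $\bigl||a|-|b|\bigr| \leq |a-b|$ before applying Cauchy--Schwarz. Notably, no structural property of $A$ (normality, locality, sparsity) enters the argument; the lemma is a purely linear-algebraic statement about how an operator-norm approximation to the evolution propagates to the induced sampling distribution, which is exactly why it can be used as a ``black-box'' ingredient for the classical sampling simulation developed in the subsequent sections.
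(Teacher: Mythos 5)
Your proof is correct and follows essentially the same two-step route as the paper: bound the Euclidean distance between the normalized vectors by $2\varepsilon/\|e^{At}\ket{\psi}\|$ via a perturbation/reverse-triangle argument, then convert that to total variation distance. The only cosmetic difference is that you prove the norm-to-TV conversion explicitly (via $\bigl||a|-|b|\bigr|\leq|a-b|$ and Cauchy--Schwarz) where the paper simply cites it as a well-known fact.
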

\begin{proof}
Let $\ket{P_{\mathrm{exp}}(A)\psi}_{\mathrm{N}}\coloneqq P_{\mathrm{exp}}(A)\ket{\psi}/\|P_{\mathrm{exp}}(A)\ket{\psi}\|$ and $\ket{e^{At}\psi}_{\mathrm{N}}\coloneqq e^{At}\ket{\psi}/\|e^{At}\ket{\psi}\|$ denote the normalized state and note that $p^{\ket{P_{\mathrm{exp}}(A)\psi}_{\mathrm{N}}} = p^{P_{\mathrm{exp}}(A)\ket{\psi}}=p^{\mathrm{app}}$ and $p^{\ket{e^{At}\psi}_{\mathrm{N}}} = p^{e^{At}\ket{\psi}}=p^{\mathrm{exact}}$.

First, we show that $\|\ket{P_{\mathrm{exp}}(A)\psi}_{\mathrm{N}} - \ket{e^{At}\psi}_{\mathrm{N}}\|\leq 2\varepsilon/ \| e^{At} \ket{\psi} \|$.
For any vectors $u$ and $\phi$ such that $\|\phi\|=1$, we have
\begin{equation}
\left\| \frac{u}{\|u\|} - \phi \right\| 
\leq \left\| \frac{u}{\|u\|} - u \right\| + \left\| u - \phi \right\|
\leq 2\| u - \phi \|.
\end{equation}
Substituting $u=P_{\mathrm{exp}}(A)\ket{\psi}/\| e^{At} \ket{\psi} \|$ and $\phi = e^{At} \ket{\psi}/\| e^{At} \ket{\psi} \|$ into the above inequality, we obtain
\begin{align}
\|\ket{P_{\mathrm{exp}}(A)\psi}_{\mathrm{N}} - \ket{e^{At}\psi}_{\mathrm{N}}\|
&=  \left\| \frac{\| e^{At} \ket{\psi} \|}{\|P_{\mathrm{exp}}(A)\ket{\psi}\|} \frac{P_{\mathrm{exp}}(A)\ket{\psi}}{\| e^{At} \ket{\psi} \|}   - \frac{e^{At}\ket{\psi}}{\| e^{At} \ket{\psi} \|} \right\| \\
&\leq 2 \frac{1}{\| e^{At} \ket{\psi} \|} \left\| P_{\mathrm{exp}}(A)\ket{\psi}  - e^{At}\ket{\psi} \right\| \\
&\leq 2 \frac{\varepsilon}{\| e^{At} \ket{\psi} \|}.
\end{align}

Recall the well-known fact that, for two quantum states $\ket{\Psi}$ and $\ket{\Phi}$, when $\| \ket{\Psi} - \ket{\Phi}\| \leq \varepsilon$, we have $| p^{\ket{\Psi}} - p^{\ket{\Phi}} |_{\mathrm{tv}}\leq \varepsilon$.
Then we obtain
\begin{equation}
|p^{\mathrm{app}}-p^{\mathrm{exact}}|_{\mathrm{tv}} \leq \frac{2\varepsilon}{\| e^{At} \ket{\psi} \|}.
\end{equation}
\end{proof}

From the above lemma, we only have to sample from the probability distribution $p^{P_{\mathrm{exp}}(A)\ket{\psi}}$ of $P_{\mathrm{exp}}(A)\ket{\psi}$.
To this end, we use rejection sampling which is a method to sample from the hard distribution $P_{\mathrm{exp}}(A)\ket{\psi}$ given the sampling-access to some easy distribution $p^{\mathrm{over}}$.
In the standard rejection sampling scheme, we need a query-access to the normalized target state $\ket{P_{\mathrm{exp}}(A)\psi}_{\mathrm{N}}$.
In our case, however, we have a query-access to the unnormalized state $P_{\mathrm{exp}}(A)\ket{\psi}$ but $\ket{P_{\mathrm{exp}}(A)\psi}_{\mathrm{N}}$.
Thus we have to show that rejection sampling works with a query-access to the unnormalized state, i.e., without the exact knowledge of the normalization factor.
The following lemma provides the robust rejection sampling with a query-access to the unnormalized state, which is the slightly modified version of Proposition 4 in Ref.~\cite{gall2023robust}.
Note that, from the triangle inequality, we have $\|e^{At}\ket{\psi}\|-\varepsilon \leq \| P_{\mathrm{exp}}(A)\ket{\psi} \| \leq \|e^{At} \ket{\psi} \| + \varepsilon$.

\begin{lem}[Robust rejection sampling with a query-access to the unnormalized state]\label{lem:rejection-sampling-unnormalized}
Let $p\in \mathbb{R}^N$ be a probability distribution and $u\in \mathbb{C}^N$ is a vector such that $\alpha_{\mathrm{min}}-\varepsilon\leq \|u\| \leq \alpha_{\mathrm{max}} + \varepsilon$ for some known $\alpha_{\mathrm{min}}$ and $\alpha_{\mathrm{max}}$.
Suppose that we are given
\begin{itemize}
    \item $\zeta$-sampling-and-query-access to $p$,
    \item query-access to $u$,
    \item a real number $\phi\geq 1$ such that $\frac{|u_i|^2}{\phi p_i}\leq 1$ holds for all $i\in\{1,\dots,N\}$.
\end{itemize}
Assume that $0<\varepsilon \leq \frac{\alpha_{\mathrm{min}}}{2}$ and $0\leq \zeta\leq \frac{\alpha_{\mathrm{min}}^2}{16\phi}$.
Then we can sample from the probability distribution $p'$ such that $|p' - p^u |_{\mathrm{tv}}\leq 16\phi\zeta/\alpha_{\mathrm{min}}^2$ with a probability $\geq 1-\delta$ in $O((\bm{sq}(p) + \bm{q}(u)) \frac{\phi}{\alpha_{\mathrm{min}}^2} \log{(1/\delta)})$-time.
\end{lem}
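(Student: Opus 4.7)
The plan is to realize the standard rejection-sampling scheme but analyze it carefully so that neither the $\zeta$-error in the source distribution nor the fact that we only have access to $u$ (not its normalized counterpart) spoils the output distribution by more than the claimed $16\phi\zeta/\alpha_{\min}^2$. The key observation is that the acceptance probability $q_i \coloneqq |u_i|^2/(\phi p_i)$ used in the rejection step depends only on entries of $u$ and $p$, not on $\|u\|$; the normalization factor of $p^u$ is then recovered automatically as the ratio of accepted samples across the conditioning, so the unknown norm never enters the algorithm itself.

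The procedure I would describe is: draw $i$ from the $\zeta$-approximate distribution $\tilde p$ via $\mathcal{SQ}_\zeta(p)$, query $p_i$ and $u_i$, compute $q_i=|u_i|^2/(\phi p_i)\le 1$, and accept $i$ with probability $q_i$; repeat until an index is accepted. Write $Z_p = \sum_i p_i q_i = \|u\|^2/\phi$ and $Z_{\tilde p} = \sum_i \tilde p_i q_i$. The output distribution conditional on acceptance is $p'_i = \tilde p_i q_i / Z_{\tilde p}$, while the target satisfies $p^u_i = p_i q_i/Z_p$.

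Next I would perform two short estimates. First, because $q_i\le 1$ and $\sum_i|\tilde p_i-p_i|\le 2\zeta$, one gets $|Z_{\tilde p}-Z_p|\le 2\zeta$. Combined with $\|u\|\ge \alpha_{\min}-\varepsilon\ge \alpha_{\min}/2$ (using $\varepsilon\le \alpha_{\min}/2$) and the assumption $\zeta\le \alpha_{\min}^2/(16\phi)$, this yields $Z_{\tilde p}\ge \alpha_{\min}^2/(8\phi)$. Second, splitting
\begin{equation}
|p'_i-p^u_i| \le \frac{q_i|\tilde p_i-p_i|}{Z_{\tilde p}} + \frac{p_i q_i |Z_p-Z_{\tilde p}|}{Z_p Z_{\tilde p}}
\end{equation}
and summing over $i$ gives $|p'-p^u|_{\mathrm{tv}} \le (2\zeta + 2\zeta)/(2 Z_{\tilde p}) \le 16\phi\zeta/\alpha_{\min}^2$, which is the advertised TV bound.

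Finally, since each independent trial is accepted with probability exactly $Z_{\tilde p}\ge \alpha_{\min}^2/(8\phi)$, a standard geometric tail bound gives acceptance within $O((\phi/\alpha_{\min}^2)\log(1/\delta))$ trials with probability at least $1-\delta$; each trial costs one sample from $\tilde p$, one query to $p$, and one query to $u$, giving the stated runtime $O((\mathbf{sq}(p)+\mathbf{q}(u))(\phi/\alpha_{\min}^2)\log(1/\delta))$. The main technical point — and the only place where one must be careful — is the two-term TV decomposition together with the lower bound on $Z_{\tilde p}$: one must simultaneously control the perturbation of the source distribution and the perturbation of the normalizer, and verify that the stated hypotheses $\varepsilon\le\alpha_{\min}/2$ and $\zeta\le\alpha_{\min}^2/(16\phi)$ are exactly strong enough to keep $Z_{\tilde p}$ a constant fraction of $\|u\|^2/\phi$ so that the algorithm terminates in the claimed time.
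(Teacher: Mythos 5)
Your proposal is correct and follows essentially the same route as the paper: the identical rejection-sampling procedure, the observation that the acceptance probability $Z_{\tilde p}$ deviates from $\|u\|^2/\phi$ by at most $2\zeta$, a two-term decomposition of the total variation error into the source-distribution perturbation and the normalizer perturbation, and a geometric tail bound using $Z_{\tilde p}\geq \alpha_{\min}^2/(8\phi)$. The only difference is cosmetic — you bound both TV terms directly by $2\zeta/Z_{\tilde p}$, whereas the paper first sandwiches $1/p_{\mathrm{acc}}$ in terms of $\phi/\|u\|^2$ — and both yield the stated $16\phi\zeta/\alpha_{\min}^2$.
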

\begin{proof}
Our sampling procedure is as follows:
\begin{enumerate}
    \item Take a sample $i\in\{1,\dots,N\}$ from the probability distribution $\tilde{p}$ such that $|p-\tilde{p}|_{\mathrm{tv}}\leq \zeta$ using $\zeta$-sampling-access to $p$.
    \item Take $t$ uniformly at random in $[0,1]$.
    \item If $t\leq \frac{|u_i|^2}{\phi p_i}$, then output $i$, otherwise output \textit{failure}.
\end{enumerate}

First, we evaluate the success probability $p_{\mathrm{acc}}$ of this procedure.
We have 
\begin{equation}
p_{\mathrm{acc}} 
= \sum_{i=1}^N \tilde{p}_i\frac{|u_i|^2}{\phi p_i} 
= \sum_{i=1}^N p_i\frac{|u_i|^2}{\phi p_i} + \sum_{i=1}^N (\tilde{p}_i - p_i)\frac{|u_i|^2}{\phi p_i}
= \frac{\|u\|^2}{\phi} + \sum_{i=1}^N (\tilde{p}_i - p_i)\frac{|u_i|^2}{\phi p_i},
\end{equation}
and thus, from the assumption $\frac{|u_i|^2}{\phi p_i}\leq 1$ for all $i$,
\begin{equation}
\left| p_{\mathrm{acc}} - \frac{\|u\|^2}{\phi} \right|
= \left| \sum_{i=1}^N (\tilde{p}_i - p_i)\frac{|u_i|^2}{\phi p_i} \right|
\leq \sum_{i=1}^N |\tilde{p}_i - p_i|
= 2 |p-\tilde{p}|_{\mathrm{tv}}
\leq 2 \zeta.
\end{equation}
We obtain
\begin{equation}
\frac{\|u\|^2}{\phi} - 2\zeta \leq p_{\mathrm{acc}} \leq \frac{\|u\|^2}{\phi} + 2\zeta.
\end{equation}
Then we have
\begin{align}
\frac{\phi}{\|u\|^2+2\phi\zeta} \leq &\frac{1}{p_{\mathrm{acc}}} \leq \frac{\phi}{\|u\|^2-2\phi\zeta} \\
\frac{\phi}{\|u\|^2} \left(1-\frac{2\phi\zeta}{\|u\|^2}\right) \leq &\frac{1}{p_{\mathrm{acc}}} \leq \frac{\phi}{\|u\|^2} \left( 1+\frac{4\phi\zeta}{\|u\|^2} \right) 
\end{align}
for $0\leq\zeta\leq \frac{\alpha_{\mathrm{min}}^2}{16\phi} \leq \frac{\|u\|^2}{4\phi}$.

Let $p':\{1,\dots,N\} \rightarrow [0,1]$ be the output probability distribution of the above procedure conditioned on success.
We have
\begin{equation}
p_i' = \frac{\tilde{p}_i|u_i|^2}{p_{\mathrm{acc}} \phi p_i},
\end{equation}
for all $i\in\{1,\dots,N\}$.
We can see that $p'$ is close to the desired distribution $p^u$:
\begin{align}
|p' - p^u |_{\mathrm{tv}}
& = \frac{1}{2} \sum_{i=1}^N \left| \frac{\tilde{p}_i|u_i|^2}{p_{\mathrm{acc}} \phi p_i} - \frac{|u_i|^2}{\|u\|^2} \right| \\
&= \frac{1}{2} \sum_{i=1}^N \left| \frac{p_i|u_i|^2}{p_{\mathrm{acc}} \phi p_i} - \frac{|u_i|^2}{\|u\|^2} \right| + \frac{1}{2} \sum_{i=1}^N \left| \frac{(\tilde{p}_i - p_i)|u_i|^2}{p_{\mathrm{acc}} \phi p_i} \right| \\
&= \frac{1}{2} \sum_{i=1}^N \left| \frac{p_i|u_i|^2}{p_{\mathrm{acc}} \phi p_i} - \frac{|u_i|^2}{\|u\|^2} \right| + \frac{1}{2} \sum_{i=1}^N \left| \frac{(\tilde{p}_i - p_i)|u_i|^2}{p_{\mathrm{acc}} \phi p_i} \right| \\
&\leq \frac{1}{2} \left| \frac{\|u\|^2}{p_{\mathrm{acc}} \phi} - 1 \right| + \frac{1}{p_{\mathrm{acc}}} |\tilde{p} - p |_{\mathrm{tv}} \\
&\leq \frac{2\phi\zeta}{\|u\|^2} + \frac{\phi \zeta}{\|u\|^2} \left( 1+\frac{4\phi\zeta}{\|u\|^2} \right) \\
& \leq \frac{16\phi \zeta}{\alpha_{\mathrm{min}}^2},
\end{align}
for $0<\varepsilon \leq \frac{\alpha_{\mathrm{min}}}{2}$ and $0\leq \zeta\leq \frac{\alpha_{\mathrm{min}}^2}{16\phi}$.

Now we provide the number of repetitions required to obtain the success probability $\geq 1-\delta$.
When we repeat the procedure $s$ times, the failure probability is upper-bounded by
\begin{equation}
\left( 1-p_{\mathrm{acc}} \right)^s 
\leq \left( 1- \frac{\alpha_{\mathrm{min}}^2}{8\phi} \right)^s
\leq e^{-\alpha_{\mathrm{min}}^2 s/8\phi},
\end{equation}
since $p_{\mathrm{acc}} \geq \frac{\|u\|^2}{\phi} - 2\zeta \geq \frac{\alpha_{\mathrm{min}}^2}{4\phi}-\frac{\alpha_{\mathrm{min}}^2}{8\phi}\geq \frac{\alpha_{\mathrm{min}}^2}{8\phi}$.
Thus, if we choose $s=\frac{8\phi}{\alpha_{\mathrm{min}}^2}\log{(1/\delta)}$, the success probability is lower-bounded by $1-\delta$.

The runtime for one trial is given by $2\bm{sq}(p) + \bm{q}(u)$.
Thus, to get one sample, the overall complexity is $O((\bm{sq}(p) + \bm{q}(u)) \frac{\phi}{\alpha_{\mathrm{min}}^2} \log{(1/\delta)})$.
\end{proof}

To perform rejection sampling, we also need a sampling-access to some easy distribution $p^{\mathrm{over}}$ with an appropriate property.
That is, we need an easy distribution which \textit{oversamples} the desired distribution $p^{\ket{P_{\mathrm{exp}}(A)\psi}_{\mathrm{N}}}$ with a small oversampling parameter $\phi$ (defined in Lemma~\ref{lem:rejection-sampling-unnormalized}).
Now we demonstrate how to get this distribution.
Our strategy is just sampling from the initial state $\ket{\psi}$ followed by sampling from the uniform distribution over the region determined by the light cone.

\begin{lem}\label{lem:oversample-GLI}
Suppose that $A\in\mathbb{C}^{N\times N}$ is an $(r_0,\mathcal{N}(r_0))$-geometrically local matrix and we are given
\begin{itemize}
    \item $\zeta$-sampling-and-query-access to a quantum state $\ket{\psi}\in\mathbb{C}^N$.
\end{itemize}
Let $P\in\mathbb{C}[x]$ be a polynomial of degree $d$.
Then we can sample from the probability distribution $p'$ such that $|p' - p^{\mathrm{over}} |_{\mathrm{tv}}\leq \zeta$, where $p^{\mathrm{over}}$ satisfies that $\frac{|(P(A)\ket{\psi})_i|^2}{\phi p_i^{\mathrm{over}}}\leq 1$ for all $i\in\{1,\dots,N\}$ and $\phi = \|P(A)\|^2 \mathcal{N}(d r_0)$, in $\bm{sq}(\psi)$-time.
\end{lem}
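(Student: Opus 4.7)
The plan is to take $p^{\mathrm{over}}$ to be a smeared version of $|\psi|^2$ over the light cone of $P(A)$. By Lemma~\ref{lem:GLI-power}, $P(A)$ is itself a $(d r_0, \mathcal{N}(d r_0))$-geometrically local matrix, so $(P(A))_{ij} = 0$ whenever $j \notin B(i)$, where $B(i) \coloneqq \{k : d(i,k) \leq d r_0\}$, and $|B(j)| \leq \mathcal{N}(d r_0)$ for every $j$. I define
\begin{equation}
    p^{\mathrm{over}}_i \coloneqq \sum_{j \in B(i)} \frac{|\psi_j|^2}{|B(j)|}.
\end{equation}
Swapping the order of summation and using the symmetry $i \in B(j) \Leftrightarrow j \in B(i)$ gives $\sum_i p^{\mathrm{over}}_i = \sum_j |\psi_j|^2 = 1$, so $p^{\mathrm{over}}$ is a genuine probability distribution.

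The sampling algorithm is the natural two-step procedure hinted at in the paragraph preceding the lemma: draw $j$ from the $\zeta$-approximation $\tilde p$ of $|\psi|^2$ supplied by $\mathcal{SQ}_{\zeta}(\ket{\psi})$, then output an $i$ chosen uniformly from the ball $B(j)$, whose elements can be listed from the known lattice structure. The resulting distribution is $p'_i = \sum_{j \in B(i)} \tilde p_j / |B(j)|$, and a one-line triangle-inequality computation (swap sums, use $\sum_{i \in B(j)} 1/|B(j)| = 1$) gives $|p' - p^{\mathrm{over}}|_{\mathrm{tv}} \leq |\tilde p - p^{\ket{\psi}}|_{\mathrm{tv}} \leq \zeta$. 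Each sample uses one call to the sampling oracle plus the constant-time choice of a uniform point in $B(j)$, which I absorb into $\bm{sq}(\psi)$.

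The main technical step is proving the oversampling bound $|(P(A)\ket{\psi})_i|^2 \leq \phi\, p^{\mathrm{over}}_i$ with $\phi = \|P(A)\|^2 \mathcal{N}(d r_0)$. I would apply the Cauchy--Schwarz inequality restricted to $B(i)$, namely
\begin{equation}
    |(P(A)\ket{\psi})_i|^2 = \Bigl|\sum_{j \in B(i)}(P(A))_{ij}\,\psi_j\Bigr|^2 \leq \Bigl(\sum_{j \in B(i)} |(P(A))_{ij}|^2\Bigr)\Bigl(\sum_{j \in B(i)} |\psi_j|^2\Bigr),
\end{equation}
bound the first factor by $\sum_{j} |(P(A))_{ij}|^2 = \|\bra{i} P(A)\|^2 \leq \|P(A)\|^2$, and then use $|B(j)| \leq \mathcal{N}(d r_0)$ to conclude $\sum_{j \in B(i)} |\psi_j|^2 \leq \mathcal{N}(d r_0) \sum_{j \in B(i)} |\psi_j|^2/|B(j)| = \mathcal{N}(d r_0)\, p^{\mathrm{over}}_i$.

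The subtle design choice, and the main place where a naive approach would fail, is to normalize each $|\psi_j|^2$ by the local ball size $|B(j)|$ rather than by the worst-case value $\mathcal{N}(d r_0)$. This is what simultaneously makes $p^{\mathrm{over}}$ sum to one and prevents an extra factor of $\mathcal{N}(d r_0)$ from entering $\phi$; the cruder uniform normalization would instead yield $\phi = \|P(A)\|^2 \mathcal{N}(d r_0)^2$, which would later inflate the rejection-sampling cost in Lemma~\ref{lem:rejection-sampling-unnormalized} by the same factor.
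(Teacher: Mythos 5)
Your proof is correct and follows essentially the same route as the paper: smear $|\psi|^2$ over the radius-$d r_0$ light cone to build $p^{\mathrm{over}}$, realize it by the two-step sampler (sample from $\tilde p$, then uniformly over the ball), and prove the oversampling bound via Cauchy--Schwarz on the $i$-th row of $P(A)$ restricted to $B(i)$ together with $\|\bra{i}P(A)\|\leq\|P(A)\|$. The paper phrases the Cauchy--Schwarz step more verbosely (completing $a/\|a\|$ to an orthonormal basis of the $\mathcal{N}(d r_0)$-dimensional subspace), but the inequality is identical, and both arguments land on the same $\phi=\|P(A)\|^2\mathcal{N}(d r_0)$. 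Your one substantive deviation --- weighting each $|\psi_j|^2$ by $1/|B(j)|$ instead of the uniform $1/\mathcal{N}(d r_0)$ used in the paper --- is a genuine (minor) improvement: it makes $p^{\mathrm{over}}$ sum exactly to one and makes the two-step sampler's output distribution match the claimed formula even when ball sizes vary, whereas the paper's expressions are exact only for a lattice with uniform ball size. Your closing remark is slightly off, though: the uniform normalization does not inflate $\phi$ to $\|P(A)\|^2\mathcal{N}(d r_0)^2$ (the paper obtains the same $\phi$ with it, since $\sum_{j\in B(i)}|\psi_j|^2=\mathcal{N}(d r_0)\,p_i^{\mathrm{over}}$ under that convention); its actual drawback is only that $p^{\mathrm{over}}$ becomes subnormalized.
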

\begin{proof}
Our sampling procedure is as follows:
\begin{itemize}
    \item Take a sample $i\in\{1,\dots,N\}$ from the probability distribution $\tilde{p}^{\ket{\psi}}$ such that $|\tilde{p}^{\ket{\psi}} - p^{\ket{\psi}} |_{\mathrm{tv}} \leq \zeta$ using $\zeta$-sampling-access to $\ket{\psi}$.
    \item Take $j$ uniformly at random in $\{ j| d(i,j)\leq d r_0 \}$.
\end{itemize}
This procedure generates the probability distribution $p'$ such that
\begin{equation}
p_i' = \sum_{j:d(i,j)\leq d r_0} \frac{1}{|\{ k| d(j,k)\leq d r_0 \}|} \tilde{p}_j^{\ket{\psi}}.
\end{equation}
When $\zeta=0$, $p'$ coincides with the desired probability distribution $p^{\mathrm{over}}$ such that
\begin{equation}
p_i^{\mathrm{over}} = \sum_{j:d(i,j)\leq d r_0} \frac{1}{|\{ k| d(j,k)\leq d r_0 \}|} p_j^{\ket{\psi}} = \sum_{j:d(i,j)\leq d r_0} \frac{1}{|\{ k| d(j,k)\leq d r_0 \}|} |\psi_j|^2.
\end{equation}

First, we show that $p'$ is close to $p^{\mathrm{over}}$:
\begin{align}
|p' - p^{\mathrm{over}} |_{\mathrm{tv}}
& = \frac{1}{2} \sum_{i=1}^N \left| p_i' - p_i^{\mathrm{over}} \right| \\
& \leq \frac{1}{2} \sum_{i=1}^N \sum_{j:d(i,j)\leq d r_0} \frac{1}{|\{ k| d(j,k)\leq d r_0 \}|} \left| \tilde{p}_j^{\ket{\psi}} - p_j^{\ket{\psi}} \right| \\
& = \frac{1}{2} \sum_{i=1}^N \sum_{j:d(i,j)\leq d r_0} \frac{1}{|\{ k| d(i,k)\leq d r_0 \}|} \left| \tilde{p}_i^{\ket{\psi}} - p_i^{\ket{\psi}} \right| \\
& \leq \frac{1}{2} \sum_{i=1}^N \left| \tilde{p}_i^{\ket{\psi}} - p_i^{\ket{\psi}} \right| \\
& = |\tilde{p}^{\ket{\psi}} - p^{\ket{\psi}} |_{\mathrm{tv}} \\
& \leq \zeta .
\end{align}

Now we show that $p^{\mathrm{over}}$ $\|P(A)\|^2 \mathcal{N}(d r_0)$-oversamples $P(A)\ket{\psi}$.
First, we consider the subnormalized state $\ket{P(A)\psi}_{\mathrm{SN}}=P(A)\ket{\psi}/\|P(A)\|$.
Let $b_i$ denote the $i$-th element of $\ket{P(A)\psi}_{\mathrm{SN}}$.
Due to the locality of $P(A)$ (see Lemma~\ref{lem:GLI-power}), $b_i$ can be written as a linear combination of $\mathcal{N}(d r_0)$ terms of $\ket{\psi}$
\begin{equation}
    b_i = \sum_{j:d(i,j)\leq d r_0} a_j^* \psi_j = a^\dag \widetilde{\psi},
\end{equation}
where $a$ and $\widetilde{\psi}$ are $\mathcal{N}(d r_0)$-dimensional vectors with elements $a_i$ and $\psi_i$.
Note that $a$ satisfies $\|a\| \leq 1$ since $\left\|\frac{P(A)}{\|P(A)\|}\right\| = 1$.
Define the normalized vector $\lambda_1\coloneqq a/\|a\|$ and construct a orthonormal basis $\{\lambda_1,\dots,\lambda_{\mathcal{N}(d r_0)}\}$ of $\mathcal{N}(d r_0)$-dimensional subspace.
Then we can observe that
\begin{equation}
\sum_{j:d(i,j)\leq d r_0} |\psi_j|^2
=\widetilde{\psi}^\dag \widetilde{\psi} = \widetilde{\psi}^\dag \left(\sum_i \lambda_i \lambda_i^\dag \right) \widetilde{\psi}
= \sum_i \left| \lambda_i^\dag \widetilde{\psi} \right|^2
\geq \left| \lambda_1^\dag \widetilde{\psi} \right|^2
=\frac{1}{\|a\|^2} \left| a^\dag \widetilde{\psi} \right|^2
\geq |b_i|^2.
\end{equation}
Combined with the fact that
\begin{equation}
p_i^{\mathrm{over}} = \sum_{j:d(i,j)\leq d r_0} \frac{1}{|\{ k| d(j,k)\leq d r_0 \}|} |\psi_j|^2
\geq \frac{1}{\mathcal{N}(d r_0)} \sum_{j:d(i,j)\leq d r_0} |\psi_j|^2,
\end{equation}
we obtain
\begin{equation}
\frac{|(\ket{P(A)\psi}_{\mathrm{SN}})_i|^2}{\mathcal{N}(d r_0) p_i^{\mathrm{over}}}
\leq \frac{|b_i|^2}{\sum_{j:d(i,j)\leq d r_0} |\psi_j|^2}
\leq 1
\end{equation}
for all $i$.
This leads to
\begin{equation}
\frac{|(P(A)\ket{\psi})_i|^2}{\|P(A)\|^2 \mathcal{N}(d r_0)  p_i^{\mathrm{over}}}
=\frac{|(\ket{P(A)\psi}_{\mathrm{SN}})_i|^2}{\mathcal{N}(d r_0) p_i^{\mathrm{over}}}
\leq 1,
\end{equation}
which implies that $p^{\mathrm{over}}$ $\|P(A)\|^2 \mathcal{N}(d r_0)$-oversamples $P(A)\ket{\psi}$.

It is easy to see that this procedure only requires a cost of $\bm{sq}(\psi)$.
\end{proof}

Now we are ready to sample from the desired distribution $p^{e^{At}\ket{\psi}}$.

\begin{thm}\label{thm:classical-simulation-sampling-1D-GLI}
Let $t\in \mathbb{R}$ be an evolution time.
Suppose that $A\in\mathbb{C}^{N\times N}$ is an $(r_0,\mathcal{N}(r_0))$-geometrically local matrix with spatial locality $r_0\in\mathbb{R}_{+}$, $\ket{\psi}$ is a quantum state and we are given
\begin{itemize}
    \item $\zeta$-sampling-and-query-access to $\ket{\psi}$,
    \item query-access to $A$.
\end{itemize}
Let $\alpha_{\mathrm{min}}\leq \|e^{At}\ket{\psi}\|$ be an lower bound on norm of the solution vector and $\alpha_{\mathrm{exp}}\geq \|e^{At}\|$ be an upper bound on norm of the time evolution operator.
Assume that $P_{\mathrm{exp}}$ is a polynomial of degree $d_{\mathrm{exp}}$ such that $P_{\mathrm{exp}}(A)$ $\varepsilon$-approximates $e^{At}$ such that $\|P_{\mathrm{exp}}(A) - e^{At}\| \leq \varepsilon$, and that $0<\varepsilon \leq \frac{\alpha_{\mathrm{min}}}{2}$ and $0\leq \zeta\leq \frac{\alpha_{\mathrm{min}}^2}{64 \alpha_{\mathrm{exp}}^2 \mathcal{N}(d_{\mathrm{exp}} r_0)}$.
Let $p^{\mathrm{exact}}\coloneqq p^{e^{At}\ket{\psi}}$ be the output distributions of $e^{At} \ket{\psi}$.
Then we can sample from the probability distribution $\widetilde{p}$ such that
\begin{equation}
|\widetilde{p}-p^{\mathrm{exact}}|_{\mathrm{tv}} \leq \frac{64 \alpha_{\mathrm{exp}}^2 \mathcal{N}(d_{\mathrm{exp}} r_0) \zeta}{\alpha_{\mathrm{min}}^2} + \frac{2\varepsilon}{\alpha_{\mathrm{min}}},
\end{equation}
classically with a probability $\geq 1-\delta$ in $O\left(\left(d_{\mathrm{exp}}^2 \mathcal{N}(r_0) \bm{q}(A) + d_{\mathrm{exp}} \bm{sq}(\psi)\right) \mathcal{N}(d_{\mathrm{exp}} r_0)^2  \frac{\alpha_{\mathrm{exp}}^2}{\alpha_{\mathrm{min}}^2} \log{(1/\delta)}\right)$-time.
\end{thm}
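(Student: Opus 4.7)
The plan is to split the total variation error into two contributions: one from approximating $e^{At}$ by its polynomial proxy $P_{\mathrm{exp}}(A)$, and one from approximately sampling from $p^{P_{\mathrm{exp}}(A)\ket{\psi}}$ itself. Lemma~\ref{lem:operator-error-to-total-variation} immediately gives $|p^{P_{\mathrm{exp}}(A)\ket{\psi}} - p^{\mathrm{exp}}|_{\mathrm{tv}} \leq 2\varepsilon/\|e^{At}\ket{\psi}\| \leq 2\varepsilon/\alpha_{\mathrm{min}}$, which supplies the second term of the claimed bound. The rest of the argument is devoted to constructing a sampler for a distribution close to $p^{P_{\mathrm{exp}}(A)\ket{\psi}}$.

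To sample from $p^{P_{\mathrm{exp}}(A)\ket{\psi}}$ without knowing the normalization of $P_{\mathrm{exp}}(A)\ket{\psi}$ exactly, I would apply the robust rejection sampling primitive of Lemma~\ref{lem:rejection-sampling-unnormalized} with target vector $u = P_{\mathrm{exp}}(A)\ket{\psi}$. Query-access to $u$ is obtained directly from Lemma~\ref{lem:query-GLI-poly} at cost $\bm{q}(u) = O(d_{\mathrm{exp}}^2 \mathcal{N}(d_{\mathrm{exp}} r_0)\mathcal{N}(r_0)\bm{q}(A) + d_{\mathrm{exp}}\mathcal{N}(d_{\mathrm{exp}} r_0)\bm{q}(\psi))$, while an oversampling distribution $p^{\mathrm{over}}$ with parameter $\phi = \|P_{\mathrm{exp}}(A)\|^2 \mathcal{N}(d_{\mathrm{exp}} r_0)$ is provided by Lemma~\ref{lem:oversample-GLI} at per-sample cost $\bm{sq}(\psi)$, with its own sampling error inherited from the access to $\ket{\psi}$.

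To check that Lemma~\ref{lem:rejection-sampling-unnormalized} applies, I would use the triangle inequality $\|P_{\mathrm{exp}}(A)\| \leq \|e^{At}\| + \varepsilon \leq \alpha_{\mathrm{exp}} + \varepsilon$ together with $\varepsilon \leq \alpha_{\mathrm{min}}/2 \leq \alpha_{\mathrm{exp}}/2$ (noting $\alpha_{\mathrm{min}} \leq \|e^{At}\ket{\psi}\| \leq \alpha_{\mathrm{exp}}$) to obtain $\phi \leq 4\alpha_{\mathrm{exp}}^2\mathcal{N}(d_{\mathrm{exp}} r_0)$. A second use of the triangle inequality gives the norm window $\alpha_{\mathrm{min}} - \varepsilon \leq \|u\| \leq \alpha_{\mathrm{exp}} + \varepsilon$ required by the lemma. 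The hypothesis $\zeta \leq \alpha_{\mathrm{min}}^2/(64\alpha_{\mathrm{exp}}^2\mathcal{N}(d_{\mathrm{exp}} r_0))$ then implies $\zeta \leq \alpha_{\mathrm{min}}^2/(16\phi)$, so the lemma outputs a distribution $\widetilde{p}$ with $|\widetilde{p} - p^{P_{\mathrm{exp}}(A)\ket{\psi}}|_{\mathrm{tv}} \leq 16\phi\zeta/\alpha_{\mathrm{min}}^2 \leq 64\alpha_{\mathrm{exp}}^2\mathcal{N}(d_{\mathrm{exp}} r_0)\zeta/\alpha_{\mathrm{min}}^2$. Combining this with the polynomial approximation bound via a triangle inequality on total variation yields the claimed error estimate.

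The runtime then follows by plugging $\bm{sq}(p^{\mathrm{over}}) = O(\bm{sq}(\psi))$, the expression for $\bm{q}(u)$ above, and $\phi/\alpha_{\mathrm{min}}^2 = O(\alpha_{\mathrm{exp}}^2\mathcal{N}(d_{\mathrm{exp}} r_0)/\alpha_{\mathrm{min}}^2)$ into the $O((\bm{sq}(p^{\mathrm{over}}) + \bm{q}(u))\phi\log(1/\delta)/\alpha_{\mathrm{min}}^2)$ cost from Lemma~\ref{lem:rejection-sampling-unnormalized}; an additive $\bm{sq}(\psi)$ factor can be absorbed into the $d_{\mathrm{exp}}\bm{sq}(\psi)$ term using $\bm{sq}(\psi) \geq \bm{q}(\psi)$, which recovers the stated bound. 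No individual step is deep; the main subtlety is careful bookkeeping of the two independent error sources, namely $\varepsilon$ from the polynomial approximation of $e^{At}$ and $\zeta$ from the input sampling-access to $\ket{\psi}$, in tandem with the fact that only loose, two-sided norm bounds on the unnormalized target vector are available, which is precisely the setting that motivates the unnormalized variant of robust rejection sampling.
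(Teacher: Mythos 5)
Your proposal matches the paper's proof essentially step for step: the same split of the total-variation error via Lemma~\ref{lem:operator-error-to-total-variation}, the same use of Lemma~\ref{lem:oversample-GLI} to build the oversampling distribution with $\phi \leq 4\alpha_{\mathrm{exp}}^2\mathcal{N}(d_{\mathrm{exp}}r_0)$, the same application of the robust unnormalized rejection sampling of Lemma~\ref{lem:rejection-sampling-unnormalized} with $u = P_{\mathrm{exp}}(A)\ket{\psi}$, and the same runtime bookkeeping via Lemma~\ref{lem:query-GLI-poly}. The argument is correct as written and no different in substance from the paper's.
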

\begin{proof}
Let $p^{\mathrm{app}}\coloneqq p^{P_{\mathrm{exp}}(A)\ket{\psi}}$ be the output distributions of $P_{\mathrm{exp}}(A) \ket{\psi}$.
Then, from Lemma~\ref{lem:operator-error-to-total-variation}, we have
\begin{equation}
|p^{\mathrm{app}}-p^{\mathrm{exact}}|_{\mathrm{tv}} \leq \frac{2\varepsilon}{\| e^{At} \ket{\psi} \|}.
\end{equation}
Thus we consider how to sample from $p^{\mathrm{app}}$.

Our sampling procedure is as follows:
\begin{enumerate}
    \item Take a sample $i\in\{1,\dots,N\}$ from the probability distribution $\tilde{p}^{\mathrm{over}}$ such that $|\tilde{p}^{\mathrm{over}} - p^{\mathrm{over}} |_{\mathrm{tv}} \leq \zeta$, where $p^{\mathrm{over}}$ $\|P_{\mathrm{exp}}(A)\|^2 \mathcal{N}(d_{\mathrm{exp}} r_0)$-oversamples $P_{\mathrm{exp}}(A)\ket{\psi}$.
    \item Take $t$ uniformly at random in $[0,1]$.
    \item If $t\leq \frac{|(P_{\mathrm{exp}}(A) \ket{\psi})_i|^2}{4\alpha_{\mathrm{exp}}^2 \mathcal{N}(d_{\mathrm{exp}} r_0) p_i^{\mathrm{over}}}$, then output $i$, otherwise output \textit{failure}.
\end{enumerate}
According to Lemma~\ref{lem:oversample-GLI}, we can sample from $\tilde{p}^{\mathrm{over}}$ in $\bm{sq}(\psi)$-time.
Note that $\tilde{p}^{\mathrm{over}}$ also $4\alpha_{\mathrm{exp}}^2 \mathcal{N}(d_{\mathrm{exp}} r_0)$-oversamples $P_{\mathrm{exp}}(A)\ket{\psi}$ since $\|P_{\mathrm{exp}}(A)\| \leq \|e^{At}\| + \varepsilon \leq 2\alpha_{\mathrm{exp}}$.
Now we have $\zeta$-sampling-and-query-access to $p^{\mathrm{over}}$.
Combined with Lemma~\ref{lem:rejection-sampling-unnormalized} for $p=p^{\mathrm{over}}$, $u=P_{\mathrm{exp}}(A)\ket{\psi}$ and $\phi = 4\alpha_{\mathrm{exp}}^2 \mathcal{N}(d_{\mathrm{exp}} r_0)$, we can sample from the probability distribution $\widetilde{p}$ such that $|\widetilde{p}-p^{\mathrm{app}}|_{\mathrm{tv}} \leq 64 \alpha_{\mathrm{exp}}^2 \mathcal{N}(d_{\mathrm{exp}} r_0) \zeta/\alpha_{\mathrm{min}}^2$ with a probability $\geq 1-\delta$ in $O((\bm{sq}(p^{\mathrm{over}}) + \bm{q}(P_{\mathrm{exp}}(A)\ket{\psi})) \frac{\alpha_{\mathrm{exp}}^2 \mathcal{N}(d_{\mathrm{exp}} r_0)}{\alpha_{\mathrm{min}}^2} \log{(1/\delta)})$-time.
Considering that $\bm{sq}(p^{\mathrm{over}})=\bm{sq}(\psi)$ and 
\begin{equation}
\bm{q}(P_{\mathrm{exp}}(A)\ket{\psi}) = O\left(d_{\mathrm{exp}}^2 \mathcal{N}(d_{\mathrm{exp}} r_0)\mathcal{N}(r_0) \bm{q}(A) + d_{\mathrm{exp}} \mathcal{N}(d_{\mathrm{exp}} r_0) \bm{q}(\psi) \right),
\end{equation}
(see Lemma~\ref{lem:query-GLI-poly}), we obtain the overall complexity
\begin{align}
&O\left((\bm{sq}(\psi) + d_{\mathrm{exp}}^2 \mathcal{N}(d_{\mathrm{exp}} r_0)\mathcal{N}(r_0) \bm{q}(A) + d_{\mathrm{exp}} \mathcal{N}(d_{\mathrm{exp}} r_0) \bm{q}(\psi)) \mathcal{N}(d_{\mathrm{exp}} r_0) \frac{\alpha_{\mathrm{exp}}^2}{\alpha_{\mathrm{min}}^2} \log{(1/\delta)}\right) \\
=& O\left(\left(d_{\mathrm{exp}}^2 \mathcal{N}(r_0) \bm{q}(A) + d_{\mathrm{exp}} \bm{sq}(\psi)\right) \mathcal{N}(d_{\mathrm{exp}} r_0)^2 \frac{\alpha_{\mathrm{exp}}^2}{\alpha_{\mathrm{min}}^2} \log{(1/\delta)}\right) .
\end{align}
\end{proof}

It is noteworthy that the complexity of quantum algorithms for generating the solution state $e^{At} \ket{\psi}/\| e^{At} \ket{\psi} \|$ is of the form
\begin{equation}
    \tilde{O}\left(\frac{\max_{0\leq \tau \leq t}\|e^{A\tau}\|}{\| e^{At} \ket{\psi} \|} d_{\mathrm{exp}} \right),
\end{equation}
see e.g. Refs.~\cite{fang2023time, berry2024quantum, an2023linear, an2023quantum, krovi2023improved, low2024quantum, low2024quantumlinear}.
In this work, we only consider $O(n)$-qubit quantum algorithm that runs in $\mathrm{poly}(t,n)$-time, which means that $\frac{\|e^{At}\|}{\|e^{At}\ket{\psi}\|} \leq \frac{\alpha_{\mathrm{exp}}}{\alpha_{\mathrm{min}}} =\mathrm{poly}(t,n)$ and $d_{\mathrm{exp}}=\mathrm{poly}(t,n)$.
As a result, our classical algorithm runs in $\mathrm{poly}(t,n)$-time and $\mathrm{poly}(t,n)$-space.
Thus the $O(n)$-qubit quantum algorithm for short-time dynamics under the geometrically local matrix can be simulated classically even when we consider the sampling from the solution state.

\subsection{Short-time dynamics of the 1D geometrically local Hamiltonian can simulate classical computation}\label{subsec:1DGLI-simulate-QC-BPgate}
Next we show the opposite direction, that is, the computational complexity of simulating short-time ($\mathrm{polylog}(N)$-time) dynamics under the geometrically local matrix is at least as hard as that of $\mathrm{polylog}(N)$-time and $O(n)$-space probabilistic classical computation.
To this end, we embed the classical reversible circuit to some simple geometrically local Hamiltonian.
In the following, we consider the system of classical coupled harmonic oscillators to show a hardness result for a physically meaningful system.

First, we construct the Hamiltonian based on the standard Feynman-Kitaev circuit-to-Hamiltonian construction~\cite{feynman1986quantum, kitaev2002classical}.
To embed the circuit to the Hamiltonian of classical coupled harmonic oscillators, we use the following construction:
\begin{equation}\label{eq:FK-classical-reversible}
A_{\mathrm{FK}} \coloneqq 2I - \sum_{l=1}^L \left(  \ket{l+1}\bra{l} \otimes U_l + \ket{l}\bra{l+1} \otimes U_l^\dag \right),
\end{equation}
where $U = U_{L} \dots U_1$ is a $n$-qubit unitary operator generated by $L=\mathrm{poly}(n)=\mathrm{polylog}(N)$ classical reversible gates, e.g., $\{ X, \mathrm{CNOT}, \mathrm{Toffoli} \}$.
When $U_1,\dots,U_L$ are classical reversible gates, the above Hamiltonian $A_{\mathrm{FK}}\in \mathbb{C}^{(L+1)2^n \times (L+1)2^n}$ can be viewed as $(1,\mathcal{N}(1))$-geometrically local Hamiltonian on one-dimensional lattice with $\mathcal{N}(r)=2r+1$.
Intuitively, this Hamiltonian can be understood as tracking the classical computation for all $2^n$ possible inputs, and therefore it can be viewed as a collection of $2^n$ independent one-dimensional geometrically local Hamiltonians of length $L+1$.
We can see this formally as follows:
We consider the two-dimensional lattice of size $2^n \times (L+1)$, whose sites correspond to some computational basis state, as shown in Fig.~\ref{fig:1d_system_classical_reversible_circuit}.
The indices of this lattice is determined as follows:
The sites in the $l=1$-th column represent the input computational basis states. 
For simplicity, we assign them from top to bottom as $\ket{0},\ket{1},\dots,\ket{2^n-1}$, and label them the indices $(0,1),(1,1), \dots ,(2^n-1, 1)$.
The sites in the $l=2$-th column represent the output computational basis states after $U_1$ is applied.
That is, if $\ket{f_1(i)} \coloneqq U_1\ket{i}$, then the sites named $(i,1)$ and $(f_1(i),2)$ are connected.
In this way, we can determine all indices in the $l=2$-th column uniquely since $U_1$ is a classical reversible gate (i.e., basis-preserving).
In the same manner, we can also uniquely assign the indices for the sites in the columns $l=2,\dots , L+1$.
For example, the site in the $k$-th row and the $l$-th column has the index $(f_{l-1}\circ\dots \circ f_1(k),l)$, where $\ket{f_{l-1}\circ\dots \circ f_1(k)}\coloneqq U_{l-1}\dots U_1 \ket{k}$, and corresponds to the computational basis state $\ket{l}\ket{f_{l-1}\circ\dots \circ f_1(k)}$.
From the above discussion, we can map the interactions of the Hamiltonian $A_{\mathrm{FK}}$ onto the two-dimensional lattice in Fig.~\ref{fig:1d_system_classical_reversible_circuit} with $m_i=1$, $\kappa_{ii}=0$, and $\kappa_{ij}\in \{0,1\}$ ($i\neq j$), which is inherently one-dimensional and has nearest neighbor interactions $\mathcal{N}(r)=2r+1$.
It should be noted that this argument is solely intended to demonstrate geometric locality, and that no explicit classical computation is required when the quantum algorithm simulates the corresponding dynamics.
Also, the Feynman–Kitaev Hamiltonian may appear to be geometrically local even for a universal gate set since that Hamiltonian can be regarded as a one-dimensional tight-binding model on the clock register.
However, this does not hold, since geometric locality in classical systems is defined with respect to the computational basis state, and the dequantized algorithm use the sampling in this basis.

\begin{figure}
\centerline{
\includegraphics[width=120mm, page=1]{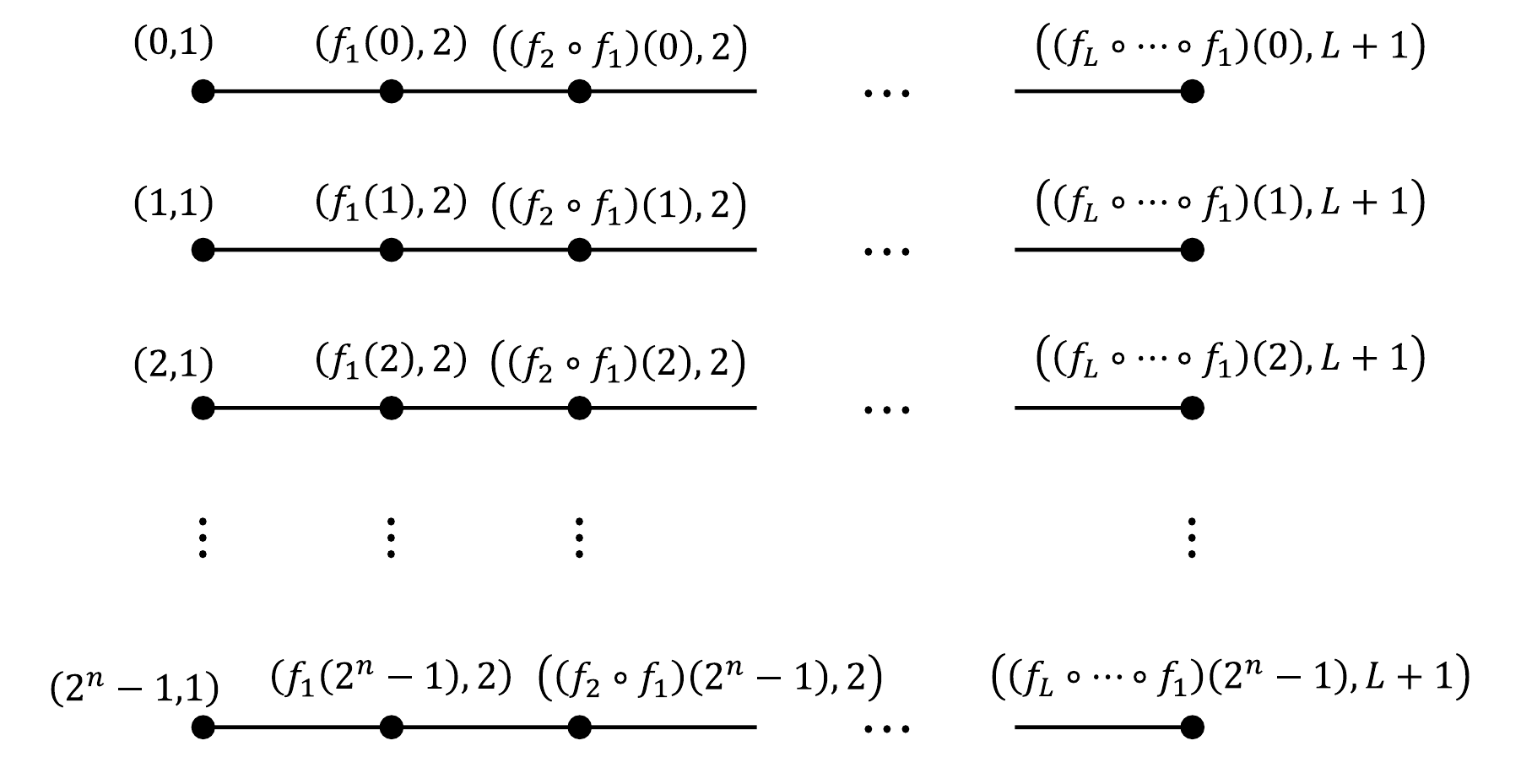}
}
\caption{The underlying two-dimensional (inherently one-dimensional) lattice of the reduced Hamiltonian in Eq.~\eqref{eq:FK-classical-reversible}.}
\label{fig:1d_system_classical_reversible_circuit}
\end{figure}

Next, we consider the time evolution of classical dynamics discussed in Sec.~\ref{subsec:classical}:
\begin{equation}
\dot{y}(t) + i \sqrt{A_{\mathrm{FK}}} y(t) = e^{i \sqrt{A_{\mathrm{FK}}} t} \left( \dot{y}(0) + i \sqrt{A_{\mathrm{FK}}} y(0) \right),
\end{equation}
with $y(0)=0$ and $\dot{y}(0) = \ket{1} \ket{\psi_0}$, and discuss how to obtain the sufficiently large overlap with the desired state for some time $t=O(L^2 \log{(L)})$.
The discussion below is well-studied (see e.g.~\cite{peres1985reversible, babbush2023exponential, stroeks2024solving}), so we provide a sketch of the proof.
The Feynman-Kitaev Hamiltonian can be expressed as the product of two operators
\begin{equation}
W\coloneqq \sum_{l=0}^L \ket{l+1}\bra{l+1} \otimes U_l \dots U_0,\quad
J\coloneqq 2I - \sum_{l=1}^L \left(  \ket{l+1}\bra{l} + \ket{l}\bra{l+1} \right) \otimes I,
\end{equation}
such that $H_{\mathrm{FK}}=WJW^\dag$.
Then the solution to Newton's equation is written as
\begin{equation}
\begin{split}
\dot{y}(t)
&=\mathrm{Re}\left( e^{i \sqrt{A_{\mathrm{FK}}} t} \ket{1} \ket{\psi_0} \right)
= \cos{\left( \sqrt{A_{\mathrm{FK}}} t \right)} \ket{1} \ket{\psi_0} \\
&\hspace{3cm} = W \cos{\left( \sqrt{J} t \right)} W^\dag \ket{1} \ket{\psi_0} 
= \sum_{l=1}^{L+1} \alpha_l(t) \ket{l} \otimes U_{l-1}\dots U_0\ket{\psi_0} ,
\end{split}
\end{equation}
where the coefficients $\alpha_l(t)$ is given by
\begin{equation}
\cos{\left( \sqrt{J} t \right)} \ket{1} = \sum_{l=1}^{L+1} \alpha_l(t) \ket{l}.
\end{equation}
Thus we only have to analyze the time evolution under $J$ to achieve the inverse polynomial overlap $|\alpha_{L+1}(t)|^2=\Omega(1/\mathrm{poly}(n))=\Omega(1/\mathrm{polylog}(N))$.
According to Ref.~\cite{babbush2023exponential}, there exists a probability distribution $p(t)$ such that 
\begin{equation}
\left| \sum_{t=0}^T p(t) |\alpha_{L+1}(t)|^2 - \frac{3}{4(L+2)} \right| \leq 2\varepsilon,
\end{equation}
for $T=O(L^2 \log{(1/\varepsilon)})$.
Choosing $\varepsilon = 1/(4(L+2))$, we obtain that $\sum_{t=0}^T p(t) |\alpha_{L+1}(t)|^2=\Omega (1/L)$ for $T=O(L^2 \log{(L)})$.
This implies that there must exist a $t=O(L^2\log{(L)})$ such that $|\alpha_{L+1}(t)|^2=\Omega (1/L)$.
Thus we can simulate the classical reversible circuit $U_{L}\dots U_0\ket{\psi_0}$ in $\mathrm{polylog}(N)$-time by simulating this classical system for all $t\in \{1,\dots ,T=\mathrm{polylog}(N)\}$.
It is easy to see that the Hamiltonian discussed above can be simulated by an $O(n)$-qubit quantum algorithm in $\mathrm{polylog}(N)$-time.
This finishes the proof.

Note that we can also establish the perfect overlap $|\alpha_{L+1}(t)|^2 = 1$ for $t=2\pi L =O(L)$ by adjusting the weight of $J$ as in Ref.~\cite{babbush2023exponential} (see also Ref.~\cite{peres1985reversible}).
This problem is known as perfect state transfer~\cite{christandl2004perfect, kay2010perfect, bosse2017coherent}, and we can use the ``mass–spring chains'' version of this result~\cite{vaia2020persymmetric, scherer2022analytic, lemay2016racah}.

Here we discuss one corollary of the above result.
Theorem~\ref{thm:classical-simulation-sampling-1D-GLI} indicates that a short-time evolved state $e^{-iHt}\ket{\psi}$ under the geometrically local Hamiltonian is classically simulatable in the \textit{weak} sense.
On the other hand, we can also observe that such a state cannot be simulated in the \textit{strong} sense (i.e., simulation of any marginal probability of the state) as a corollary of this Section.
This is followed by the fact shown in Ref.~\cite{nest2008classical} that strong simulation of HT circuit (first a layer of Hadamard gates which are applied to a subset of the qubits, followed by a round of classical reversible gates, e.g., $\{ X, \mathrm{CNOT}, \mathrm{Toffoli} \}$) is $\#\mathrm{P}$-hard.

\section{Computational complexity of simulating long-time dynamics under the geometrically local matrix}\label{sec:complexity-long-time-GLI}
In this section, we show that the computational complexity of simulating long-time ($\mathrm{poly}(N)$-time) dynamics under the geometrically local matrix is equivalent to that of $\mathrm{poly}(N)$-time and $O(n)$-space quantum computation.
In this work, we only consider the quantum algorithms that run in $\mathrm{poly}(t,n)$-time, and thus the long-time dynamics under the geometrically local matrix with system size of $N=2^n$ is simulated by quantum computers in $O(t)=\mathrm{poly}(N)$-time and $O(\log{(N)})=O(n)$-space.
Thus we prove the opposite direction in next Subsection~\ref{subsec:universality-long-time-2d-GLI}.

It is worth explaining why the dequantization technique for the short-time regime cannot be extended to the long-time regime.
For a geometrically local matrix, the time-evolution operator has an effective light cone, that is, information initially localized at a site affects only sites within a region of radius growing with the evolution time.
Our dequantized algorithm relies on this light-cone property.
In the short-time regime, this radius is at most $\mathrm{polylog}(N)$, and therefore the resulting dequantized algorithm has a complexity of $\mathrm{polylog}(N) = \mathrm{poly}(n)$.
In the long-time regime, however, the radius can be $\mathrm{poly}(N)$, so our dequantized algorithm may require $\mathrm{poly}(N)$-time and $\mathrm{poly}(N)$-space.
Thus this dequantization technique does not yield a classical simulation with resources comparable to $\mathrm{poly}(N)$-time and $\mathrm{poly}(n)$-space quantum algorithms.
Also note that this long-range information propagation in the long-time regime effectively induces long-range interaction, which enables us to encode quantum circuits into the long-time dynamics.

It is also worth discussing the implication of this result of universality.
From this result, we can say that simulating the long-time dynamics of such geometrically local systems has a super-polynomial quantum advantage unless any $\exp(n)$-time and $\mathrm{poly}(n)$-space quantum circuit is simulated by classical computation with the same resources.
This can be understood from two perspectives of time and space complexity.
First, if we are given a classical resource of only $\exp(n)$-time, classical computers can simulate the $\exp(n)$-time and $\mathrm{poly}(n)$-space quantum circuit using $\exp{(n)}$-space by the direct calculation of multiplication of matrices and vectors.
In this case, quantum computers obtain an exponential space advantage over classical computers.
Second, if we are given a classical resource of only $\mathrm{poly}(n)$-space, classical computers can simulate the $\exp(n)$-time and $\mathrm{poly}(n)$-space quantum circuit using $\exp{(n^2)}$-time.
This can be accomplished by the state-of-the-art classical algorithm in Theorem 4.1 in Ref.~\cite{aaronson2016complexity}.
Thus, in this case, quantum computers have a super-polynomial time advantage over classical computers.

\subsection{Universality of long-time dynamics under the 2D geometrically local Hamiltonian}\label{subsec:universality-long-time-2d-GLI}
Now we show that the computational complexity of simulating long-time dynamics is at least as hard as that of $\mathrm{poly}(N)$-time and $O(n)$-space quantum computation. 
To show this, in the following, we show that quantum circuits with $n$-qubit and $\mathrm{poly}(N)$-depth can be simulated by the long-time dynamics of coupled harmonic oscillators with geometrically local interactions.
Our construction is based on Ref.~\cite{babbush2023exponential}.

First, we construct the Hamiltonian as in Sec.~\ref{subsec:1DGLI-simulate-QC-BPgate}:
\begin{equation}\label{eq:FK-quantum-long}
A_{\mathrm{L}} \coloneqq 3I - \sum_{l=1}^L \left(  \ket{l+1}\bra{l} \otimes U_l + \ket{l}\bra{l+1} \otimes U_l^\dag \right),
\end{equation}
where $U = U_{L} \dots U_1$ is a $n$-qubit unitary operator generated by $L=\exp{(n)}=\mathrm{
poly}(N)$ quantum gates belonging to a universal gate set of $\{ H, \mathrm{CNOT}, \mathrm{Toffoli} \}$.
Without loss of generality, we assume that all Hadamard gates $H$ act on the last qubit to impose locality to the Hamiltonian $A_{\mathrm{L}}$.
We add CNOT gate which can generate SWAP gates so that we satisfy the above property of Hadamard gates, while only $\{ H, \mathrm{Toffoli} \}$ is a universal gate set~\cite{shi2002both, aharonov2003simple}.

The Hamiltonian of a classical system of coupled harmonic oscillators $A=\sqrt{M}^{-1} F \sqrt{M}^{-1}$ defined in Sec.~\ref{subsec:classical} has non-positive off-diagonal terms.
However, the Hamiltonian in Eq.~\eqref{eq:FK-quantum-long} has positive terms since the Hadamard gate has a negative term, so this Hamiltonian does not imply that classical system.
To address this issue, we add one ancilla qubit and use the dilated version of the Hadamard gate as used in Ref.~\cite{babbush2023exponential} (see also Refs.~\cite{jordan2010quantum, childs2014bose}):
\begin{equation}\label{eq:dilated-hadamard}
H_{\mathrm{dil}} \coloneqq \frac{1}{\sqrt{2}}
\begin{pmatrix}
   I_2 & I_2 \\
   I_2 & X
\end{pmatrix},
\end{equation}
where $I_2$ is the identity operator on the two-dimensional space and $X$ is the Pauli $X$ operator.
Note that $H_{\mathrm{dil}}$ is not a unitary operator.
However, we can observe that $H_{\mathrm{dil}}$ acts as the unitary operator $H$ within the subspace where the last ancilla qubit is $\ket{-}$ since
\begin{equation}
H_{\mathrm{dil}} \left( I_2\otimes \ket{-} \right) 
= \frac{1}{\sqrt{2}}
\begin{pmatrix}
   I_2 & I_2 \\
   I_2 & X
\end{pmatrix}
\left( I_2\otimes \ket{-} \right) 
= \frac{1}{\sqrt{2}}
\begin{pmatrix}
   I_2 & I_2 \\
   I_2 & - I_2
\end{pmatrix}
\left( I_2\otimes \ket{-} \right) 
= \left( H \otimes I_2 \right)  \left( I_2\otimes \ket{-} \right) .
\end{equation}
This concludes that the new Hamiltonian
\begin{equation}\label{eq:FK-quantum-long-dilated}
A_{\mathrm{L,dil}} \coloneqq 3I - \sum_{l=1}^L \left(  \ket{l+1}\bra{l} \otimes W_l + \ket{l}\bra{l+1} \otimes W_l^\dag \right),
\end{equation}
where $W_l=H_{\mathrm{dil}}$ if $U_l$ is the Hadamard gate and $W_l=U_l\otimes I_2$ otherwise, has the same operation as $A_{\mathrm{L}}$ on the subspace where the last qubit is $\ket{-}$.
Note that $A_{\mathrm{L,dil}}$ has only non-positive off-diagonal terms and is diagonally dominant since the absolute value of the sum of the off-diagonal elements of the fixed column is upper-bounded by $1+2/\sqrt{2}$.
Here, without loss of generality, we assume that our circuit does not have two consecutive Hadamard gates since Hadamard gates act on the same qubit.

Now we have another issue that the classical system which $A_{\mathrm{L,dil}}$ implies has non-local interactions.
As shown in Fig.~\ref{fig:2d_system_quantum_circuit_long}, this non-locality comes from CNOT and Toffoli gates while Hadamard gates can be represented by 3-local interactions.
To solve this, if $W_l$ is CNOT or Toffoli gate, we just decompose $W_l$ into the product of unitaries which can be represented by 2-local interactions.
As in Fig.~\ref{fig:2d_system_quantum_circuit_long_local},  $W_l$ can be decomposed into the product of at most $O(2^{2n})$ unitaries $W_l=V_{l,O(2^{2n})}\dots V_{l,1}$, where each unitary $V_{l,i}$ exchanges only two adjacent computational bases.
Using this decomposition, we obtain the final Hamiltonian:
\begin{equation}\label{eq:FK-quantum-long-dilated-local}
A_{\mathrm{L,local}} \coloneqq 3I - \sum_{m=1}^{L\times O(2^{2n})} \left(  \ket{m+1}\bra{m} \otimes V_m + \ket{m}\bra{m+1} \otimes V_m^\dag \right),
\end{equation}
which is $(3,\mathcal{N}(3))$-geometrically local Hamiltonian in two-dimension.

\begin{figure}
\centerline{
\includegraphics[width=120mm, page=1]{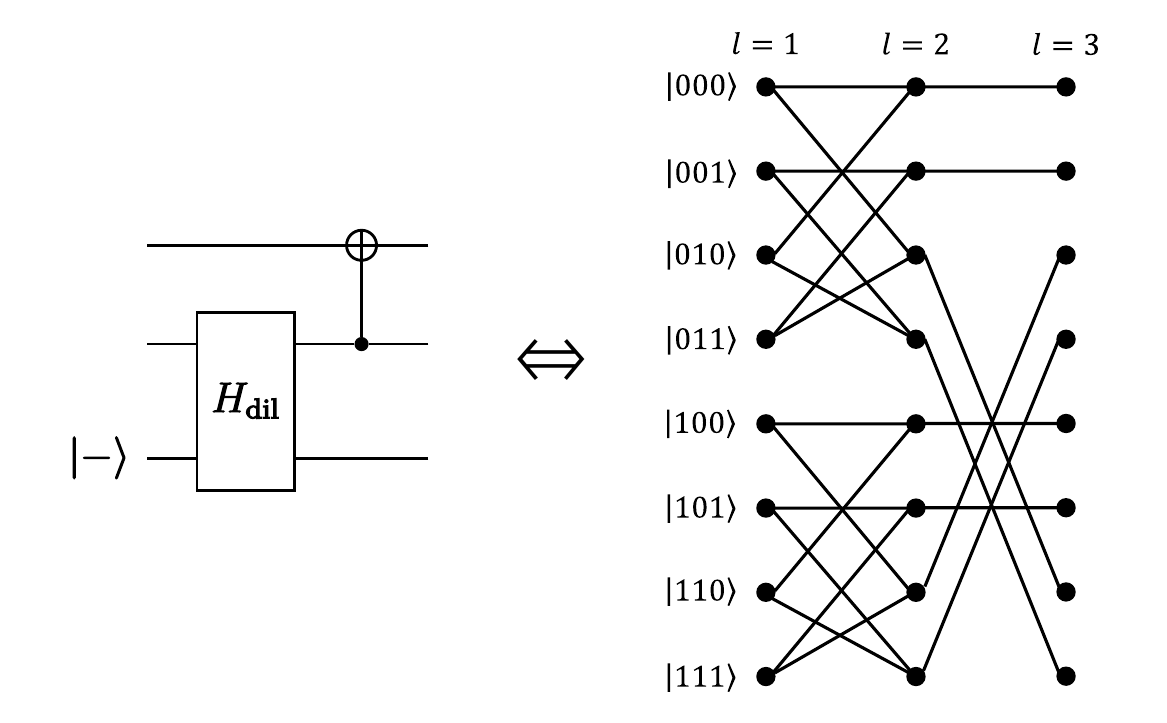}
}
\caption{An example of the underlying two-dimensional system of the reduced Hamiltonian in Eq.~\eqref{eq:FK-quantum-long-dilated}.}
\label{fig:2d_system_quantum_circuit_long}
\end{figure}

\begin{figure}
\centerline{
\includegraphics[width=120mm, page=1]{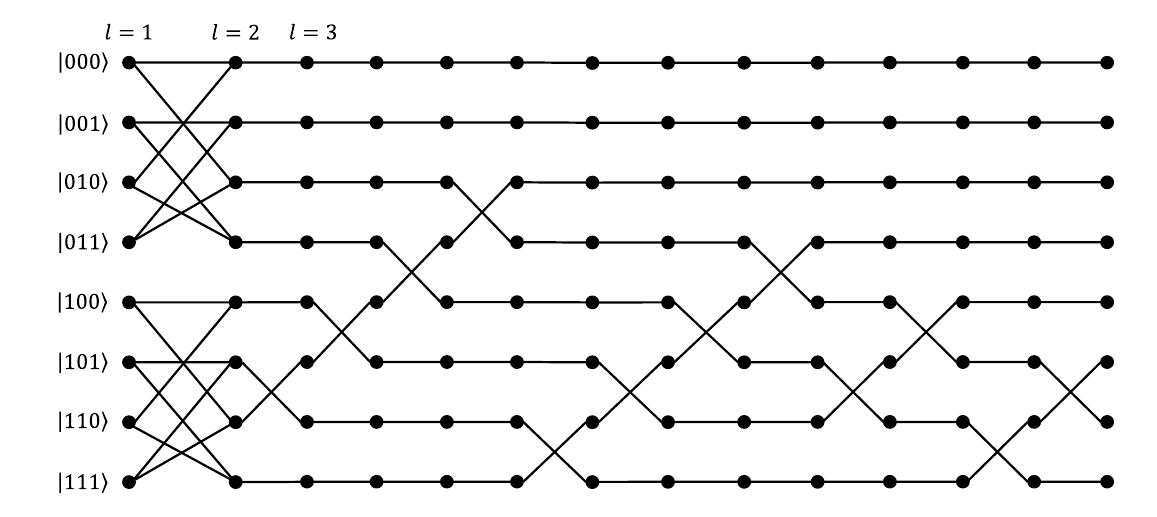}
}
\caption{Two-dimensional system with local interaction which has the same operation as the system with non-local interactions in Fig.~\ref{fig:2d_system_quantum_circuit_long}.}
\label{fig:2d_system_quantum_circuit_long_local}
\end{figure}

Finally, we consider the time evolution of the classical system discussed in Sec.~\ref{subsec:classical}:
\begin{equation}
\dot{y}(t) + i \sqrt{A_{\mathrm{L,local}}} y(t) = e^{i \sqrt{A_{\mathrm{L,local}}} t} \left( \dot{y}(0) + i \sqrt{A_{\mathrm{L,local}}} y(0) \right),
\end{equation}
with $y(0)=0$ and $\dot{y}(0) = \ket{1} \ket{\psi_0}\ket{-}$, and discuss how to obtain the overlap with the desired state for some time $t=O(L^2 2^{4n})$.
The discussion below is almost the same as that in Sec.~\ref{subsec:1DGLI-simulate-QC-BPgate} and Ref.~\cite{babbush2023exponential}.
Now the Hamiltonian $A_{\mathrm{L,local}}$ within the subspace where the last ancilla qubit is $\ket{-}$ can be written as
\begin{equation}\label{eq:FK-quantum-long-dilated-local-dash}
A_{\mathrm{L,local}}' \coloneqq 3I - \sum_{m=1}^{L\times O(2^{2n})} \left(  \ket{m+1}\bra{m} \otimes V_m' + \ket{m}\bra{m+1} \otimes V_m'^\dag \right),
\end{equation}
where $V_m'$ is $H\otimes I_2$ if $V_m=H_{\mathrm{dil}}$ and $V_m'=V_m$ otherwise.
Indeed, $V_m\neq H_{\mathrm{dil}}$ does not preserve this subspace.
However, recalling that $W_l=V_{l,O(2^{2n})}\dots V_{l,1}=U_l\otimes I_2$, we are in this subspace just before $V_m=H_{\mathrm{dil}}$ is applied.
All $V_m'$'s are unitary operators, which enables us to decompose the Hamiltonian $A_{\mathrm{L,local}}'$ into the product of two operators
\begin{equation}
W'\coloneqq \sum_{m=0}^{L\times O(2^{2n})} \ket{m+1}\bra{m+1} \otimes V_m' \dots V_0',
\end{equation}
\begin{equation}
J'\coloneqq 2I - \sum_{m=1}^{L\times O(2^{2n})} \left(  \ket{m+1}\bra{m} + \ket{m}\bra{m+1} \right) \otimes I,
\end{equation}
such that $A_{\mathrm{L,local}}'=W'J'W'^\dag$.
Putting it altogether, the solution to Newton's equation is written as
\begin{align}
\dot{y}(t)
&= \cos{\left( \sqrt{A_{\mathrm{L,local}}} t \right)} \ket{1} \ket{\psi_0} \ket{-}
= \cos{\left( \sqrt{A_{\mathrm{L,local}}'} t \right)} \ket{1} \ket{\psi_0} \ket{-} \\
&= W' \cos{\left( \sqrt{J'} t \right)} W'^\dag \ket{1} \ket{\psi_0} \ket{-}
= \sum_{m=1}^{L\times O(2^{2n})+1} \alpha_m(t) \ket{m} \otimes \left( V_{m-1}' \dots V_0' \ket{\psi_0} \right) \otimes \ket{-} ,
\end{align}
where the coefficients $\alpha_m(t)$ are given by
\begin{equation}
\cos{\left( \sqrt{J'} t \right)} \ket{1} = \sum_{m=1}^{L\times O(2^{2n})+1} \alpha_m(t) \ket{m}.
\end{equation}
According to the discussion in Sec.~\ref{subsec:1DGLI-simulate-QC-BPgate}, there must exist a $t=O(L^2 2^{4n} (n+\log{(L)}))=\mathrm{poly}(N)$ such that $|\alpha_{L\times O(2^{2n})+1}(t)|^2=\Omega (1/(L\times O(2^{2n}))) = \Omega (1/\mathrm{poly}(N))$.
Thus we can simulate the quantum circuit $U_{L}\dots U_0\ket{\psi_0} = V_{L\times O(2^{2n})-1}' \dots V_0' \ket{\psi_0}$ in $\mathrm{poly}(N)$-time by simulating this classical system for all $t\in \{1,\dots ,T=\mathrm{poly}(N)\}$.
This indicates that the $O(n)$-qubit quantum algorithm for long-time dynamics can simulate $\mathrm{poly}(N)$-time and $O(n)$-space quantum computation.

\section{Conclusion}\label{sec:conclusion}
We have studied the computational complexity of simulating classical linear dynamics of system size $N=2^n$ with geometrically local interactions on quantum computers. 
The contributions of this work can be summarized in three main aspects.
First, we dequantize the quantum algorithm for simulating short-time dynamics ($t=\mathrm{polylog}(N)$) of such systems. 
We achieved this by dequantizing the quantum eigenvalue transformation (QEVT) for geometrically local matrices. 
This implies that no exponential quantum advantage exists in this regime, even when both the initial state and the observable are global. 
We further proved that sampling from the time evolved states is also classically tractable. 
Second, we showed that simulating short-time dynamics is at least as hard as $\mathrm{polylog}(N)$-time and $O(n)$-space probabilistic classical computation by embedding the latter into the former.
Third, we extended our analysis to the long-time dynamics and showed that simulating such dynamics is equivalent in computational complexity to $\mathrm{poly}(N)$-time and $O(n)$-space quantum computation. 
This indicates a super-polynomial quantum advantage unless such quantum computations can be classically simulated with the same resources.
Our finding provides a new technical tool to understand the complexity of classical dynamics governed by partial differential equations and clarifies the computational boundary between classical and quantum systems.

Here, we discuss further the implications of our dequantized algorithm.
In the short-time regime, we have fully dequantized the quantum algorithm, and obtained classical algorithm achieves the same computational complexity as the quantum one up to a polynomial overhead.
Moreover, compared to existing classical algorithms such as forward and backward Euler methods, our algorithm requires exponentially smaller time and space complexities.
In the long-time regime, the dequantized algorithm requires $\exp{(n)}$-space.
Thus the $O(n)$-qubit quantum algorithm retains an exponential advantage in space complexity.
Also, our dequantized algorithm offers no particular advantage over existing classical methods in this regime.
Developing \textit{quantum-inspired classical} algorithms that outperform conventional methods for the long-time dynamics, which is commonly encountered in practical applications, remains a significant research challenge.

We now have several interesting future directions about our dequantized algorithm.
One important direction is the extension of our framework to nonlinear partial differential equations, which frequently appear in real-world physical systems, e.g. Navier–Stokes equations.
Understanding whether similar dequantization techniques can be applied to nonlinear dynamics remains a significant open problem.
Another avenue is to explore new applications of our dequantized algorithm. 
While the geometrically local matrices we introduced are motivated by classical physical systems, it is a valuable question whether our dequantization techniques can also be applied to problem settings in computer science or mathematics that are not directly derived from physical models.
Further improvements of the dequantized algorithms themselves in terms of runtime and space complexities may broaden their applicability and enhance their practical utility.
In the long-time case, our algorithm does not have an advantage over the existing classical algorithm as discussed above.
Thus improving this algorithm and showing the optimality may offer its practicality as a quantum-inspired classical algorithm.

We also have notable open questions about our other results.
For long-time dynamics, an important open question is whether stronger classical hardness results can be established.
In Sec.~\ref{subsec:universality-long-time-2d-GLI}, we embed quantum circuits into 2D geometrically local systems.
However, the resulting graph after embedding is not planar, which is not the case for wave equations.
Thus it is essential to show the universality of geometrically local systems on planar graphs or on 2D square lattices.
One potential approach to this problem is using the method established in the context of universal quantum computation on quantum walk via scattering theory~\cite{childs2009universal, blumer2011single, childs2013universal, childs2014bose}.
Extending this method to the system of coupled harmonic oscillators on planar graphs or on 2D square lattices is an interesting future direction.
Finally, more intensive works that investigate polynomial quantum speedups in simulating systems governed by geometrically local matrices are required.
As shown in Ref.~\cite{babbush2021focus}, quantum speedups greater than quartic speedups may have a practical advantage.
Thus we need to analyze how much polynomial speedups can be extracted from this kind of problem.

\begin{acknowledgments}
We thank Kosuke Mitarai for useful discussions and for reading through a draft of this document. 
This work is supported by MEXT Q-LEAP Grant
No. JPMXS0120319794, JST CREST JPMJCR24I3, and JST COI-NEXT No.
JPMJPF2014.
K.S. is supported by JST SPRING Grant No. JPMJSP2138 and the $\Sigma$ Doctoral Futures Research Grant Program from The University of Osaka.
\end{acknowledgments}

\section*{Author contribution}
K.S. and K.F. contributed to the conceptual formulation of this work.
K.S. contributed to the design and analysis of the dequantized algorithm and the complexity analysis of the classical dynamics. 
All authors contributed to discussion of the results and
text writing.
Authors used generative AI tools to refine the text.


\bibliographystyle{quantum}
\bibliography{ref}


\appendix

\clearpage

\section{Other Applications of the dequantized algorithm for the geometrically local matrix}\label{appsec:other-applications}

\subsection{Application to the simulation of wave equations}\label{appsubsec:wave}
The wave equation is written as 
\begin{equation}
\frac{\mathrm{d}^2}{\mathrm{d}t^2} \phi(t,\bm{x}) = c^2 \nabla^2 \phi(t,\bm{x}) ,
\end{equation}
where $c$ is a wave propagation speed.
After the discretization using the central difference scheme, we obtain
\begin{equation}
\frac{\mathrm{d}^2}{\mathrm{d}t^2} \phi(t) = - c^2 \frac{1}{a^2} L \phi(t) ,
\end{equation}
where $L$ is the Laplacian operator and $a$ is a lattice spacing.
Here we note that $L$ is an $(1,\mathcal{N}(1))$-geometrically local Hamiltonian, where $\mathcal{N}(r) = O(r^D)$.
Comparing the above equation to the Newton's equation $\ddot{y}(t) = -A y(t)$, where $A=\sqrt{M}^{-1} F \sqrt{M}^{-1} \geq 0$, we can observe that the wave equation is a special case of classical systems of coupled harmonic oscillators with geometrically local interactions in Sec.~\ref{subsec:classical}.
That is, substituting $m_i=1$ and $\kappa_{ii}=\frac{c^2}{a^2}\sum_j L_{ij}$ for all $i$ and $\kappa_{ij} = -\frac{c^2}{a^2} L_{ij}$ for $i\neq j$ to the Newton's equation, we obtain the wave equation.
Using this substitution, as in Sec.~\ref{subsec:classical}, we can transform the wave equation to the Hamiltonian dynamics as
\begin{equation}
\frac{\mathrm{d}}{\mathrm{d}t} \ket{\Phi(t)} = iH_{\mathrm{wave}} \ket{\Phi(t)},
\end{equation}
where 
\begin{equation}\label{eq:dilated-hamiltonian-wave}
H_{\mathrm{wave}} \coloneqq 
\begin{pmatrix}
   0 & B_{\mathrm{wave}} \\
   B_{\mathrm{wave}}^\dag & 0
\end{pmatrix},
\end{equation}
$B_{\mathrm{wave}}$ is a $N\times N_\mathrm{col}$ matrix such that $B_{\mathrm{wave}} B_{\mathrm{wave}}^\dag = \frac{c^2}{a^2} L$,
\begin{equation}
\ket{\Phi(t)} \coloneqq
\frac{1}{\sqrt{2E}}
\begin{pmatrix}
   \dot{\phi}(t) \\
   i B_{\mathrm{wave}}^\dag \phi(t)
\end{pmatrix},
\end{equation}
and $\sqrt{2E}$ is a normalization factor.
Also we can choose $B_{\mathrm{wave}}$ as
\begin{equation}\label{eq:B-dag-wave}
B_{\mathrm{wave}}^\dag \ket{i} = \sum_{i}\sqrt{\frac{c^2}{a^2}\sum_j L_{ij}} \ket{i}\ket{i} + \sum_{j> i}\sqrt{-\frac{c^2}{a^2} L_{ij}} \ket{i}\ket{j} - \sum_{j< i}\sqrt{-\frac{c^2}{a^2} L_{ij}} \ket{j}\ket{i}.
\end{equation}

The following corollary is the direct consequence of Theorem~\ref{thm:classical-time}, which provides the complexity of the dequantized algorithm for simulating the wave equation.

\begin{cor}
Let $t\in \mathbb{R}$ be an evolution time, $a$ be a lattice spacing and $\varepsilon\in(0,1]$ be a precision.
Suppose that we are given
\begin{itemize}
\item query-access to $\dot{\phi}(0)\in \mathbb{R}^N$,
\item query-access to $\phi(0)\in \mathbb{R}^N$,
\item query-access to  Laplacian $L\in \mathbb{R}^{N\times N}$ which is obtained by discretizing the wave equation,
\item total energy: $E= \frac{1}{2} \sum_{i} \dot{\phi}_i(0)^2 + \frac{1}{2} \frac{c^2}{a^2} \sum_{i} (\sum_j L_{ij}) \phi_i(0)^2 - \frac{1}{2} \frac{c^2}{a^2} \sum_{j> i}L_{ij} (\phi_i(0)-\phi_j(0))^2$,
\item $\zeta$-sampling-and-query-access to $v\in \mathbb{C}^{N+N(N+1)/2}$.
\end{itemize}
Then, for $\zeta\leq \varepsilon/18$, we can calculate
\begin{equation}
v^\dag \cdot \Phi(t)
=
v^\dag \cdot 
\frac{1}{\sqrt{2E}}
\begin{pmatrix}
   \dot{\phi}(t) \\
   i B_{\mathrm{wave}}^\dag \phi(t)
\end{pmatrix},
\end{equation}
classically with a probability at least $1-\delta$ and an additive error $\varepsilon$.
The runtime of the classical algorithm is
\begin{equation}
O\left( \frac{\log{(1/\delta)}}{\varepsilon^2}\left( d_{\mathrm{exp}}^2 \mathcal{N}(d_{\mathrm{exp}} )\mathcal{N}(1) \bm{q}(L) + d_{\mathrm{exp}} \mathcal{N}(d_{\mathrm{exp}} ) \left( \bm{q}(\dot{\phi}(0)) + \bm{q}(\phi(0)) \right) +\bm{sq}(v)\right) \right),
\end{equation}
where $d_{\mathrm{exp}} \coloneqq O( (ct/a) \sqrt{\mathcal{N}(1)} + \log(1/\varepsilon))$, or emphasizing the dependence on $t$, $a$, $\varepsilon$ and $\delta$, given the spatial dimension $D$,
\begin{equation}
\tilde{O}\left( \frac{(t/a)^{2+D}\log{(1/\delta)}}{\varepsilon^2} \right).
\end{equation}
\end{cor}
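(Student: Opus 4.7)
The plan is to reduce the corollary directly to Theorem~\ref{thm:classical-time} via the parameter identification already sketched in the text immediately preceding the corollary, and then simplify the resulting bounds. First, I would verify that setting $m_i = 1$, $\kappa_{ii} = (c^2/a^2)\sum_j L_{ij}$, and $\kappa_{ij} = -(c^2/a^2)L_{ij}$ for $i\neq j$ makes the matrix $A = \sqrt{M}^{-1} F \sqrt{M}^{-1}$ from Section~\ref{subsec:classical} coincide with $(c^2/a^2)L$, so that Newton's equation $\ddot{y}(t) = -Ay(t)$ becomes the discretized wave equation $\ddot{\phi}(t) = -(c^2/a^2)L\phi(t)$. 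Because $L$ is $(1,\mathcal{N}(1))$-geometrically local with $\mathcal{N}(r) = O(r^D)$ and $A$ inherits the same sparsity pattern, the geometric-locality hypothesis of Theorem~\ref{thm:classical-time} is satisfied with $r_0 = 1$, and the dilated state $\Phi(t)$ of the corollary is exactly the $\psi(t)$ of that theorem with $B$ replaced by $B_{\mathrm{wave}}$ as in Eq.~\eqref{eq:B-dag-wave}.

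Next I would translate the oracles. Query access to the mass vector is trivial, $\bm{q}(m)=O(1)$, since all masses equal one. Query access to the spring-constant matrix $K$ is obtained from $\mathcal{Q}(L)$ with only constant overhead, because each off-diagonal entry of $K$ is proportional to the corresponding entry of $L$, and each diagonal entry $\kappa_{ii}$ is a row-sum over the $\mathcal{N}(1)=O(1)$ non-zero off-diagonal entries of row $i$. The initial data $\dot{\phi}(0)$ and $\phi(0)$ play the roles of $\dot{x}(0)$ and $x(0)$. Thus every oracle hypothesis of Theorem~\ref{thm:classical-time} is realized with at most constant overhead on top of the oracles given in the corollary's statement.

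Finally I would substitute the parameter values into the runtime bound. Since the entries of $L$ are $O(1)$ and each row contains $\mathcal{N}(1)=O(1)$ non-zero entries, we have $\kappa_{\max} = O(c^2/a^2)$ and $m_{\min} = 1$, so the polynomial-approximation degree collapses to
\[
d_{\mathrm{exp}} = O\!\left(t\sqrt{\mathcal{N}(1)\kappa_{\max}/m_{\min}} + \log(1/\varepsilon)\right) = O\!\left((ct/a)\sqrt{\mathcal{N}(1)} + \log(1/\varepsilon)\right),
\]
which is exactly the $d_{\mathrm{exp}}$ claimed in the corollary. Dropping the $\bm{q}(m)=O(1)$ contributions and replacing $\bm{q}(K)$ by $\bm{q}(L)$ in the runtime of Theorem~\ref{thm:classical-time} produces the stated expression term-for-term. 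The asymptotic form $\tilde{O}((t/a)^{2+D}\log(1/\delta)/\varepsilon^2)$ then follows from $\mathcal{N}(d_{\mathrm{exp}}) = O(d_{\mathrm{exp}}^D) = \tilde{O}((t/a)^D)$ combined with the $d_{\mathrm{exp}}^2$ prefactor.

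There is essentially no hard step here; the statement is a corollary because the reduction is a direct parameter substitution. The only real bookkeeping point is checking that $\kappa_{\max}$ picks up only the $c^2/a^2$ scaling of the Laplacian stencil and no hidden dependence on the system size $N$, which is immediate from the locality and boundedness of the stencil. Everything else is a mechanical rewrite of the bound in Theorem~\ref{thm:classical-time} under the identification $A \leftrightarrow (c^2/a^2)L$.
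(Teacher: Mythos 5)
Your proposal is correct and matches the paper's approach: the paper itself offers no separate proof, stating only that the corollary is a ``direct consequence'' of Theorem~\ref{thm:classical-time} under the substitution $m_i=1$, $\kappa_{ii}=(c^2/a^2)\sum_j L_{ij}$, $\kappa_{ij}=-(c^2/a^2)L_{ij}$, which is exactly the reduction you carry out. Your parameter bookkeeping ($\bm{q}(m)=O(1)$, $\bm{q}(K)=O(\bm{q}(L))$, $\kappa_{\max}=O(c^2/a^2)$, $m_{\min}=1$) correctly reproduces the stated $d_{\mathrm{exp}}$ and runtime.
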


\subsection{Application to the simulation of advection equations}\label{appsubsec:advection}
The dequantized algorithm for simulating advection equations is much simpler than that discussed above.
We consider the advection equation in $D$-dimension with the constant velocity $\bm{v}$:
\begin{equation}\label{eq:advection}
\frac{\partial u(t, \bm{x})}{\partial t}+\bm{v} \cdot \nabla u(t, \bm{x})=0,
\end{equation}
where $u$ is a scalar field, $t$ is a time and $\bm{x}$ is a spatial coordinate.
As in Ref.~\cite{sato2024hamiltonian}, we can rewrite Eq.~\eqref{eq:advection} as
\begin{equation}\label{eq:advection2}
\frac{\partial u(t, \bm{x})}{\partial t}=-i(-i\bm{v} \cdot \nabla) u(t, \bm{x}),
\end{equation}
and then discretize the field $u$ and the Hamiltonian $-i\bm{v} \cdot \nabla$ as
\begin{equation}\label{eq:advection-field}
\ket{u(t)} \coloneqq \sum_{j_1=1}^{N}\dots \sum_{j_D=1}^{N} u(t,x_{j_1},\dots ,x_{j_D})\ket{j_1}\dots \ket{j_D},
\end{equation}
\begin{equation}\label{eq:advection-ham}
H= -iv_1 D^{\pm}\otimes I^{\otimes D-1} -iv_2 I\otimes D^{\pm} \otimes I^{\otimes D-2} - \dots -iv_D I^{\otimes D-1} \otimes D^\pm,
\end{equation}
where $N=2^n$, $(D^\pm \sum_{j=1}^{N}u(x_{j})\ket{j})_i = \frac{(u(x_{i+1})-u(x_{i-1}))\ket{i}}{2a}$ and $a$ is the lattice spacing.
It is worth noting that our dequantized algorithm can be applied to the case of spatially varying parameters and various boundary conditions as long as $(-i\bm{v} \cdot \nabla)$ is self-adjoint.
Now we set each site on $D$-dimensional cubic lattice with lattice spacing $1$ (: constant), which must be distinguished with that for discretization $a$.
Then we can observe that the Hamiltonian $H$ in Eq.~\eqref{eq:advection-ham} is a $(1,\mathcal{N}(1))$-geometrically local matrix, where $\mathcal{N}(r) \coloneqq \max_{i=\{1,\dots,N\}} \left| \left\{ j | d(i,j) \leq r \right\} \right| = O(r^D)$.

As a result of Theorem~\ref{thm:dequantize-qevt}, we obtain the following statement.
\begin{cor}\label{cor:deqantized-alg-advection}
Let $t\in \mathbb{R}$ be an evolution time, $D=O(1)$ be a dimension, $\varepsilon\in(0,1]$ be a precision and $v_{\mathrm{max}}=\max_{i\in\{1,\dots,D\}}|v_i|$ be the maximum of velocities.
Suppose that we are given
\begin{itemize}
\item query-access to $u(0) \coloneqq \sum_{j_1=1}^{N}\dots \sum_{j_D=1}^{N} u(0,x_{j_1},\dots ,x_{j_D})e_{j_1}\otimes \dots \otimes e_{j_D} \in\mathbb{R}^{N^D}$ such that $\|u(0)\|\leq 1$.
\item query-access to $H\in \mathbb{R}^{N^D\times N^D}$ which is defined in Eq.~\eqref{eq:advection-ham},
\item $\zeta$-sampling-and-query-access to $v\in \mathbb{R}^{N^D}$ such that $\|v\|\leq 1$.
\end{itemize}
Then, for $\zeta\leq \varepsilon/18$, we can calculate $v^\dag u(t)$ classically with a probability at least $1-\delta$ and an additive error $\varepsilon$.
The runtime of the classical algorithm is 
\begin{equation}
O\left( \frac{\log{(1/\delta)}}{\varepsilon^2}\left( d_{\mathrm{exp}}^2 \mathcal{N}(d_{\mathrm{exp}})\mathcal{N}(1) \bm{q}(H) + d_{\mathrm{exp}} \mathcal{N}(d_{\mathrm{exp}}) \bm{q}(u(0)) +\bm{sq}(v)\right) \right) ,
\end{equation}
where $d_{\mathrm{exp}}\coloneqq O(Dv_{\mathrm{max}}t/a + \log(1/\varepsilon))$, or emphasizing the dependence on $t$, $v_{\mathrm{max}}$, $a$, $\varepsilon$ and $\delta$, given the spatial dimension $D$, 
\begin{equation}
\tilde{O}\left( \frac{(Dv_{\mathrm{max}}t/a)^{2+D} \log{(1/\delta)}}{\varepsilon^2} \right) .
\end{equation}
\end{cor}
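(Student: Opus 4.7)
The plan is to reduce this directly to Theorem~\ref{thm:dequantize-qevt} applied to the specific matrix $H$ from Eq.~\eqref{eq:advection-ham}. First I would verify the geometric locality of $H$: each operator $D^\pm$ acts as a shift by $\pm 1$ along a single coordinate axis on the $D$-dimensional cubic lattice, so in Eq.~\eqref{eq:advection-ham} every row of $H$ has at most $2D$ nonzero off-diagonal entries, each located at a site within lattice distance $1$ of the row index. Hence $H$ is $(1,\mathcal{N}(1))$-geometrically local with $\mathcal{N}(r)=O(r^D)$, as already noted in the paragraph preceding the statement.

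Next I would build a polynomial approximation of $e^{-iHt}$. Since $-i\bm{v}\cdot\nabla$ is assumed self-adjoint, $H$ is Hermitian, and the spectral norm is bounded by $\|H\|\leq D v_{\mathrm{max}}/a$ (using that each $D^\pm/a$ has norm at most $1/a$ and the triangle inequality over the $D$ terms). Standard polynomial approximation results for the exponential on a bounded interval (cited in the paper for the coupled-oscillator application) then give a polynomial $P_{\mathrm{exp}}$ of degree $d_{\mathrm{exp}}=O(Dv_{\mathrm{max}}t/a+\log(1/\varepsilon))$ with $\|P_{\mathrm{exp}}(H)-e^{-iHt}\|\leq\varepsilon/2$ and $|P_{\mathrm{exp}}(x)|\leq 1$ on $[-\|H\|,\|H\|]$.

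Combining these two observations, I would invoke Theorem~\ref{thm:dequantize-qevt} for the instance $\mathsf{EVT}\textsf{-}\mathsf{GL}(1,\mathcal{N}(1),\varepsilon/2,\zeta)$ with matrix $H$, vectors $w=u(0)$ and $v$, and polynomial $P_{\mathrm{exp}}$. This yields a classical estimate $\hat{z}$ with $|\hat{z}-v^\dag P_{\mathrm{exp}}(H)u(0)|\leq\varepsilon/2$, and the triangle inequality together with $\|v\|,\|u(0)\|\leq 1$ gives
\begin{equation}
|\hat{z}-v^\dag u(t)|\leq \varepsilon/2 + \|v\|\,\|P_{\mathrm{exp}}(H)-e^{-iHt}\|\,\|u(0)\|\leq\varepsilon.
\end{equation}
The stated runtime follows by substituting $d=d_{\mathrm{exp}}$, $r_0=1$, and $\mathcal{N}(d_{\mathrm{exp}})=O(d_{\mathrm{exp}}^D)$ into the complexity bound of Theorem~\ref{thm:dequantize-qevt}; keeping only the leading dependence on $t$, $v_{\mathrm{max}}$, $a$, $\varepsilon$ and $\delta$ for constant $D$ reproduces the $\tilde{O}((Dv_{\mathrm{max}}t/a)^{2+D}\log(1/\delta)/\varepsilon^2)$ form.

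There is no substantive obstacle here: the statement is essentially a specialization of Theorem~\ref{thm:dequantize-qevt}. The only mildly delicate point is matching the normalization and error budget (splitting $\varepsilon$ between polynomial approximation and sampling estimation, and confirming $\zeta\leq\varepsilon/18$ suffices when Theorem~\ref{thm:dequantize-qevt}'s requirement $\zeta\leq(\varepsilon/2)/9=\varepsilon/18$ is applied to the rescaled precision), which is purely bookkeeping.
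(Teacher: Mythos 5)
Your proposal is correct and follows essentially the same route as the paper's proof: bound $\|H\|\leq Dv_{\mathrm{max}}/a$ (the paper does this via $2D$-sparsity times the max entry $v_{\mathrm{max}}/2a$, you via the triangle inequality over the $D$ shift operators — same bound either way), take the degree-$O(Dv_{\mathrm{max}}t/a+\log(1/\varepsilon))$ polynomial approximation of $e^{-itx}$ with an $\varepsilon/2$ error budget, observe that $H$ is $(1,\mathcal{N}(1))$-geometrically local with $\mathcal{N}(r)=O(r^D)$, and invoke Theorem~\ref{thm:dequantize-qevt} with $\zeta\leq(\varepsilon/2)/9=\varepsilon/18$. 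The only cosmetic slip is the phrase ``each $D^{\pm}/a$ has norm at most $1/a$'' — the factor $1/(2a)$ is already built into the definition of $D^{\pm}$, so the intended claim is $\|D^{\pm}\|\leq 1/a$ — but this does not affect the final bound.
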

\begin{proof}
For some $\alpha\in \mathbb{R}$, let $P_{\mathrm{exp}}\in\mathbb{C}[x]$ be the polynomial approximation of exponential function $e^{-itx}$ on the domain $[-\alpha,\alpha]$ such that
\begin{equation}
\left\| P_{\mathrm{exp}}(x) - e^{-itx} \right\|_{[-\alpha,\alpha]} \leq \varepsilon, \quad \left\| P_{\mathrm{exp}}(x) \right\|_{[-\alpha,\alpha]} \leq 1.
\end{equation}
According to Ref.~\cite{gilyen2019quantum}, there exists such an polynomial approximation of degree $O(\alpha t + \log(1/\varepsilon))$.

Recalling that $v^\dag u(t)=v^\dag e^{-iHt}u(0)$, we only have to construct a $\varepsilon/2$-approximation $P_{\mathrm{exp}}$ of $e^{-itx}$ on the domain $[-\|H\|,\|H\|]$ and estimate $v^\dag P_{\mathrm{exp}}(H) u(0)$ with an additive error $\varepsilon/2$.
$H$ is $2D$-sparse and its max norm $\|H\|_{\mathrm{max}}=v_{\mathrm{max}}/2a$.
This implies $\|H\|\leq 2D\cdot \|H\|_{\mathrm{max}} = Dv_{\mathrm{max}}/a$ and we obtain $P_{\mathrm{exp}}$ of degree $d_{\mathrm{exp}}\coloneqq O(Dv_{\mathrm{max}}t/a + \log(1/\varepsilon))$.
Also note that $\| P_{\mathrm{exp}}(H) \| \leq 1$.

As discussed earlier, $H$ is a $(1,\mathcal{N}(1))$-geometrically local matrix and $\mathcal{N}(r) = O(r^D)$.
Combined with Theorem~\ref{thm:dequantize-qevt}, we obtain a classical algorithm which calculate $v^\dag P_{\mathrm{exp}}(H) u(0)$ with an additive error $\varepsilon/2$.
For $\zeta\leq \varepsilon/18$, the runtime of this classical algorithm is given by
\begin{equation}
O\left( \frac{\log{(1/\delta)}}{\varepsilon^2}\left( d_{\mathrm{exp}}^2 \mathcal{N}(d_{\mathrm{exp}})\mathcal{N}(1) \bm{q}(H) + d_{\mathrm{exp}} \mathcal{N}(d_{\mathrm{exp}}) \bm{q}(u(0)) +\bm{sq}(v)\right) \right) .
\end{equation}
Emphasizing the dependence on $t$, $v_{\mathrm{max}}$, $a$, $\varepsilon$ and $\delta$, we can write the complexity as
\begin{equation}
\tilde{O}\left( \frac{(Dv_{\mathrm{max}}t/a)^{2+D} \log{(1/\delta)}}{\varepsilon^2} \right) .
\end{equation}
\end{proof}

\subsection{Application to the simulation of single particle Schr\"{o}dinger equations}\label{appsubsec:schrodinger}
The single particle Schr\"{o}dinger equation is given by
\begin{equation}
i \frac{\mathrm{d}}{\mathrm{d}t}\psi(t, \bm{x}) = (-\nabla^2 + V(\bm{x})) \psi(t, \bm{x}),
\end{equation}
where $V: \mathbb{R}^D \rightarrow \mathbb{R}$ is a potential function and $D$ is the spatial dimension.
After the discretization using the central difference scheme, we get
\begin{equation}
i \frac{\mathrm{d}}{\mathrm{d}t}\ket{\psi(t)} = (\frac{1}{a^2}L + V) \ket{\psi(t)},
\end{equation}
where $L$ is the Laplacian, $a$ is a lattice spacing and $V$ is a diagonal matrix.
It is easy to see that $H_{\mathrm{SE}}\coloneqq \frac{1}{a^2}L + V$ is an $(1,\mathcal{N}(1))$-geometrically local matrix, where $\mathcal{N}(r)=O(r^D)$.
Its norm is upper-bounded by $\| H_{\mathrm{SE}} \| \leq \frac{1}{a^2} \|L\| + \|V\| \leq \frac{1}{a^2}(2D+1)2D + V_{\mathrm{max}} = O(\frac{D^2}{a^2} + V_{\mathrm{max}})$, where $V_{\mathrm{max}}\coloneqq \|V\|_{\mathrm{max}}$.

Then we obtain the dequantized algorithm for simulating the above dynamics as follows.

\begin{cor}\label{cor:dequantized-alg-schrodinger}
Let $t\in \mathbb{R}$ be an evolution time, $a$ be a lattice spacing and $\varepsilon\in(0,1]$ be a precision.
Suppose that we are given
\begin{itemize}
\item query-access to $\ket{\psi(0)}\in \mathbb{C}^N$,
\item query-access to $V \in \mathbb{R}^N$, where $V_i$ is the potential of $i$-th site for all $i\in\{1,\dots,N\}$,
\item query-access to Laplacian $L\in \mathbb{R}^{N\times N}$,
\item $\zeta$-sampling-and-query-access to $\ket{v}\in \mathbb{C}^{N}$.
\end{itemize}
Then, for $\zeta\leq \varepsilon/18$, we can calculate $\braket{v|\psi(t)}$ classically with a probability at least $1-\delta$ and an additive error $\varepsilon$.
The runtime of the classical algorithm is
\begin{equation}
O\left( \frac{\log{(1/\delta)}}{\varepsilon^2}\left( d_{\mathrm{exp}}^2 \mathcal{N}(d_{\mathrm{exp}} )\mathcal{N}(1) \left( \bm{q}(L) + \bm{q}(V) \right) + d_{\mathrm{exp}} \mathcal{N}(d_{\mathrm{exp}} ) \bm{q}(\psi(0))  +\bm{sq}(v)\right) \right),
\end{equation}
where $d_{\mathrm{exp}} \coloneqq O( t (D^2/a^2 + V_{\mathrm{max}}) + \log(1/\varepsilon))$, or emphasizing the dependence on $t$, $a$, $V_{\mathrm{max}}$, $\varepsilon$ and $\delta$, given the spatial dimension $D$,
\begin{equation}
\tilde{O}\left( \frac{(t(D^2/a^2+V_{\mathrm{max}}))^{2+D}\log{(1/\delta)}}{\varepsilon^2} \right).
\end{equation}
\end{cor}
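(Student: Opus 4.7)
The plan is to apply Theorem~\ref{thm:dequantize-qevt} to the discretized Schr\"{o}dinger Hamiltonian $H_{\mathrm{SE}} \coloneqq L/a^2 + V$, following the same template as the proof of Corollary~\ref{cor:deqantized-alg-advection}. First, I will bound the operator norm by
\begin{equation}
\|H_{\mathrm{SE}}\| \leq \frac{1}{a^2}\|L\| + \|V\| = O\!\left(\frac{D^2}{a^2} + V_{\mathrm{max}}\right),
\end{equation}
where the bound on $\|L\|$ follows from the $O(D)$-sparsity of the discretized Laplacian together with $\|L\|_{\mathrm{max}} = O(D)$, and $\|V\| = V_{\mathrm{max}}$ because $V$ is diagonal. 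Invoking the standard polynomial approximation of the exponential from Ref.~\cite{gilyen2019quantum} on the interval $[-\|H_{\mathrm{SE}}\|, \|H_{\mathrm{SE}}\|]$ then yields a polynomial $P_{\mathrm{exp}} \in \mathbb{C}[x]$ of degree $d_{\mathrm{exp}} = O(t(D^2/a^2 + V_{\mathrm{max}}) + \log(1/\varepsilon))$ satisfying $\|P_{\mathrm{exp}}(x) - e^{-itx}\|_{[-\|H_{\mathrm{SE}}\|, \|H_{\mathrm{SE}}\|]} \leq \varepsilon/2$ and $\|P_{\mathrm{exp}}(x)\|_{[-\|H_{\mathrm{SE}}\|, \|H_{\mathrm{SE}}\|]} \leq 1$.

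Second, I will promote the given query-accesses to $L$ and $V$ into a query-access to $H_{\mathrm{SE}}$. Because $V$ is diagonal and $L$ is sparse, the off-diagonal sparsity structure of $H_{\mathrm{SE}}$ coincides with that of $L$, and any entry of $H_{\mathrm{SE}}$ can be evaluated at cost $\bm{q}(H_{\mathrm{SE}}) = O(\bm{q}(L) + \bm{q}(V))$. As observed in the excerpt, $H_{\mathrm{SE}}$ is $(1, \mathcal{N}(1))$-geometrically local with $\mathcal{N}(r) = O(r^D)$, so all structural hypotheses of Theorem~\ref{thm:dequantize-qevt} hold with $r_0 = 1$. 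Since $\ket{\psi(0)}$ is a quantum state $\|\ket{\psi(0)}\| = 1$, and together with $\|v\| \leq 1$ and $|P_{\mathrm{exp}}(x)| \leq 1$ on the relevant interval, the normalization conditions of the theorem are satisfied.

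Theorem~\ref{thm:dequantize-qevt} then delivers a classical estimator for $\bra{v}P_{\mathrm{exp}}(H_{\mathrm{SE}})\ket{\psi(0)}$ with additive error $\varepsilon/2$ and success probability at least $1-\delta$, valid as long as $\zeta \leq (\varepsilon/2)/9 = \varepsilon/18$. The triangle inequality combined with $\|P_{\mathrm{exp}}(H_{\mathrm{SE}}) - e^{-iH_{\mathrm{SE}}t}\| \leq \varepsilon/2$ promotes this into an estimate of $\braket{v|\psi(t)} = \bra{v}e^{-iH_{\mathrm{SE}}t}\ket{\psi(0)}$ with overall additive error $\varepsilon$. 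Substituting $r_0 = 1$, $\mathcal{N}(d_{\mathrm{exp}}) = O(d_{\mathrm{exp}}^D)$, and $\bm{q}(H_{\mathrm{SE}}) = O(\bm{q}(L) + \bm{q}(V))$ into the runtime bound of Theorem~\ref{thm:dequantize-qevt} reproduces the stated complexity, and isolating the dependence on $t$, $a$, $V_{\mathrm{max}}$, $\varepsilon$, $\delta$ yields the $\tilde{O}((t(D^2/a^2 + V_{\mathrm{max}}))^{2+D}\log(1/\delta)/\varepsilon^2)$ simplification.

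The proof is essentially mechanical; the only quantitative subtlety is the bound $\|H_{\mathrm{SE}}\| = O(D^2/a^2 + V_{\mathrm{max}})$, which controls the polynomial degree $d_{\mathrm{exp}}$ and hence the final scaling. Because the potential is diagonal and the discretized Laplacian has clean sparsity-times-max-entry bounds, this estimate is immediate, and all remaining steps are straightforward specializations of the advection-equation argument to an additive potential term. No genuinely new obstacle arises.
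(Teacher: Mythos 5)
Your proposal is correct and follows essentially the same route as the paper's proof: bound $\|H_{\mathrm{SE}}\|=O(D^2/a^2+V_{\mathrm{max}})$ via sparsity and max-entry, take the degree-$d_{\mathrm{exp}}$ polynomial approximation of the exponential with an $\varepsilon/2$--$\varepsilon/2$ error split, and invoke Theorem~\ref{thm:dequantize-qevt} with $r_0=1$ and $\mathcal{N}(r)=O(r^D)$. The only addition is your explicit remark on assembling $\bm{q}(H_{\mathrm{SE}})=O(\bm{q}(L)+\bm{q}(V))$, which the paper leaves implicit but is encoded in its stated runtime.
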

\begin{proof}
As in the proof of Corollary~\ref{cor:deqantized-alg-advection}, for the Hamiltonian $H_{\mathrm{SE}}$, there exists a polynomial $P_{\mathrm{exp}}\in \mathbb{C}[x]$ of degree $O(\|H_{\mathrm{SE}}\| t + \log{(1/\varepsilon)})$ such that $\| P_{\mathrm{exp}}(H_{\mathrm{SE}}) - e^{-iH_{\mathrm{SE}}t} \| \leq \varepsilon$ and $\| P_{\mathrm{exp}}(H_{\mathrm{SE}}) \| \leq 1$.
Now $\| H_{\mathrm{SE}} \|$ is upper-bounded by $O(\frac{D^2}{a^2} + V_{\mathrm{max}})$, which provides the $\varepsilon/2$-approximation $P_{\mathrm{exp}}(H_{\mathrm{SE}})$ of $e^{-iH_{\mathrm{SE}}t}$ with degree $d_{\mathrm{exp}} \coloneqq O(t (\frac{D^2}{a^2} + V_{\mathrm{max}}) + \log{(1/\varepsilon)})$.

All that is left is estimating $\braket{v|P_{\mathrm{exp}}(H_{\mathrm{SE}})|\psi(0)}$ with an additive error $\varepsilon/2$.
Theorem~\ref{thm:dequantize-qevt} provides a classical algorithm which estimates $\braket{v|P_{\mathrm{exp}}(H_{\mathrm{SE}})|\psi(0)}$ with a probability at least $1-\delta$.
For $\zeta\leq \varepsilon/18$, the runtime of this classical algorithm is given by
\begin{equation}
O\left( \frac{\log{(1/\delta)}}{\varepsilon^2}\left( d_{\mathrm{exp}}^2 \mathcal{N}(d_{\mathrm{exp}})\mathcal{N}(1) \left(\bm{q}(L) + \bm{q}(V)\right) + d_{\mathrm{exp}} \mathcal{N}(d_{\mathrm{exp}}) \bm{q}(\psi(0)) +\bm{sq}(v)\right) \right) .
\end{equation}
Emphasizing the dependence on $t$, $V_{\mathrm{max}}$, $a$, $\varepsilon$ and $\delta$, given the spatial dimension $D$, we can write the complexity as
\begin{equation}
\tilde{O}\left( \frac{(t(D^2/a^2+V_{\mathrm{max}}))^{2+D}\log{(1/\delta)}}{\varepsilon^2} \right).
\end{equation}
\end{proof}

Here we discuss this result from the computational complexity aspect.
Although the system is a single-particle system in a fixed spatial dimension, the discretized wave function over $N$ spatial sites is an $N$-dimensional vector and nontrivial spatial structure can be introduced through the potential and boundary conditions.
Therefore, it is still meaningful to ask whether the quantum solver using $n=\log{N}$ qubits yields a super-polynomial advantage in $n$.
Corollary~\ref{cor:dequantized-alg-schrodinger} indicates that simulating the single particle Schr\"{o}dinger equation (also multi-particle with no interactions between particles) does not yield super-polynomial quantum speedups.
This fact is contrast to the recent result~\cite{zheng2024computational} showing that simulating the multi-particle Schr\"{o}dinger equation with two-body interactions is $\mathrm{BQP}$-complete, which implies an exponential quantum speedup in terms of the particle number.
Thus our work provides a complementary result that the interactions between two particles are essential for exponential speedups in simulating the  Schr\"{o}dinger operators.

\end{document}